\documentclass[11pt]{article}

\usepackage[margin=1in]{geometry}
\usepackage[T1]{fontenc}
\usepackage{lmodern}
\usepackage{microtype}
\usepackage{amsmath,amssymb,amsthm,mathtools}
\usepackage{booktabs}
\usepackage{array}
\usepackage{enumitem}
\usepackage[dvipsnames]{xcolor}

\usepackage{float}
\usepackage[hypertexnames=false]{hyperref}
\usepackage{algorithm}
\usepackage{algpseudocode}

\usepackage[capitalise,noabbrev]{cleveref}

\makeatletter
\AddToHook{env/algorithmic/begin}{\def\@currentcounter{ALG@line}}
\makeatother
\crefalias{ALG@line}{line}

\usepackage{thm-restate}

\usepackage[dvipsnames]{xcolor}
\usepackage{xparse}
\usepackage{tikz}
\usetikzlibrary{calc,hobby}
\usepackage{subcaption}

\definecolor{linkblue}{RGB}{25,80,135}
\hypersetup{colorlinks=true,linkcolor=linkblue,citecolor=linkblue,urlcolor=linkblue}
\setlist{nosep,leftmargin=*}
\allowdisplaybreaks

\newcommand{\paperTitle}{Symmetric Submodular Minimization from Comparisons}

\newtheorem{theorem}{Theorem}[section]
\newtheorem{lemma}[theorem]{Lemma}
\newtheorem{corollary}[theorem]{Corollary}

\theoremstyle{definition}
\newtheorem{definition}[theorem]{Definition}
\newtheorem{remark}[theorem]{Remark}

\numberwithin{equation}{section}



\newcommand{\cutval}{\operatorname{cut}}
\newcommand{\Var}{\operatorname{Var}}

\DeclareMathOperator*{\argmin}{arg\,min}
\DeclareMathOperator*{\argmax}{arg\,max}
\DeclareMathOperator{\poly}{poly}

\newcommand{\weightedMinCutUpperBound}{\softO\!\left(n^2 + \min\!\left\{m\B,\,  n\B^2\right\} \right)}

\newcommand{\weightedEdgeWeights}{\{0,1,\ldots,\B\}} % For the bounded-weight min-cut algorithms, to consistently include 0 or not

\NewDocumentCommand{\ex}{o m}{%
  \IfNoValueTF{#1} % Check if the optional argument is provided
    {\ensuremath{\mathbb{E}\left[ #2 \right]}} % Case 1: No optional argument, E[b]
    {\ensuremath{\mathbb{E}_{#1}\left[ #2 \right]}} % Case 2: With optional argument, E_a[b]
} 
\NewDocumentCommand{\pr}{o m}{%
  \IfNoValueTF{#1} % Check if the optional argument is provided
    {\ensuremath{\Pr\left[ #2 \right]}} % Case 1: No optional argument, Pr[b]
    {\ensuremath{\Pr_{#1}\left[ #2 \right]}} % Case 2: With optional argument, Pr_a[b]
}

\NewDocumentCommand{\var}{o m}{%
  \IfNoValueTF{#1} % Check if the optional argument is provided
    {\ensuremath{\Var\left[ #2 \right]}} % Case 1: No optional argument, Var[b]
    {\ensuremath{\Var_{#1}\left[ #2 \right]}} % Case 2: With optional argument, Var_a[b]
}

\newcommand{\indicator}[1]{\mathbf{1}\!\left[#1\right]}
\newcommand{\innerProd}[2]{\langle #1,#2\rangle} 
\newcommand{\setSeparator}{\,:\,}

\newcommand{\R}{\mathbb{R}} % Real numbers
\newcommand{\F}{\mathbb{F}} % Finite field
\newcommand{\Z}{\mathbb{Z}} % Integers 
\newcommand{\N}{\mathbb{N}} % Natural numbers {0,1,..}

\newcommand{\cI}{\mathcal{I}} % Downward-closed set
\newcommand{\downwardClosedDomain}{2^V\setminus\{V\}}

\newcommand{\hashFamily}{\mathcal{F}}
\newcommand{\cP}{\mathcal{P}}

\newcommand{\softO}{\widetilde{O}}

\newcommand{\cutGraph}[1]{\cutval_{#1}}
\newcommand{\cutGraphSet}[2]{\cutGraph{#1}(#2)}

\newcommand{\degreeGraph}[2]{\deg_{#2}(#1)}

\newcommand{\minimumGraphDegree}[1]{\min_v \{\degreeGraph{v}{#1}\}}

\newcommand{\totalWeight}{W}
\newcommand{\B}{B}

\newcommand{\edgesBetween}[3]{E_{#1}(#2,\,#3)}

\newcommand{\edgeWeightE}[1]{w_{#1}}

\newcommand{\edgeWeight}[2]{\edgeWeightE{\{#1,#2\}}}

\newcommand{\weightVertexSet}[2]{w(#1,#2)}

\newcommand{\supp}{H}
\newcommand{\suppEdges}{E_\supp}
\newcommand{\uEdges}{E_u}
\newcommand{\supportVerticesOne}{X_1}
\newcommand{\supportVerticesTwo}{X_2}
\newcommand{\missingWeightSum}[1]{\sum_{v\in V\setminus\{u\} \,:\,  \{u,v\}\notin #1 }
 \edgeWeight{u}{v}} % Missing weight in support graph

\newcommand{\expanderConstant}{c}
\newcommand{\expanderDemand}{D}
\newcommand{\contractionSet}[1]{K_{#1}}
\newcommand{\uncontractedVertices}{\bar K}
\newcommand{\crossingWeight}{W_{\mathrm{cross}}}

\newcommand{\hashQuerySet}[1]{X_{#1}}
\newcommand{\sampledIncidentWeight}{Y_h} %{\weightVertexSet{v}{\hashQuerySet{h}}} 
\newcommand{\externalIncidentWeight}{\weightVertexSet{v}{V\setminus V_i}}%externalIncidentWeight\coloneqq\weightVertexSet{v}{V\setminus V_i}} d_{\mathrm{out}}

\newcommand{\partition}{\Pi}
\newcommand{\contractedG}{G_{\partition}}
\newcommand{\partitionIndex}{t}

\newcommand{\gslDimension}{N}
\newcommand{\gslWeight}{\beta}  
\newcommand{\gslDistinctValues}{R}
\newcommand{\gslSubspace}{\mathcal{L}}
\newcommand{\cutIndicator}[1]{\chi_{#1}} % Indicator vector for a cut

\newcommand{\sfmFunction}{f}
\newcommand{\mcSize}{n}
\newcommand{\mcRoot}{r}
\newcommand{\mcPrefix}[1]{V_{#1}}
\newcommand{\sfmRootedFunction}{g}
\newcommand{\sfmActiveBlocks}{\mathcal{A}}
\newcommand{\sfmRootBlock}{R}
\newcommand{\sfmBlockCount}{k}
\newcommand{\sfmContractedFunction}{\sfmFunction_{\sfmRootBlock,\sfmActiveBlocks}}

\newcommand{\sfmMergedBlock}{B_{\sfmBlockCount-1}\cup B_{\sfmBlockCount}}

\newcommand{\sfmDummyElement}{\rho}
\newcommand{\sfmBestSet}{T}
\newcommand{\sfmPenultimateBlock}{B_{\sfmBlockCount-1}}
\newcommand{\sfmLastBlock}{B_{\sfmBlockCount}}
\newcommand{\sfmDummyV}{V_{\sfmDummyElement}}

\newcommand{\unif}{\operatorname{Unif}}
\newcommand{\uniformSample}[2]{#1 \sim \unif(#2)}

\newcommand{\lbTargetCut}{S}
\newcommand{\lbOrderings}{\mathcal{H}}
\newcommand{\lbPathEdge}[2]{e^{#1}_{#2}}
\newcommand{\lbPrefix}[2]{P^{#1}_{#2}}
\newcommand{\lbBoundary}[2]{\partial_{#1}(#2)}

\newcommand{\lbSpacing}{d} % Spacing between perturbed edges
\newcommand{\lbPerturbIndices}{I} 
\newcommand{\lbPerturbIndicesLower}{I_1} 
\newcommand{\lbPerturbIndicesUpper}{I_2} 
\newcommand{\lbSharedPrefixEvent}{\mathcal{E}_{\lbOtherOrdering}}

\newcommand{\lbWeightScale}{K} 
\newcommand{\lbBaseGraph}[1]{G_{#1}}
\newcommand{\lbPerturbedGraph}[2]{G_{#1}^{#2}}
\newcommand{\lbEdgeWeight}[2]{w_{#1, #2}} % Edge weights for the lower bound graphs

\newcommand{\lbDistribution}{\mathcal{D}} 
\newcommand{\lbOtherOrdering}{\tau'}
\newcommand{\alg}{\mathcal{A}}

\newcommand{\lbBoundarySet}{\mathcal{B}}
\newcommand{\lbOtherBoundarySet}{\mathcal{C}}
\newcommand{\lbBoundaryVector}[1]{z(#1)} % Signed indicator for set #1
\newcommand{\lbBoundaryVectorEntry}[2]{\lbBoundaryVector{#1}_{#2}}% The #2-nd entry of the signed indicator for set #1  

\newcommand{\lbPairMatrix}[1]{Z^{#1}}
\newcommand{\lbPairMatrixEntry}[3]{\lbPairMatrix{#1}_{#2 #3}}

\newcommand{\lbSingletonMatrix}{H} 
\newcommand{\lbSingletonMatrixEntry}[2]{\lbSingletonMatrix_{#1 #2}} 

\newcommand{\lbProjectionMatrix}{\Gamma}
\newcommand{\lbProjectionCoeff}[2]{\gamma_{#1 #2}}

\newcommand{\lbQuadraticForm}[2]{\psi_{#1}(#2)} % #1 is the matrix, #2 is a vector 

\newcommand{\lbCertificateCount}{N} 
\newcommand{\lbEncodingMatrix}{M} 
\newcommand{\lbInitialEncodingMatrix}{\lbEncodingMatrix'} 
\newcommand{\lbEncodingScale}{\Lambda}

\newcommand{\lbCertificate}[1]{\Phi_{#1}}
\newcommand{\lbCertificatePair}[1]{\bigl(\lbPairMatrix{#1},\lbQuadraticForm{\lbSingletonMatrix}{#1}\bigr)}

\newcommand{\lbBoundaryCount}{t_{rs}}

\newcommand{\maxNorm}[1]{\lVert #1\rVert_{\max}}  
 
\newcommand{\lbWeightRange}{n^{O(1)}2^{O(\ell)}}  

\title{\vspace{-2.5em}%
  \makebox[\textwidth][c]{%
    \resizebox{1.10\textwidth}{!}{\bfseries \paperTitle}
  }}

\author{James Fox\thanks{EPFL, Switzerland. Part of this work was carried out while affiliated with the Max Planck Institute for Informatics, Germany. Email: james.fox@epfl.ch.} 
\and 
David P. Woodruff\thanks{Carnegie Mellon University and Google Research, USA. Email: dwoodruf@cs.cmu.edu.} 
}
\date{October 7, 2026}

\begin{document}

\maketitle
\vspace{2em}

\begin{abstract} 
Given value-oracle access to a symmetric submodular function $f:2^V\to\mathbb{R}$ with $|V|=n$, a nontrivial minimizer can be found using $O(n^3)$ value queries \cite{Queyranne}. We study the weaker \emph{comparison model}, in which a query on $S,T\subseteq V$ reveals only whether $f(S)$ is smaller than, equal to, or larger than $f(T)$. 

We give a deterministic polynomial-time algorithm that finds a nontrivial minimizer of any symmetric submodular function using $O(n^3)$ comparisons, matching the best-known deterministic value-oracle bound despite not knowing the function values. More generally, the same $O(n^3)$-comparison bound holds for minimization over the nonempty members of any downward-closed family. Our algorithm combines the minimum-capacity ordering recently introduced by Iwata and Konno \cite{IwataKonno} with the contraction framework of Goemans and Soto \cite{GoemansSoto}.

Applying this result to weighted graph cut functions resolves the main open question of Cohen-Addad et al. \cite{comparison}, who gave an $\widetilde{O}(n^3)$-comparison algorithm that runs in exponential time and asked whether a weighted minimum cut can be found in polynomial time using comparisons. For graphs with $m$ edges of integer weight at most $B$, we also give a deterministic polynomial-time algorithm that finds a minimum cut using
\[
\widetilde{O}\!\left(n^2+\min\!\left\{mB,\,nB^2\right\}\right)
\]
comparisons, improving on the $O(n^3)$ bound when $B$ is small. Finally, we show that every randomized algorithm that outputs a minimum cut with probability at least $2/3$ makes $\Omega(n \log n)$ expected comparisons in the worst case. Under the stronger assumption that all edge weights are polynomially bounded integers, we obtain an $\Omega(n \log \log n)$ expected comparison lower bound. These bounds contrast with the value-oracle model, where no $\omega(n)$ lower bound is known even for deterministic algorithms.
\end{abstract}

\clearpage
% \setcounter{page}{1} % Reset page numbering to avoid counting cover page

%%%%%%%%%%%%%%%
% Contents
%%%%%%%%%%%%%%%

%%%%%%%%%%%%%%%%%%%
%%%%%%%%%%%%%%%%%%%
\section{Introduction}\label{sec:intro}
%%%%%%%%%%%%%%%%%%%
Suppose that we want to minimize a set function~${f:2^V\to\R}$ on an~$n$-element ground set~$V$. 
In combinatorial optimization, access to~$f$ is often modeled by a \emph{value oracle}: for any query~${S\subseteq V}$, the oracle returns the exact value~${f(S)}$.    

Cohen-Addad et al.~\cite{comparison} studied a weaker \emph{comparison-oracle} model. 
In this model, a comparison query on~${S,T\subseteq V}$ only returns whether~${f(S)}$ is smaller than, equal to, or larger than~${f(T)}$, but does not reveal either function value. 
One problem they considered is finding a minimum cut in a graph when only comparisons between cut values are available.

Let~${G=(V,w)}$ be an undirected graph with nonnegative edge weights.\footnote{We assume throughout that graphs are undirected and loopless and have nonnegative edge weights.}
The \emph{cut function} of~$G$ is
\[
\cutGraphSet{G}{S}
\;\coloneqq\;
\sum_{\{u,v\}\in \binom{V}{2}}
\edgeWeight{u}{v}\,\indicator{|S\cap\{u,v\}|=1}. 
\]
Equivalently,~${\cutGraphSet{G}{S}}$ is the total weight of the edges crossing between~$S$ and~${V\setminus S}$. A \emph{minimum~cut} is any set that minimizes~${\cutGraph{G}}$ over the nonempty proper subsets of~$V$. 
The graph is~\emph{unweighted} if~${\edgeWeight{u}{v}\in\{0,1\}}$ for all~${u,v\in V}$.

For unweighted graphs, Cohen-Addad et al. gave a randomized polynomial-time algorithm that finds a minimum cut with high probability using~${\softO(n)}$ comparison queries~\mbox{\cite[Theorem~1.6]{comparison}}.\footnote{The section and theorem numbering differ between the STOC and arXiv versions of~\cite{comparison}. We refer to~\href{https://arxiv.org/abs/2511.15142v2}{arXiv~v2} throughout.} 
For weighted graphs, however, they obtain polynomial-time results only in special cases; their general optimization algorithm makes~${\softO(n^3)}$ queries but requires exponential time~\cite[Theorems~1.1 and~1.8]{comparison}.
They asked whether a weighted minimum cut can be found in polynomial time using polynomially many queries~\cite[Section~5]{comparison}; this is also the first open problem in a talk at the STOC~2026 workshop ``Algorithmic Frontiers of Graph and Hypergraph Problems via Global Queries.''\footnote{``Combinatorial Optimization Using Comparison Oracles,'' \url{https://sagnikm.github.io/workshop-stoc.html}, accessed October 7, 2026.}

We answer this question using submodular function minimization~(SFM). 
A function~${f:2^V\to\R}$ is \emph{submodular} if 
\[
f(S)+f(T)
\;\ge\;
f(S\cup T)+f(S\cap T)
\qquad\text{for all }S,T\subseteq V,
\]
and is \emph{symmetric} if~${f(S)=f(V\setminus S)}$ for every~${S\subseteq V}$.
If~$f$ is symmetric and submodular, then~${f(\emptyset)=f(V)\le f(S)}$ for every~${S\subseteq V}$, so~$\emptyset$ and~$V$ trivially minimize~$f$. 
In \emph{symmetric~SFM}, we therefore seek a \emph{nontrivial minimizer} in
\[
\argmin_{S \,\in\, 2^V\setminus\{\emptyset,V\}}\, f(S).
\] 
Note that we require a minimizer but not the corresponding minimum value, as comparison queries do not reveal the function values. 
The cut function of an undirected graph with nonnegative edge weights is always symmetric and submodular, so the weighted minimum-cut problem is a special case of symmetric SFM.   

As in~\cite{comparison}, we forbid comparison queries involving the trivial minimizers~$\emptyset$ and~$V$, and all queries made by our algorithms respect this convention. 
Otherwise, these sets could serve as reference thresholds. 
For example, if it were the case that~${f(S)=f(\emptyset)}$ for some~${S\notin\{\emptyset,V\}}$, then comparing~$S$ with~$\emptyset$ would immediately reveal that~$S$ is a minimizer.

\subsection{Results}\label{sec:results}
 
Our first result is a deterministic comparison algorithm for symmetric~SFM.

\begin{restatable}[Symmetric submodular minimization using comparisons]{theorem}{SymmetricSFM}\label{thm:intro-sfm} 
Let~${f:2^V\to\R}$ be symmetric and submodular, and let~${n=|V| \geq 2}$. 
There is a deterministic polynomial-time algorithm that, given comparison-oracle access to~$f$, finds a nontrivial minimizer of~$f$ using~${O(n^3)}$ comparisons.

\end{restatable}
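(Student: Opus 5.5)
The plan is to adapt Queyranne's framework, but replace the maximum-adjacency / legal ordering, which needs value queries, with the minimum-capacity ordering of Iwata and Konno, which can be built from comparisons alone. The algorithm keeps a partition of $V$ into blocks. This is the contraction framework of Goemans and Soto. Merging blocks is allowed because we retain the best "pendant pair" cut found so far.

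\textbf{Ordering step.} Given the current partition $\partition$ into blocks $B_1,\dots,B_\sfmBlockCount$ and a root block $\sfmRootBlock$, we greedily build an ordering. At step $i$, with prefix $\mcPrefix{i-1}$ (a union of blocks containing the root), we choose the next block $B$ to minimize $f(\mcPrefix{i-1}\cup B)$. A standard variant compares $f(\mcPrefix{i-1}\cup B)-f(B)$, but that needs values. The point of using the Iwata--Konno ordering is that the selection criterion is a pure argmin of $f$ over candidate sets. It is therefore computable with at most $\sfmBlockCount$ comparisons per step, giving $O(\sfmBlockCount^2)$ per ordering.

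\textbf{Pendant pair.} The key lemma, which I take from Iwata--Konno and re-derive for contracted functions, is that the last block $\sfmLastBlock$ of this ordering forms a pendant pair with the penultimate block $\sfmPenultimateBlock$. Concretely, $\sfmLastBlock$ (viewed as a set of the contracted function) minimizes $f$ over all unions of blocks that separate $\sfmPenultimateBlock$ from $\sfmLastBlock$. The proof is the usual induction over prefixes using submodularity and symmetry. We show $f(\mcPrefix{i-1}\cup B)\ge$ something via uncrossing: any separating union $S$ is compared against the prefix sets, and we use $f(S)+f(\mcPrefix{j})\ge f(S\cup \mcPrefix{j})+f(S\cap \mcPrefix{j})$. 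The minimality choices in the greedy order then bound the terms.

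\textbf{Main loop.} We record the candidate $\sfmLastBlock$ and merge $\sfmMergedBlock$. After $n-1$ iterations, every pair of original elements has been separated-then-merged, so some recorded candidate equals a global nontrivial minimizer. The standard Queyranne argument applies: take the first iteration at which a minimizer's two sides get merged; the candidate recorded there is optimal. Finally, we select the best among the $n-1$ candidates using $n-2$ comparisons. None of these are trivial sets, since each candidate is a nonempty union of blocks missing the root. Every query in the ordering is also a proper nonempty set, because prefixes contain the root but not the candidate. The total is $\sum_{\sfmBlockCount} O(\sfmBlockCount^2)=O(n^3)$ comparisons, and the running time is polynomial and deterministic.

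\textbf{Main obstacle.} I expect the hard part to be the pendant-pair lemma for the minimum-capacity ordering. Classical proofs (Queyranne, Nagamochi--Ibaraki) rely on additive "key" values $f(\mcPrefix{i-1}\cup B)-f(B)$, whereas here the greedy criterion compares raw values $f(\mcPrefix{i-1}\cup B)$. My first attempt is to prove by induction on $i$ the following claim: for every $j<i$ and every set $S$ that separates the $i$-th block from $\mcPrefix{j}$ in the appropriate way, $f(\mcPrefix{i})\le f(S)$ holds up to the right uncrossing terms. For the base step and inductive step I would use symmetry to swap $S$ with its complement so that it contains the root. If that fails, I would instead cite Iwata--Konno's ordering lemma directly. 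The remaining delicacy is that it must hold for the contracted function $\sfmContractedFunction$, which is again symmetric submodular, so the lemma transfers verbatim.
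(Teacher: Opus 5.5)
\textbf{The pendant-pair lemma you rely on is false for MC orderings.} Take the cut function $f$ of the graph on $\{1,2,3,4\}$ with $w_{13}=2$, $w_{23}=1$, $w_{24}=2$, $w_{34}=5$, and all other weights $0$. Then $f(\{1\})=2$, $f(\{2\})=3$, $f(\{3\})=8$, $f(\{4\})=7$, $f(\{1,2\})=5$, $f(\{1,3\})=6$ and $f(\{1,4\})=9$. Every greedy choice is strict, so $(1,2,3,4)$ is the unique MC ordering, and it is also the $1$-rooted one. Yet $\{1,3\}$ separates $v_3=3$ from $v_4=4$, and $f(\{1,3\})=6<7=f(\{4\})$. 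What Iwata and Konno actually prove is the averaged bound $f(S)+f(V\setminus S)\ge f(V_{n-1})+\min_{v\in S}f(\{v\})$. For symmetric $f$ this reads $2f(S)\ge f(\{v_n\})+\min_{v\in S}f(\{v\})$; here it is $12\ge 7+2$. This is not a pendant-pair statement. So neither your planned induction nor ``citing Iwata--Konno directly'' can deliver your lemma, and the step ``the candidate recorded there is optimal'' is unjustified as written.

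The algorithm itself survives, because you only need the inequality at a global minimizer $S^*$, say with value $\lambda$.

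Consider the first iteration in which $S^*$ separates $B_{k-1}$ from $B_k$. At that point $S^*$ is still a union of current blocks. Each such block is a nonempty proper subset of $V$, so $\min_{A\subseteq S^*}f(A)\ge\lambda$. Applying the averaged inequality to the contracted function gives $2\lambda\ge f(B_k)+\lambda$. Hence $f(B_k)\le\lambda$, and the recorded $B_k$ is optimal.

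If you keep a genuine root element $r$, two more points need care:
\begin{itemize}
\item The rooted ordering is not an MC ordering of $f$, since its first element need not minimize $f$ over singletons. So Iwata--Konno's lemma does not apply verbatim. You need the modular shift $f(X)+c\,\mathbf{1}[r\in X]$ from the paper's rooted corollary, which gives the bound only for sets $S$ avoiding $r$.
\item The set $V\setminus\{r\}$ is never split, so $\{r\}$ must be added as a candidate separately.
\end{itemize}
In the unconstrained case it is simpler to drop the root, since the first step only compares singletons, which are nontrivial anyway.

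For comparison, the paper records the merged blocks $B_{k-1}\cup B_k$ and the initial singletons as candidates. Against the best stored set $T$ it uses $2f(S)\ge f(B_k)+\min_A f(A)\ge 2f(T)$, which shows that any strictly better set remains a union of active blocks.
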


This~${O(n^3)}$ comparison bound matches the best-known deterministic query complexity of symmetric SFM in the stronger value-oracle model~\cite{Queyranne,Nagamochi10,IwataKonno}. 
Since each comparison can be simulated by two value-oracle queries, any asymptotic improvement to the bound in~\Cref{thm:intro-sfm} would give the same improvement in the value-oracle model. 

We obtain~\Cref{thm:intro-sfm} as a corollary of the following more general result. 
A family~${\cI\subseteq2^V}$ is \emph{downward closed} if, for all~${S\in\cI}$, every subset of~$S$ is also in~$\cI$. 
Such families are also called \emph{hereditary}~\cite{GoemansSoto}. 
Examples of downward-closed families include the independent sets of a matroid and the family~${\{S\subseteq V : |S|\leq k\}}$ of sets subject to a cardinality constraint. 
Since~$\emptyset$ and~$V$ trivially minimize every symmetric submodular function, we assume that~${V\notin \cI}$ and minimize the function over~${\cI\setminus\{\emptyset\}}$. 
Such a minimizer is \emph{inclusionwise minimal} if none of its strict nonempty subsets is also a minimizer.

\begin{restatable}[Symmetric~SFM over a downward-closed family]{theorem}{DownwardClosedSFM}
\label{thm:intro-constrained}
Let~${f:2^V\to\R}$ be symmetric and submodular, let~${n=|V| \geq 2}$, and let~${\cI\subseteq\downwardClosedDomain}$ be downward closed with~${|\cI|\geq 2}$.  
There is a deterministic polynomial-time algorithm that, given a comparison oracle for~$f$ and a membership oracle for~$\cI$, finds an inclusionwise-minimal set in
\[ 
 \argmin_{S \,\in\, \cI\setminus\{\emptyset\}} \, f(S)
\]
using~${O(n^3)}$ comparisons and~${O(n)}$ membership queries.
\end{restatable}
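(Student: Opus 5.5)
The plan is to combine the contraction framework of Goemans and Soto~\cite{GoemansSoto} with a pendant-pair subroutine that needs only comparisons, namely the minimum-capacity ordering of Iwata and Konno~\cite{IwataKonno}.

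\textbf{State of the algorithm.} Throughout, we maintain a partition of~$V$ into blocks $B_1,\ldots,B_{\sfmBlockCount}$. We also maintain a distinguished (possibly empty) root block~$\sfmRootBlock$, which collects every element lying in some merged block that is known not to belong to~$\cI$. The remaining blocks form the set~$\sfmActiveBlocks$ of active blocks, and each of them is a member of~$\cI$. On the contracted ground set we work with the function $\sfmContractedFunction$, in which each block is a single element. When $\sfmRootBlock$ is nonempty, we add a dummy element~$\sfmDummyElement$ representing it and fix~$\sfmDummyElement$ as the first element of every ordering. Contraction preserves symmetry and submodularity. Because~$\cI$ is downward closed and the blocks inside $\sfmRootBlock$ failed the membership test, no member of~$\cI$ ever needs to contain~$\sfmRootBlock$. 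We also maintain a candidate set~$\sfmBestSet$, which is the best set found so far.

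\textbf{The pendant-pair step.} This is the step we expect to be the main obstacle. Queyranne's maximum-adjacency ordering uses differences of function values, which a comparison oracle does not reveal. The Iwata--Konno ordering avoids this: it greedily appends the element~$v$ minimizing $f(W\cup\{v\})$, where $W$ is the current prefix. Each such choice is an argmin over at most~$n$ sets and costs $O(n)$ comparisons, so one ordering costs $O(n^2)$ comparisons.

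What we must verify is the Iwata--Konno property in the form required by Goemans--Soto. If $u$ is the last element and $t$ the penultimate one, then $f(\{u\})$ should equal the minimum of $f(S)$ over all sets~$S$ with $u\in S$ and $t\notin S$. The proof should go by induction along the ordering, using submodularity and symmetry to compare $f(\mcPrefix{i})$ with the values of the sets separating consecutive elements. It must also cover orderings that start at the dummy~$\sfmDummyElement$; for these, the statement becomes that separating sets avoid the root block. Every set queried is a union of a nonempty proper prefix with one more element. We must check that none of these is~$\emptyset$ or~$V$, handling the final step separately or replacing it by its complement, which does not change the value by symmetry.

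\textbf{Each phase.} Given the pendant pair $(t,u)$, we proceed as follows.
\begin{enumerate}
\item The block~$B_u$ is a nonempty member of~$\cI$ and a candidate. We compare it with~$\sfmBestSet$ and replace~$\sfmBestSet$ only on a strict improvement, at the cost of one comparison.
\item Every minimizer that separates $t$ from~$u$ is then dominated by~$B_u$. It therefore suffices to search among sets that do not separate them, so we merge the blocks of~$t$ and~$u$.
\item We make one membership query on the merged block. If it is not in~$\cI$, we move it into~$\sfmRootBlock$.
\end{enumerate}
Each phase reduces the number of elements by one, so there are at most $n-1$ phases. This gives $O(n^3)$ comparisons and $O(n)$ membership queries in total, all in polynomial time.

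\textbf{Correctness and inclusionwise minimality.} Correctness follows the Goemans--Soto invariant: at every moment, some minimizer over $\cI\setminus\{\emptyset\}$ either equals a recorded candidate or is a union of active blocks. Inclusionwise minimality requires more care. We will use the tie-breaking rule that~$\sfmBestSet$ is replaced only on a strict decrease. We then argue that blocks only grow, so earlier candidates are never supersets of later minimizers that are not unions of their parts. We also argue that any minimal minimizer~$S^*$ is never separated from its own elements before it appears as a block. If this argument turns out to be delicate, the fallback is a post-processing pass. For each element~$x$ of the output, we rerun the algorithm on the family $\{S\in\cI : x\notin S\}$ restricted to subsets of the current set. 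That would cost extra comparisons, though, so we would first try to make the tie-breaking argument work.
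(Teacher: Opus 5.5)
\textbf{The key lemma you plan to prove is false.} The final pair $(t,u)$ of an MC ordering need not be a pendant pair, so $f(\{u\})=\min\{f(S): u\in S,\ t\notin S\}$ cannot be established, by induction or otherwise. Take the cut function $f$ of the graph on $\{r,a,b,c\}$ with $w_{ra}=w_{ab}=1$, $w_{ac}=w_{rb}=10$, $w_{bc}=20$ and $w_{rc}=0$. The singleton values for $r,a,b,c$ are $11,12,31,30$. Next, $f(\{r,a\})=21<f(\{r,b\})=22<f(\{r,c\})=41$, and $f(\{r,a,b\})=30<f(\{r,a,c\})=31$. So $(r,a,b,c)$ is the unique MC ordering, and also the $r$-rooted one, with $t=b$ and $u=c$. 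Yet $\{a,c\}$ avoids $r$, contains $u$ but not $t$, and has $f(\{a,c\})=22<30=f(\{c\})$.

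What Iwata and Konno actually prove (\Cref{lemma:mc}) is, for symmetric $f$, only the averaged bound $2f(S)\ge f(\{v_n\})+\min_{v\in S}f(\{v\})$; in the example this reads $44\ge 30+12$. Hence a separated set is dominated by $B_u$ \emph{or by one of its own blocks}, and your step~2 has no valid justification as written. The rooted case also needs a real argument. The paper obtains it by adding the modular term $c\,\mathbf{1}[r\in X]$ with $c=\min_{v\ne r}f(\{v\})-f(\{r\})$. This turns the rooted ordering into a genuine MC ordering of the shifted function, and symmetry then gives \Cref{cor:rooted-mc}.

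\textbf{The missing invariant.} For the weak inequality to suffice you need an extra invariant, which is what the paper adds. Initialize $T$ as the best feasible singleton. In each iteration, compare the newly created feasible block $B_{k-1}\cup B_k$ with $T$, rather than $B_u$, and replace $T$ only on strict improvement. This keeps $f(T)\le f(A)$ for every active block $A$. So for any union $S$ of active blocks that separates $B_{k-1}$ from $B_k$, you get $2f(S)\ge f(B_k)+\min_A f(A)\ge 2f(T)$, where the minimum is over the active blocks $A\subseteq S$. It follows that every $S\in\mathcal{I}\setminus\{\emptyset\}$ with $f(S)<f(T)$ remains a union of active blocks.

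\textbf{Minimality.} Inclusionwise minimality then follows with no extra queries. At the last strict update $T'=B_{k-1}\cup B_k$, any nonempty $X\subsetneq T'$ with $f(X)=f(T')$ is strictly better than the previous $T$. It is therefore a union of active blocks, hence equal to $B_{k-1}$ or $B_k$, contradicting $f(T)\le f(X)$.

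Your $B_u$-comparison scheme can also be repaired. By downward closure, every block of a feasible minimizer $S^*$ is feasible, so $\min_{A\subseteq S^*}f(A)\ge f(S^*)$, and the averaged bound gives $f(B_u)\le f(S^*)$. But you would have to state this, redo the minimality argument along these lines, and make sure the last remaining block is compared. Your fallback for minimality does not meet the stated bound: rerunning the algorithm once per output element costs up to $n$ further runs of $O(n^3)$ comparisons each, i.e.\ $O(n^4)$.
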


Again, this~${O(n^3)}$ bound matches the best-known bound in the value-oracle model~\cite{GoemansSoto}. 
Setting~${\cI \coloneqq 2^V\setminus\{V\}}$ in~\Cref{thm:intro-constrained} then gives the symmetric SFM algorithm in~\Cref{thm:intro-sfm}.   

Since graph cut functions are symmetric and submodular,~\Cref{thm:intro-sfm} also allows us to find a global minimum cut of a graph with arbitrary nonnegative edge weights using~${O(n^3)}$ cut comparisons and polynomial time. 
When the edge weights are integers in~${\weightedEdgeWeights}$, we obtain the following parameterized bounds, which improve on the~${O(n^3)}$ bound for small~$\B$.
We assume that~${\B\in\N_{\geq 1}}$ throughout and use the notation~${\softO(\cdot)}$ to suppress polylogarithmic factors in~$n$ and~$\B$.  
All running-time bounds are polynomial in~$n$ and the size of the input parameters.

\begin{restatable}[Weighted minimum cut with comparisons]{theorem}{BoundedWeightMinCutUpperBound}
\label{thm:bounded-weight-min-cut-upper-bound} 
Let~$G$ be an undirected graph on~${n\geq 2}$ vertices with integer edge weights in~${\weightedEdgeWeights}$, and let~$m$ be the number of positive-weight edges.
There is a deterministic polynomial-time algorithm that finds a minimum cut of~$G$ using
\[
    \weightedMinCutUpperBound
\]
comparison queries. 
For the~$n\B^2$ term, the algorithm needs to know~$\B$.
\end{restatable}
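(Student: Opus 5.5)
The plan is to learn enough of the cut function, through comparisons alone, to simulate a value oracle on the queries that an exact minimum-cut computation needs. The idea is to build a \emph{calibrated ruler}: a family of sets $S_0,S_1,\ldots$ such that consecutive sets have cut values differing by exactly one unit. Comparisons against the ruler then return exact values relative to a baseline. Since all weights are integers in $\weightedEdgeWeights$, every cut value is an integer in $[0,\totalWeight]$, where $\totalWeight\le mB$ and every degree is at most $nB$. Once a ruler covers a range of length $L$, binary search against it reads off any cut value in that range, relative to the baseline, using $O(\log L)=\softO(1)$ comparisons.

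\textbf{Step 1 (ordering the local cuts).} I would first sort the $n$ singleton cuts $\cutGraphSet{G}{\{v\}}=\deg(v)$ and the $O(n^2)$ pair cuts
\[
\cutGraphSet{G}{\{u,v\}}=\deg(u)+\deg(v)-2\edgeWeight{u}{v}
\]
into a single chain, recording ties. Sorting costs $O(n^2\log n)$ comparisons, which fits the $\softO(n^2)$ term. From this chain I would extract the minimum-degree vertex $v_0$ and, for each $u$, the relative order of the quantities $\deg(v)-2\edgeWeight{u}{v}$. The minimum cut value satisfies $\lambda\le\deg(v_0)$, so only values up to $\deg(v_0)$ ever need to be read exactly.

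\textbf{Step 2 (building the ruler), which I expect to be the main obstacle.} Comparisons never expose a unit difference directly, so a unit gap must be forced by the integer structure. The ruler would consist of nested sets $S_0\subset S_1\subset\cdots$, each obtained by adding one vertex. Such a step changes the cut value by an integer of absolute value at most $\deg$ of the added vertex. For the $mB$ bound, I would grow the chain in an order chosen so that every integer in $[0,\deg(v_0)]$ is attained (up to its relative position) by some set. Missing values would be detected by strict comparisons between neighbouring sets, and filling each detected gap should be charged to one unit of edge weight, for $O(\totalWeight)=O(mB)$ comparisons in total. For the $nB^2$ bound, I would first use the sorted pair cuts from Step~1 to identify, for each vertex, a bounded set of at most $B$ candidate weight classes. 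Calibrating each vertex's incident weights would then cost $O(B^2)$ comparisons per vertex, which is presumably why this variant needs to know $B$. This calibration is the step I am least sure how to execute cleanly. My first attempt would be to exploit the fact that, for fixed $u$, the values $\cutGraphSet{G}{\{u,v\}}-\deg(v)$ lie in an arithmetic progression with step $2$ and at most $B+1$ terms.

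\textbf{Step 3 (exact minimum cut).} With the ruler in place, I would recover the edge weights up to a common additive normalisation: each $\edgeWeight{u}{v}$ is read from $\deg(u)+\deg(v)-\cutGraphSet{G}{\{u,v\}}$, restricted to the $m$ positive pairs or to a sparse Nagamochi--Ibaraki certificate of total weight $O(n\lambda)\le O(nB\cdot\text{const})$, at $\softO(1)$ comparisons each. I would then run any deterministic exact minimum-cut algorithm offline, with no further queries. Finally, I would take the minimum of the $\softO(n^2)$ ordering cost and the cheaper of the two calibration routines, which gives $\weightedMinCutUpperBound$.
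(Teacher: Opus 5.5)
Your plan has a genuine gap at its core. Steps 2 and 3 assume that comparisons can certify exact differences of cut values, and they cannot. A comparison returns only a sign, so a strict answer $\cutGraphSet{G}{S_j}<\cutGraphSet{G}{S_{j+1}}$ never tells you whether the gap is $1$ or $7$. In particular, ``missing values'' cannot be detected this way.

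Here is a concrete obstruction for $\B\ge 2$. Take any $G$ with weights at most $\lfloor \B/2\rfloor$, and let $2G$ double every weight. Both graphs are valid inputs, and they answer every comparison identically. Yet every cut value of $2G$ is even. So any certificate of a unit step that your procedure produces on $G$ would be produced verbatim on $2G$, where it is false.

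The same issue breaks Step~3. As the paper notes, a weighted graph generally cannot be recovered from comparisons even up to scaling, even after comparing all pairs of cuts. So you cannot read off $\edgeWeight{u}{v}$ and then run an offline algorithm. There are further problems:
\begin{itemize}
\item A Nagamochi--Ibaraki certificate has weight $O(n\lambda)$, which can be $\Theta(n^2\B)$ because $\lambda$ can be $\Theta(n\B)$.
\item Building such a certificate requires knowing the weights.
\item Your $n\B^2$ calibration is left unspecified.
\end{itemize}

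The missing idea is that exact values are unnecessary. You only need the \emph{order} of the values queried by an exact comparison algorithm, namely the $O(n^3)$-comparison MC-ordering/contraction algorithm for symmetric SFM. The savings come from reusing equalities. Since $\cutGraphSet{G}{S}=\innerProd{\gslWeight}{\cutIndicator{S}}$, every equality answer gives a linear constraint $\cutIndicator{S}-\cutIndicator{T}\in\gslWeight^\perp$.

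The paper keeps one representative per discovered value, together with the subspace spanned by these constraints. For each newly queried cut, it either deduces equality with a representative for free, or binary searches among the representatives. Each binary search either adds a new value (at most $\totalWeight+1$ times) or raises the subspace dimension (at most $\binom n2$ times). This gives $O((n^2+\totalWeight)\log(\totalWeight+2))$ comparisons, with $\totalWeight\le m\B$.

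For the $n\B^2$ term, the paper never calibrates weights either. It proceeds as follows:
\begin{itemize}
\item It uses the one-comparison test of whether $\weightVertexSet{u}{S}<\degreeGraph{u}{G}/2$ to recover a support subgraph that misses weight less than $\B$ at each vertex.
\item It takes an expander decomposition with expansion $\Omega(\B)$; this is the only place $\B$ is needed.
\item Within each block, it contracts the vertices whose incident weight is mostly internal, classifying them with a pairwise-independent hash family.
\item It shows that nonsingleton minimum cuts survive and that the contracted graph has total weight $\softO(n\B^2)$.
\item It then applies the total-weight bound to the contracted graph.
\end{itemize}
Finally, the two algorithms are run in parallel.
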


A reader used to the cut-query model may find the~${n^2}$ term in~\Cref{thm:bounded-weight-min-cut-upper-bound} suspicious. 
In the \emph{cut-query} model, each query~$S$ returns the exact value~${\cutGraphSet{G}{S}}$, which allows us to reconstruct the entire graph in~${O(n^2)}$ queries using the identity 
\[
 \edgeWeight{u}{v}
 \;=\;
 \tfrac12\bigl(
 \cutGraphSet{G}{\{u\}}+\cutGraphSet{G}{\{v\}}-\cutGraphSet{G}{\{u,v\}}
 \bigr).
\]
In the comparison model, we can reconstruct an \emph{unweighted} graph using~${O(n^2)}$ cut comparisons for any~${n\geq4}$~\cite[Theorem~1.7]{comparison}. 
However, \emph{weighted} graphs generally cannot be recovered at all, even up to scaling, and even if we compare all possible pairs of cuts~\cite[Appendix~D.4.2]{comparison}.

Our final result is a comparison-query lower bound for weighted minimum cut. For deterministic algorithms, an~${\Omega(n\log n)}$ lower bound follows from the two-party communication complexity of graph connectivity~\cite{hajnal1988onthecommunicationcomplexity}, even for integer edge weights bounded by~${O(n^2)}$.\footnote{% Start footnote
For the communication reduction, let~${E_A}$ and~${E_B}$ be the edges held by Alice and Bob, respectively. 
Let~${M\coloneqq \binom{n}{2}+1}$, and let~$G$ be the graph with edge weights~${\edgeWeightE{e}\coloneqq M\indicator{e\in E_A}+\indicator{e\in E_B}}$.
To simulate a comparison between~$S$ and~$T$, Alice locally computes the contribution of her edges to~${\Delta\coloneqq\cutGraphSet{G}{S}-\cutGraphSet{G}{T}}$, and Bob does the same for his edges. 
Since~${M>\binom{n}{2}}$, Alice's contribution determines the sign of~$\Delta$ unless it is zero, in which case Bob's contribution determines the sign. 
Thus, each comparison can be simulated using~${O(1)}$ bits of communication.
The minimum cut in~$G$ has value~$0$ exactly when~${E_A\cup E_B}$ is disconnected, so the~${\Omega(n\log n)}$ communication lower bound for connectivity~\cite{hajnal1988onthecommunicationcomplexity} gives an~${\Omega(n\log n)}$ deterministic comparison lower bound for weighted minimum cut.
% End footnote
}
For bounded-error randomized algorithms, we obtain the following result, which ranges between~${\Omega(n)}$ and~${\Omega(n\log n)}$ depending on the magnitude of the edge weights.

\begin{restatable}[Randomized lower bound for weighted minimum cut]{theorem}{randomizedlowerbound}
\label{thm:lower}
Let~$n$ be even and sufficiently large, and let~$\ell$ be an integer with~${2\leq\ell<n/2}$. 
Let~$\mathcal{G}$ be the class of complete~$n$-vertex graphs with positive integer edge weights bounded by~${\lbWeightRange}$.
Every randomized comparison algorithm that outputs a global minimum cut with probability at least~${2/3}$ on every graph in~$\mathcal{G}$ has worst-case expected query complexity
\[
  \Omega(n\log\ell).
\]
\end{restatable}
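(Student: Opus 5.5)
We prove the bound by Yao's minimax principle: we build a hard distribution $\lbDistribution$ over graphs in $\mathcal{G}$ and show that any deterministic comparison algorithm that succeeds with probability at least $2/3$ under $\lbDistribution$ must make $\Omega(n\log\ell)$ comparisons in expectation. The macros in the preamble suggest the construction. Start from a family $\lbOrderings$ of hidden orderings (paths) $\tau$ on roughly $n/2$ of the vertices. For each $\tau$, the base graph $\lbBaseGraph{\tau}$ places a heavy weight scale $\lbWeightScale$ on the path edges $\lbPathEdge{\tau}{i}$, while every other pair gets weight $1$; this keeps the graph complete with positive integer weights.

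We then perturb the path: among edges spaced $\lbSpacing$ apart along it, we choose an index set $\lbPerturbIndices$ and lower one path edge so that a unique prefix cut $\lbTargetCut=\lbPrefix{\tau}{i}$ becomes the minimum cut. This gives $\lbPerturbedGraph{\tau}{i}$. The parameters are chosen so that the hidden object carries about $\log(\ell^{\Theta(n/\ell)}\cdot\ldots)=\Theta(n\log\ell)$ bits of entropy. Concretely, partition the vertices into $\Theta(n/\ell)$ blocks, each containing a hidden path on $\ell$ vertices with $\ell!$ possibilities, giving $\Theta(n\log\ell)$ bits in total. Any correct output must in effect reveal most of this information. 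The weights needed to make exactly one prefix win while keeping every comparison sign controlled are of size $n^{O(1)}2^{O(\ell)}$, which matches the weight range allowed in the statement.

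The central counting step bounds how much information a single comparison can give. A comparison of $S$ and $T$ returns the sign of
\[
\cutGraphSet{G}{S}-\cutGraphSet{G}{T}=\innerProd{w}{\lbBoundaryVector{S}-\lbBoundaryVector{T}},
\]
which is a sign pattern of a linear form in the weight vector. Since only three answers are possible, $q$ comparisons produce at most $3^q$ transcripts. It therefore suffices to show two things:
\begin{enumerate}
\item Any fixed transcript is consistent with a correct output for at most a $2^{-\Omega(n\log\ell)}$ fraction of the instances.
\item Equivalently, the instances that share an answer set are few.
\end{enumerate}
To establish the first point we encode each instance by a certificate $\lbCertificate{\tau}$, built from the pair matrix $\lbPairMatrix{\tau}$ of pairwise boundary counts together with the quadratic form $\lbQuadraticForm{\lbSingletonMatrix}{\cdot}$. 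The cut value $\innerProd{w}{\cdot}$ of any query set depends on $\tau$ only through these counts. The aim is to show that the minimum cut, up to the perturbation, determines $\tau$ within each block, so that the number of distinct certificates compatible with one output is bounded.

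The main obstacle is the randomized part: we must show that a short transcript cannot distinguish many perturbations. The plan is a standard entropy/Fano-type argument. Condition on the event $\lbSharedPrefixEvent$ that a competing ordering $\lbOtherOrdering$ shares the relevant prefix. We then argue, using the spacing $\lbSpacing$ and the scale $\lbWeightScale$, that every comparison answer is determined by the heavy part together with the $O(1)$ perturbed edges its boundary touches. Each query therefore reveals $O(1)$ bits about the hidden path, so $O(1)$ bits per query against $\Omega(n\log\ell)$ bits of entropy gives the bound. Moving from worst-case expected query complexity to a fixed-query-budget statement uses Markov's inequality, and the success-probability constant $2/3$ is absorbed by Fano's inequality.
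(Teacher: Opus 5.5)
Your outer shell (Yao's principle, counting at most $3^{O(q)}$ transcripts, Markov truncation) matches the paper. However, the construction you describe does not work. With path edges of weight $K$ and every other pair of weight $1$, the prefix cut $P^\tau_i$ has value $K+i(n-i)-1$. So prefix cuts do not tie, and the minimum cut is just an endpoint singleton (or, if the path covers only half the vertices, any off-path singleton), which says almost nothing about $\tau$. Worse, all interior singletons have the same value $2K+n-3$. Lowering a path edge $\{v_i,v_{i+1}\}$ therefore breaks equalities between non-prefix cuts such as $\{v_i\}$ and $\{v_j\}$, and an algorithm can detect the perturbation by comparing singletons, with no comparison involving $P^\tau_i$.

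The paper's base graph is built to rule this out, through three properties:
\begin{itemize}
\item The path weights exactly compensate for the nonpath weight crossing each prefix, so all $n-1$ prefix cuts tie as the minimum cuts.
\item All weights are even, so lowering one path edge by $1$ makes $P^\tau_i$ the unique minimum while preserving every strict comparison.
\item Two non-prefix cuts with equal base value always cross the same edges indexed by $I$. Hence every equality a perturbation breaks involves $P^\tau_i$ or its complement.
\end{itemize}
The third property is the technical core. The cut value is $|\partial_\tau(X)|\,n^3K+\psi_M(\partial_\tau(X))$, where $M$ is chosen so that, for boundary sets with $|\mathcal{B}|\ge 2$, its quadratic form determines the certificate $(Z^{\mathcal{B}},\psi_H(\mathcal{B}))$ of the boundary set $\mathcal{B}$ (not of an instance $\tau$), and hence $\mathcal{B}\cap I$. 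Separating the $3^{2\ell}\cdot 4n^5$ possible certificate values is exactly where the $2^{O(\ell)}$ factor in the weight bound comes from. Your proposal asserts that weight range rather than deriving it.

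The second, more basic gap is the information argument. A correct output is a single cut, one of $2^{n-1}-1$. Under any input distribution, some fixed cut is therefore a minimum cut with probability greater than $2^{-(n-1)}$. Your first claim (every transcript is correct on at most a $2^{-\Omega(n\log\ell)}$ fraction of instances) thus already fails for the zero-query algorithm outputting that cut, once $\log\ell$ exceeds a constant. Any argument in which the output reveals the hidden information is capped at $O(n)$.

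The $\log\ell$ factor has to come from forcing the transcript to certify many cuts other than the output, and this is what the perturbations do in the paper:
\begin{itemize}
\item Strict comparisons survive the perturbation, so the runs on $G_\tau$ and $G^i_\tau$ first diverge at a comparison that is an equality in $G_\tau$ and involves $P^\tau_i$ or its complement.
\item After amplifying to success $9/10$, for a constant fraction of $\tau$ the algorithm is correct on $G_\tau$ and on $\ell+1$ perturbations. Its output therefore differs on $\ell$ of them, so its transcript on $G_\tau$ contains equalities touching $\ell$ prefixes at positions in $I$.
\item For each such equality the paper also records the comparison of $X$ with the fixed half $S$. Here $S$ is a minimum cut of every base graph, not the perturbed minimum as in your notation. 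This certifies $P^\tau_i=P^{\tau'}_i$ for every $\tau'$ with the same augmented transcript.
\item A direct count shows that orderings sharing $\ell$ prefixes in $I$ with $\tau$ form only a $2^{-\Omega(n\log\ell)}$ fraction of $\mathcal{H}$, giving $3^{2q+1}\ge 2^{\Omega(n\log\ell)}$.
\end{itemize}

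Two further points in your plan do not hold up. The claim that each answer depends only on the heavy part and $O(1)$ perturbed edges is unjustified, since a boundary can contain many indices of $I$. In any case, $O(1)$ bits per query is trivial without such a forcing argument. Also, the information actually forced is not $\Theta(n/\ell)$ hidden $\ell$-vertex paths. It is the split of each half into $\ell+1$ consecutive groups of size about $k/(\ell+1)$, cut out by the spaced prefixes, which is about $\log\ell$ bits per vertex.
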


Setting~${\ell\coloneqq \lceil\log n\rceil}$ in~\Cref{thm:lower} gives an~${\Omega(n\log\log n)}$-comparison lower bound for minimum cut when the edge weights are bounded by~${n^{O(1)}}$.
Moreover, for every fixed~${\varepsilon\in(0,1)}$, setting~${\ell\coloneqq \lfloor(n/2)^\varepsilon\rfloor}$ gives an~${\Omega_\varepsilon(n\log n)}$ lower bound when the edge weights are bounded by~${2^{O_\varepsilon(n^\varepsilon)}}$.\footnote{The notation~$O_\varepsilon(\cdot)$ and~${\Omega_\varepsilon(\cdot)}$ allows the hidden constants to depend on~$\varepsilon$.} 
These results contrast with the cut-query model, where no~${\omega(n)}$ lower bound is known even for deterministic algorithms. 

We prove the lower bound by constructing a hard family of graphs parameterized by~$\ell$. 
The restriction~${\ell<n/2}$ comes from this construction: we select~$\ell$ edges from each half of a Hamiltonian path and perturb their weights to create new minimum cuts while preserving the answers to all previous strict comparisons. 
Consequently, the lower bound in~\Cref{thm:lower} is always at most~$O(n\log n)$ and cannot be made arbitrarily large by increasing~$\ell$.

\subsection{Technical overview}\label{sec:overview}

We outline the proofs of~\Cref{thm:intro-sfm,thm:intro-constrained,thm:bounded-weight-min-cut-upper-bound,thm:lower} below.

\paragraph{Symmetric SFM (\Cref{thm:intro-sfm}).} 
Our comparison algorithm follows the contraction approach used by several value-oracle algorithms for symmetric~SFM~\cite{Queyranne,Nagamochi10,IwataKonno}: greedily construct an ordering of the ground set, contract the final two elements of the ordering into one element, and repeat.
The orderings in these three algorithms are chosen so that either a singleton is a minimizer, or every minimizer contains both final elements or neither.  
Storing the original~$n$ singletons and at most one new singleton per contraction gives~${O(n)}$ distinct candidate minimizers, which we compare at the end to return a minimizer. 
However, comparison queries do not reveal the function values, so we cannot directly evaluate sums or differences between function values.

For an ordering~${(v_1,\ldots,v_{n})}$ of~$V$, write~${\mcPrefix{0}\coloneqq\emptyset}$ and~${\mcPrefix{i}\coloneqq\{v_1,\ldots,v_i\}}$.
Suppose that~${v_1,\ldots,v_{i-1}}$ have already been chosen.
Each of the three algorithms above chooses the next element~$v_i$ from~${V\setminus\mcPrefix{i-1}}$ using a different greedy rule.
Queyranne's algorithm~\cite{Queyranne} uses a \emph{maximum-adjacency}~(MA) ordering, which selects
\[
v_i
\;\in\;
\argmax_{v\in V\setminus\mcPrefix{i-1}}\,
\bigl(\sfmFunction(\mcPrefix{i-1})+\sfmFunction(\{v\})-\sfmFunction(\mcPrefix{i-1}\cup\{v\})\bigr).
\] 
We cannot evaluate this rule directly using comparison queries, since the oracle does not reveal the function values needed for the sums and differences. 
Nagamochi~\cite{Nagamochi10} later introduced a \emph{minimum-degree}~(MD) ordering, which instead selects
\[
v_i
\;\in\;
\argmin_{v\in V\setminus\mcPrefix{i-1}}\,
\bigl(\sfmFunction(\mcPrefix{i-1}\cup\{v\})+\sfmFunction(\{v\})\bigr).
\]
Again, a comparison oracle cannot directly compare these sums.
We instead use the \emph{minimum-capacity}~(MC) ordering recently introduced by Iwata and Konno~\cite{IwataKonno}, which selects
\[
v_i
\;\in\;
\argmin_{v\in V\setminus\mcPrefix{i-1}}\,
\sfmFunction(\mcPrefix{i-1}\cup\{v\}).
\]
This selection rule is a plain minimum of function values (without sums or differences), so we can compute~$v_i$ using~${n-i}$ comparisons by scanning over~${V\setminus\mcPrefix{i-1}}$ and keeping the best candidate element seen so far. 
We can therefore construct a full MC ordering using~${(n-1) + (n-2) + \ldots + 1 = O(n^2)}$ comparisons without knowing any function values.

Iwata and Konno~\cite{IwataKonno} show that every set~$S$ containing exactly one of the final two elements of an MC ordering satisfies~${f(S) \geq \min_{v\in V}f(\{v\})}$. 
We may therefore contract the final pair and recurse on a ground set that is one element smaller until only two elements remain, keeping track of the best candidate minimizer so far at each step. 
This algorithm uses~${O(n)}$ contractions, each requiring~${O(n^2)}$ comparisons, giving a total of~${O(n^3)}$ comparisons.

\paragraph{Symmetric SFM over a downward-closed family (\Cref{thm:intro-constrained}).} 
For minimization over a downward-closed family~$\cI$, we cannot directly use the contraction approach above, as the union of the last two elements in an MC ordering might no longer lie in~$\cI$.  
Instead, following the approach of Goemans and Soto~\cite[Algorithm~2]{GoemansSoto}, we first check whether the union of the final two elements still lies in~$\cI$; if so, then we contract them as before, and otherwise, we add both elements to a \emph{root block} of discarded elements, which we contract into a single element~$\mcRoot$ that is fixed at the start of every subsequent ordering.

There are two further complications. 
First, Goemans and Soto use an MA~ordering, which we cannot construct using comparisons, as discussed above. 
Second, we cannot simply substitute an~MC ordering into the Goemans--Soto algorithm, as their algorithm requires the root to be fixed first in the ordering for each contraction, whereas the first element in an~MC~ordering must correspond to a singleton that minimizes~$\sfmFunction$. 
We resolve this by defining a modified MC~ordering: we call an ordering~${(v_1,\ldots,v_{n})}$ an~\emph{$\mcRoot$-rooted MC ordering} for~$\sfmFunction$ if~${v_1=\mcRoot}$ and, for every~${i \geq 2}$, the element~$v_i$ satisfies the MC ordering inequality.  
Using Iwata and Konno's inequality for~MC orderings~\mbox{\cite[Lemma~1]{IwataKonno}}, we show that if~${(v_1,\ldots,v_{n})}$ is an~${\mcRoot}$-rooted MC ordering for~$\sfmFunction$, then the contraction property we need still holds for all sets not containing~$\mcRoot$~(see~\Cref{cor:rooted-mc}). 
This inequality and the choice of ordering allow us to adapt the Goemans--Soto algorithm to use comparisons~(see~\Cref{alg:downward-closed-sfm}).

\paragraph{Weighted minimum cut with bounded weights 
(\Cref{thm:bounded-weight-min-cut-upper-bound}).}
For graphs with edge weights in~$\weightedEdgeWeights$, we show how to compute a minimum cut using~$\weightedMinCutUpperBound$ queries. We prove the~${m\B}$ and~${n\B^2}$ bounds separately, and then run the corresponding algorithms in parallel.

For the first bound, let~$\totalWeight$ be the total edge weight in the graph~$G$, and represent each cut~${S\subseteq V}$ by the indicator vector~${\cutIndicator{S}\in\{0,1\}^{\binom{n}{2}}}$ of its crossing edges. 
Then we can write~${\cutGraphSet{G}{S}=\innerProd{\cutIndicator{S}}{\gslWeight}}$, where~${\gslWeight\in\R^{\binom{n}{2}}}$ is a vector of the unknown edge weights and~${\innerProd{\cdot}{\cdot}}$ is the standard inner product. 
If two cuts~${S,T}$ satisfy~${\cutGraphSet{G}{S}=\cutGraphSet{G}{T}}$, then~${\innerProd{\gslWeight}{\cutIndicator{S}-\cutIndicator{T}}=0}$. 
Therefore each time that the oracle confirms that two cuts have the same value, this gives us a linear constraint on~${\gslWeight}$ that we may store and reuse without making any additional queries. 

Following the Global Subspace Learning method of Cohen-Addad et al.~\cite[Theorem~4.2]{comparison}, we store one representative set for each cut value discovered so far and build a subspace of~${\gslWeight^\perp}$ spanned by the linear constraints obtained from each cut equality.  
For each newly queried cut~$S$, we first check whether the stored constraints already imply that~${\cutGraphSet{G}{S}}$ is equal to the cut value of a representative. 
If not, we use binary search among the representatives; this either confirms that the value~${\cutGraphSet{G}{S}}$ has not been seen yet or discovers a new constraint on~$\gslWeight$.  
There are at most~${\totalWeight+1}$ distinct cut values and at most~${\binom{n}{2}}$ independent constraints, so this allows us to simulate the~${O(n^3)}$-query algorithm from~\Cref{thm:intro-sfm} using~${O((n^2+\totalWeight)\log(\totalWeight+2))}$ comparisons~(see~\Cref{cor:weighted-min-cut-total-weight}). 
The inequality~${\totalWeight\le m\B}$ gives the bound~${\softO(n^2+m\B)}$.

When the graph is dense, the term~${m\B}$ can be as large as~${\Theta(n^2\B)}$.
To prove the second upper bound, we replace~$G$ by a contracted graph~$\contractedG$ with total edge weight~${\softO(n\B^2)}$. We then run the previous algorithm on~$\contractedG$, obtaining an upper bound of~${\softO(n^2+n\B^2)}$, which improves the query complexity when~${\B= o(m/n)}$. 
Although graph contractions and expander decompositions have been used throughout the cut-query literature~\cite{RSW,StarContraction,AnandSaranurakWang,KennethMordochKrauthgamer2025,KennethMordochKrauthgamer2026a,KennethMordochKrauthgamer2026b}, prior algorithms either assume that the graph is unweighted or that we know the exact weighted cut values. 
The challenge is finding a contraction that preserves the nonsingleton minimum cuts in~$G$ without knowing the edge weights.

Fix~${v \in V}$ and~${S\subseteq V \setminus \{v\}}$. 
Write~$\degreeGraph{v}{G}$ for the weighted degree of~$v$ and~${\weightVertexSet{v}{S}}$ for the sum of edge weights between~$v$ and~$S$. 
Although comparison queries do not reveal~$\weightVertexSet{v}{S}$, the identity
\[ 
 \cutGraphSet{G}{S\cup\{v\}}-\cutGraphSet{G}{S}
 \;=\;
 \degreeGraph{v}{G}-2\weightVertexSet{v}{S}
\]
lets us use a single comparison to test whether edges to~${S}$ account for more or less than half of the weight incident to~${v}$. 
In~\Cref{subsec:recovering-support-graph}, we use a greedy algorithm built on this test to recover an unweighted graph~$H$ such that each edge in~$H$ has positive weight in~$G$, and such that the unrecovered weight at each vertex is at most~${\B}$. 

We then compute an \emph{expander decomposition} of~$H$. 
This gives a partition of~$V$ into blocks such that each nontrivial split of a block is crossed by at least~${\Omega(\B)}$ recovered edges per vertex on the smaller side, while only~${\softO(n\B)}$ recovered edges join different blocks~(see~\Cref{thm:expander}). 
Choosing the parameters of the expander decomposition is the only step of the algorithm that needs to know~$\B$.
Finally, for each nonsingleton block~$V_i$, we identify a set~${\contractionSet{i}\subseteq V_i}$ of vertices whose incident weight lies mostly inside~$V_i$~(\Cref{lemma:external}), and contract each nonempty~$\contractionSet{i}$ into a single vertex. 
We show in~\Cref{lemma:contraction} that, unless a singleton is already a minimum cut, every minimum cut is preserved by these contractions, and that the resulting contracted graph has total edge weight~${\softO(n\B^2)}$.

\paragraph{The randomized lower bound~(\Cref{thm:lower}).} 
Our lower bound uses a perturbation argument: we first construct a family of graphs whose minimum cuts encode the prefixes of a hidden ordering of the vertices.
We then show that we can perturb individual edge weights so that any correct deterministic algorithm must identify many of these prefixes. 
We conclude by choosing a hard distribution over these graphs and applying Yao's principle. 

Let~${n=2k}$ be even, consider the vertex set~${V\coloneqq[n]=\{1,2,\ldots,n\}}$, and let~$\ell$ be an integer with~${2\leq\ell<k}$.
Fix~${\lbTargetCut\subseteq V}$ with~${|\lbTargetCut|=k}$, and let~$\lbOrderings$ be the family of all orderings~${\tau=(v_1,\ldots,v_n)}$ of~$V$ such that~${(v_1,\ldots,v_k)}$ is an ordering of~$\lbTargetCut$ and~${(v_{k+1},\ldots,v_n)}$ is an ordering of~${V\setminus\lbTargetCut}$.
The algorithm may know~$\lbTargetCut$, but it does not know~$\tau$ initially.

We view~$\tau$ as a Hamiltonian path that visits every vertex in~$\lbTargetCut$ before every vertex in~${V\setminus\lbTargetCut}$.
Write~${\lbPrefix{\tau}{i}\coloneqq\{v_1,\ldots,v_i\}}$ for the prefix of the path~$\tau$ at position~${i\in[n-1]}$, and call the cuts represented by these sets \emph{prefix cuts}.
For each~${\tau\in\lbOrderings}$, we construct a complete \emph{base graph}~$\lbBaseGraph{\tau}$ whose minimum cuts are exactly these prefix cuts, and whose edge weights are positive even integers bounded by~$\lbWeightRange$~(see~\Cref{lemma:base_graph_properties}).

Next, we perturb each base graph~$\lbBaseGraph{\tau}$ so that distinguishing it from a perturbed graph requires querying a particular prefix cut.
In~\Cref{subsec:path-construction}, we choose a set~${\lbPerturbIndices\subseteq[n-1]}$ with~${|\lbPerturbIndices|=2\ell}$, independently of~$\tau$.
For each~${i\in\lbPerturbIndices}$, we obtain the \emph{perturbed graph}~$\lbPerturbedGraph{\tau}{i}$ from~$\lbBaseGraph{\tau}$ by decreasing the weight of the path edge~${\{v_i,v_{i+1}\}}$ by~$1$ and leaving all other edge weights unchanged.
Among the prefix cuts, only the cut represented by~$\lbPrefix{\tau}{i}$ crosses this edge, so it is the \emph{unique} minimum cut of~$\lbPerturbedGraph{\tau}{i}$~(up to complements).
Moreover, each cut value in~$\lbPerturbedGraph{\tau}{i}$ decreases by at most~$1$ relative to the base graph.
Since all cut values in~$\lbBaseGraph{\tau}$ are even, it follows that for all~${X,Y\subseteq V}$, if~${\cutGraphSet{\lbBaseGraph{\tau}}{X}<\cutGraphSet{\lbBaseGraph{\tau}}{Y}}$, then~${\cutGraphSet{\lbPerturbedGraph{\tau}{i}}{X}<\cutGraphSet{\lbPerturbedGraph{\tau}{i}}{Y}}$.
That is, the perturbation preserves all strict comparisons, so an algorithm can only distinguish~$\lbBaseGraph{\tau}$ from~$\lbPerturbedGraph{\tau}{i}$ by querying a pair~${X,Y\subseteq V}$ such that~${\cutGraphSet{\lbBaseGraph{\tau}}{X}=\cutGraphSet{\lbBaseGraph{\tau}}{Y}}$ but~${\cutGraphSet{\lbPerturbedGraph{\tau}{i}}{X}\neq\cutGraphSet{\lbPerturbedGraph{\tau}{i}}{Y}}$.
We prove in~\Cref{lemma:one-edge-perturbation} that for any such pair~$X,Y$, both sets represent prefix cuts of~$\tau$, and one of the sets belongs to~${\{\lbPrefix{\tau}{i},\,V\setminus\lbPrefix{\tau}{i}\}}$.

Informally, these properties ensure that a deterministic algorithm that is correct on a sufficiently large proportion of the perturbed graphs must make comparisons involving a large proportion of the prefix cuts~${\lbPrefix{\tau}{i}}$ or their complements. 
To prove the lower bound, we define a hard distribution~$\lbDistribution$ over these graphs.
The distribution~$\lbDistribution$ samples an ordering~$\tau$ uniformly at random from~$\lbOrderings$, then returns the base graph~$\lbBaseGraph{\tau}$ with probability~${1/2}$, or a uniformly chosen perturbed graph~$\lbPerturbedGraph{\tau}{i}$ with~${i\in\lbPerturbIndices}$ otherwise.
In~\Cref{lemma:common-cuts}, we show that for any fixed~${\tau\in\lbOrderings}$, at most~${|\lbOrderings|2^{-\Omega(n\log\ell)}}$ distinct orderings in~$\lbOrderings$ share at least~$\ell$ prefix cuts at positions in~$\lbPerturbIndices$ with~$\tau$.
Using this bound and counting comparison transcripts, we show in~\Cref{lemma:distributional-lower-bound} that every deterministic algorithm that outputs a minimum cut with probability at least~${9/10}$ for graphs drawn from~$\lbDistribution$ makes~${\Omega(n\log\ell)}$ comparisons in the worst case.
The randomized lower bound in~\Cref{thm:lower} follows by Yao's principle, using a constant number of repetitions and Markov's inequality to bound the worst-case expected query complexity.

\subsection{Related work}\label{sec:related}
 
\paragraph{Symmetric SFM in the value-oracle model.}  
Queyranne~\cite{Queyranne} extended the minimum-cut algorithm of Nagamochi and Ibaraki~\cite{NagamochiIbaraki92} to symmetric submodular functions, giving a deterministic algorithm using~${O(n^3)}$ value queries.
As described above, Queyranne's algorithm uses maximum-adjacency~(MA) orderings, while Nagamochi~\cite{Nagamochi10} later obtained the same query bound using minimum-degree~(MD) orderings.
Goemans and Soto~\cite{GoemansSoto} used an MA ordering to minimize a symmetric submodular function over the nonempty members of a downward-closed family, also using~${O(n^3)}$ value queries. 
Iwata and Konno recently introduced a framework that unifies~MA and~MD orderings~\cite{IwataKonno}. 
They give a family of orderings parameterized by~${\alpha\in[-1,1]}$, with MA and MD orderings corresponding to~${\alpha=-1}$ and~${\alpha=1}$, respectively. 
Their MC ordering, which we use in this paper, corresponds to~${\alpha=0}$.

No deterministic algorithm for symmetric~SFM using~${o(n^3)}$ value-oracle queries is known, but allowing randomization or a dependence on the range does help. 
For symmetric submodular functions taking integer values in~$[-M,M]$, Chekuri and Quanrud~\cite[Corollary~3]{ChekuriQuanrud2021} give a randomized algorithm that finds a nontrivial minimizer with high probability using~${\softO(n^2\log(nM))}$ queries and~${\poly(n,\log M)}$ additional time.
Their algorithm extends the isolating-cuts framework of Li and Panigrahi~\cite{lipanigrahi2022deterministicmincut} to submodular functions.
Alternatively, applying Chekuri and Quanrud's reduction from symmetric SFM to general SFM~\cite[Theorem~2]{ChekuriQuanrud2021} with Jiang's exponential-time~${O(n^2\log n)}$-query algorithm for general~SFM~\cite{Jiang22} gives a randomized~${\softO(n^2)}$-query algorithm for symmetric~SFM, with exponential running time.

\paragraph{The cut-query model.} 
In the \emph{cut-query model}, each query is a set of vertices in a graph, and an oracle returns the value of the cut. 
This is a special case of the value-oracle model in which the unknown function is a cut function.
Rubinstein, Schramm, and Weinberg~\cite{RSW} studied this model and gave a randomized algorithm that finds a minimum cut of an unweighted graph using~${\softO(n)}$ queries, even though reconstructing the graph requires~${\Theta(n^2/\log n)}$ cut queries in the worst case. 
Apers~et~al.~\cite{StarContraction} later improved the bounded-error randomized query complexity of unweighted minimum cut to~${O(n)}$, and Anand, Saranurak, and Wang~\cite{AnandSaranurakWang} recently gave the first deterministic algorithm that beats reconstructing the graph, computing an unweighted minimum cut using~${\softO(n^{5/3})}$ cut queries. 
For weighted graphs, Mukhopadhyay and Nanongkai~\cite{MukhopadhyayNanongkai} gave a randomized~${\softO(n)}$-query algorithm for computing a minimum cut, but no subquadratic deterministic algorithms are known.

For lower bounds, Graur et al.~\cite{GraurPollnerRamaswamyWeinberg20} defined the \emph{cut dimension} of a graph as the dimension of the space spanned by the indicator vectors of its minimum cuts, and used this to prove a deterministic lower bound of~${3n/2-2}$ queries for computing a weighted minimum-cut value. Lee et al.~\cite{LLSZ} showed that every graph has cut dimension at most~${2n-3}$, but improved the lower bound to~${2n-2}$ queries using a generalization of cut dimension. This is also the best-known deterministic lower bound for the more general problem of symmetric~SFM.  
For zero-error randomized algorithms, Auza and Lee~\cite{AuzaLee21} proved that every such algorithm that decides whether a weighted graph is connected makes at least~${n/4}$ expected queries in the worst case in the linear query model, which implies the same lower bound in the cut-query model. 
For bounded-error algorithms, an~$\Omega(n/\log n)$ lower bound follows from the communication complexity of deciding connectivity~\cite{babai1986complexity}.

\paragraph{The comparison model.} 
Pairwise comparisons of combinatorial functions were studied before their use for exact optimization; for example, Balcan, Vitercik, and White~\cite{BalcanVitercikWhite} learn approximate comparison oracles for several function classes, including submodular functions. 
Cohen-Addad et al.~\cite{comparison} formulated the exact comparison-based optimization model studied here.
Their general linear-optimization result gives an~${\softO(n^3)}$-comparison algorithm for weighted minimum cut, but its running time can be exponential~\cite[Theorem~1.1]{comparison}.
For unweighted graphs, they adapted the cut-query approach of Rubinstein, Schramm, and Weinberg~\cite{RSW} to obtain a randomized polynomial-time algorithm using~${\softO(n)}$ comparisons~\cite[Theorem~1.6]{comparison}.
 For graphs with integer weights in~${\weightedEdgeWeights}$, they obtained a comparison-query bound of~${(n\B)^{O(r^2\log \B)}}$ when the graph has~$r$ distinct weighted degrees~\cite[Theorem~1.8]{comparison}.

\subsection{Organization}\label{sec:organization}
The paper is organized as follows. 
We prove the SFM upper bounds~(\Cref{thm:intro-sfm,thm:intro-constrained}) in~\Cref{sec:symmetric}.
For weighted minimum cut, we prove the upper bound~(\Cref{thm:bounded-weight-min-cut-upper-bound}) in~\Cref{sec:bounded} and the lower bound~(\Cref{thm:lower}) in~\Cref{sec:lower}.
We discuss open questions in~\Cref{sec:discussion}.
\Cref{app:weighted-graph-contraction} contains the proof of the weighted-graph contraction lemma~(\Cref{lemma:contraction}), and~\Cref{app:lower_bound_proofs} contains details omitted from the lower-bound proof.

%%%%%%%%%%%%%%%%%%%
\section{Comparison-based symmetric submodular minimization}\label{sec:symmetric} 
%%%%%%%%%%%%%%%%%%% 

\subsection{Minimum-capacity orderings}
We first recall the minimum-capacity~(MC) ordering of~Iwata and~Konno~\cite{IwataKonno}.
We then adapt their key inequality to allow minimizing a symmetric submodular function over an arbitrary downward-closed family. 

\begin{definition}[MC ordering and~${\mcRoot}$-rooted MC ordering]\label{def:mc} 
Let~${\sfmFunction:2^V\to\R}$ with~${\mcSize=|V|\geq 2}$, and let~${(v_1,\ldots,v_{\mcSize})}$ be an ordering of~$V$.
Write~${\mcPrefix{0}\coloneqq \emptyset}$ and~${\mcPrefix{i}\coloneqq\{v_1,\ldots,v_i\}}$ for each~${i\in[n]}$. 
\begin{itemize}
    \item The ordering~${(v_1,\ldots,v_{\mcSize})}$ is a \emph{minimum-capacity~(MC) ordering} for~$\sfmFunction$ if, for every~${i\in[\mcSize]}$,
    \begin{align}
    v_i
    \;\in\;
    \argmin_{v\in V\setminus\mcPrefix{i-1}}
    \sfmFunction\bigl(\mcPrefix{i-1}\cup\{v\}\bigr).
    \label{eq:mc-ordering-inequality}
    \end{align}
    \item Fix~${\mcRoot\in V}$. The ordering~${(v_1,\ldots,v_{\mcSize})}$ is an~\emph{${\mcRoot}$-rooted MC ordering} for~$\sfmFunction$ if~${v_1=\mcRoot}$ and~\eqref{eq:mc-ordering-inequality} holds for every~${i\in\{2,\ldots,\mcSize\}}$.
\end{itemize}
\end{definition}

An~${\mcRoot}$-rooted MC~ordering need not be an~MC ordering, as its first element is always~$\mcRoot$ rather than being chosen from~${\argmin_{v\in V} \sfmFunction\bigl(\{v\}\bigr)}$. However, using the following result of Iwata and Konno, we obtain the analogous inequality for an~$\mcRoot$-rooted MC ordering for sets that do not contain~$\mcRoot$.

\begin{lemma}[Iwata--Konno MC inequality~{\cite[Lemma~1]{IwataKonno}}]\label{lemma:mc}
Let~${\sfmFunction:2^V\to\R}$ be submodular, let~${\mcSize=|V|\geq2}$, and let~${(v_1,\ldots,v_{\mcSize})}$ be an MC ordering for~$\sfmFunction$.
If~${S\subsetneq V}$ satisfies~${|S\cap\{v_{\mcSize-1},v_{\mcSize}\}|=1}$, then
\[
\sfmFunction(S)+\sfmFunction(V\setminus S)
\;\ge\;
\sfmFunction(\mcPrefix{\mcSize-1})
+
\min_{v\in S}\sfmFunction(\{v\}).
\]
\end{lemma}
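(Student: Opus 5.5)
The plan is induction along the prefixes of the MC ordering, in the spirit of Queyranne's proof for MA orderings. The statement is not symmetric in $S$ and $V\setminus S$: the minimum is taken over singletons of $S$, whichever of $v_{\mcSize-1},v_{\mcSize}$ it contains. So I would prove the following two-sided statement for every $i\in\{2,\ldots,\mcSize\}$.

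\emph{Claim.} Let $(A,B)$ be a partition of $\mcPrefix{i}$ with $|A\cap\{v_{i-1},v_i\}|=1$. Then for both choices $X\in\{A,B\}$,
\[
\sfmFunction(A)+\sfmFunction(B)\;\ge\;\sfmFunction(\mcPrefix{i-1})+\min_{v\in X}\sfmFunction(\{v\}).
\]
Taking $i=\mcSize$, $A=S$, $B=V\setminus S$ and $X=S$ gives the lemma.

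The base case $i=2$ forces $\{A,B\}=\{\{v_1\},\{v_2\}\}$. The left side is $\sfmFunction(\{v_1\})+\sfmFunction(\{v_2\})$ and the right side is $\sfmFunction(\{v_1\})$ plus either $\sfmFunction(\{v_1\})$ or $\sfmFunction(\{v_2\})$. The MC rule with $\mcPrefix{0}=\emptyset$ makes $v_1$ a minimizing singleton, so the claim holds.

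For the step from $i$ to $i+1$, fix a partition $(A,B)$ of $\mcPrefix{i+1}$ separating $v_i$ and $v_{i+1}$, and by relabeling assume $v_{i+1}\in B$ and $v_i\in A$. Let $B'=B\setminus\{v_{i+1}\}$. The tools are submodularity applied to pairs such as $(B,\mcPrefix{i})$ or $(A\cup\{v_{i+1}\},\ldots)$, and the MC inequality $\sfmFunction(\mcPrefix{i})\le\sfmFunction(\mcPrefix{i-1}\cup\{v_{i+1}\})$. I would split into two cases.

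\begin{itemize}
\item \emph{Case $v_{i-1}\in B$.} Then $(A,B')$ is a partition of $\mcPrefix{i}$ separating $v_{i-1},v_i$, so the induction hypothesis applies to it. I would combine that bound with
\[
\sfmFunction(B)+\sfmFunction(\mcPrefix{i})\;\ge\;\sfmFunction(\mcPrefix{i+1})+\sfmFunction(B').
\]
The leftover terms should then be handled by monotonicity-like consequences of the MC rule along the prefix, namely comparing $\sfmFunction(\mcPrefix{i})$ with sets of the form $\mcPrefix{i-1}\cup\{v_{i+1}\}$.
\item \emph{Case $v_{i-1}\in A$.} Here $v_{i-1}$ and $v_i$ lie on the same side, so I would move to the largest index $j<i$ with $v_{j-1},v_j$ on different sides. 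I would apply the hypothesis at level $j$ and telescope the MC inequalities $\sfmFunction(\mcPrefix{k})\le\sfmFunction(\mcPrefix{k-1}\cup\{v_{i+1}\})$ for $k=j,\ldots,i$ via submodularity, exactly as in the classical ``critical index'' argument for MA orderings.
\end{itemize}

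If no such $j$ exists, $A$ or $B$ is a singleton plus a prefix. That configuration has to be checked directly, and this is where the $\min_{v\in X}\sfmFunction(\{v\})$ term actually enters.

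I expect the main obstacle to be getting the correct side $X$ in the minimum term. The MC rule bounds only prefix values, so the singleton term must be carried through the induction unchanged. The induction hypothesis has to be strong enough to supply $\min_{v\in X}\sfmFunction(\{v\})$ for both sides of the partition. I would first try the two-sided claim above. If the step fails for one side, I would weaken the induction hypothesis to a bound with $\min$ over the side containing a designated critical element, and recover the two-sided statement only at $i=\mcSize$.
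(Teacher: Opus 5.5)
The paper does not prove this lemma; it quotes it from Iwata--Konno, so I am judging your plan on its own merits. The overall shape is right. Your two-sided claim at level $i$ is just the lemma applied to $f$ restricted to $2^{V_i}$, since the truncated ordering is still an MC ordering for the restriction. An induction over the indices where the ordering switches sides, carrying the singleton term unchanged, does work.

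The one concrete step you write down does not close, though. In the case $v_{i-1}\in B$, you pair $B$ with $V_i$ and add the induction hypothesis for $(A,B')$. This leaves you needing $f(V_{i-1})+f(V_{i+1})\ge 2f(V_i)$, which is false in general. For the cut function of a unit-weight triangle, every ordering is MC, and at $i=2$ this inequality reads $2+0\ge 4$. The MC rule cannot rescue it either. Submodularity on $V_i$ and $V_{i-1}\cup\{v_{i+1}\}$, together with $f(V_i)\le f(V_{i-1}\cup\{v_{i+1}\})$, only bounds $f(V_{i-1})+f(V_{i+1})$ from \emph{above}. Your second case has the same problem: telescoping the MC inequalities for all $k=j,\ldots,i$ is not what is needed.

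The missing idea is to pair $B$ with the prefix just \emph{before} the last switch. Then the union is $V_{j-1}\cup\{v_{i+1}\}$, and a single MC inequality applies: the one at step $j$, comparing $v_j$ with $v_{i+1}$. Concretely, let $j\le i$ be the largest index with $v_{j-1}$ and $v_j$ on different sides, so $v_j,\ldots,v_i\in A$ and $v_{j-1}\in B$. Then
\[
f(B)+f(V_{j-1})\;\ge\; f(V_{j-1}\cup\{v_{i+1}\})+f(B\cap V_j)\;\ge\; f(V_j)+f(B\cap V_j),
\qquad
f(A)+f(V_j)\;\ge\; f(V_i)+f(A\cap V_j).
\]
Add these two inequalities and apply the hypothesis at level $j$ to the partition $(A\cap V_j,\,B\cap V_j)$. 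This gives
\[
f(A)+f(B)\;\ge\; f(V_i)+\min_{v\in X\cap V_j}f(\{v\})\;\ge\; f(V_i)+\min_{v\in X}f(\{v\}).
\]

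This single argument covers both of your cases. Your first case is $j=i$: the second inequality becomes trivial, and the fix is simply to replace $V_i$ by $V_{i-1}$ in your displayed inequality. It also shows the two-sided claim propagates with no weakening, because the sides only grow, so no fallback hypothesis is needed.

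If no such $j$ exists, then $A=V_i$ and $B=\{v_{i+1}\}$. This is the base configuration, and it is the only place the singleton term enters, via the fact that $v_1$ minimizes $f$ over singletons.
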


\begin{corollary}[Rooted MC ordering inequality]\label{cor:rooted-mc}  

Let~${\sfmFunction:2^V\to\R}$ be symmetric and submodular with~${\mcSize=|V|\ge2}$, and let~${(v_1,\ldots,v_{\mcSize})}$ be an~${\mcRoot}$-rooted MC ordering for~$\sfmFunction$. 
If a set~${S\subseteq V\setminus\{\mcRoot\}}$ satisfies~${|S\cap\{v_{\mcSize-1},v_{\mcSize}\}|=1}$, then
\[
2\sfmFunction(S)
\;\ge\;
\sfmFunction(\{v_{\mcSize}\})
+
\min_{v\in S}\sfmFunction(\{v\}).
\]
\end{corollary}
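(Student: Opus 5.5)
The plan is to reduce to \Cref{lemma:mc} by making $\mcRoot$ part of the function rather than part of the ordering. Set $V'\coloneqq V\setminus\{\mcRoot\}$ and define $\sfmRootedFunction:2^{V'}\to\R$ by $\sfmRootedFunction(X)\coloneqq \sfmFunction(X\cup\{\mcRoot\})$.

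\textbf{Step 1: the reduction.} Contracting a fixed element preserves submodularity, so $\sfmRootedFunction$ is submodular. Since $v_1=\mcRoot$, we have $\mcPrefix{i-1}=\{\mcRoot\}\cup\{v_2,\ldots,v_{i-1}\}$ for $i\ge 2$. Hence condition~\eqref{eq:mc-ordering-inequality} for $i\ge2$ says exactly that $v_i$ minimizes $\sfmRootedFunction(\{v_2,\ldots,v_{i-1}\}\cup\{v\})$ over the remaining $v\in V'$. So $(v_2,\ldots,v_{\mcSize})$ is an MC ordering for $\sfmRootedFunction$.

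\textbf{Step 2: the degenerate case $\mcSize=2$.} Here $V'=\{v_2\}$ has only one element, so \Cref{lemma:mc} does not apply. But then the only admissible set is $S=\{v_2\}$, and the claim reads $2\sfmFunction(\{v_2\})\ge 2\sfmFunction(\{v_2\})$, which is trivial.

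\textbf{Step 3: apply the lemma and translate.} For $\mcSize\ge3$, the set $S\subseteq V'$ misses one of $v_{\mcSize-1},v_{\mcSize}$, so $S\subsetneq V'$. \Cref{lemma:mc} applied to $\sfmRootedFunction$ gives
\[
\sfmRootedFunction(S)+\sfmRootedFunction(V'\setminus S)\;\ge\;\sfmRootedFunction(V'\setminus\{v_{\mcSize}\})+\min_{v\in S}\sfmRootedFunction(\{v\}).
\]
Each term translates back to $\sfmFunction$ using symmetry:
\begin{itemize}
\item $\sfmRootedFunction(S)=\sfmFunction(S\cup\{\mcRoot\})$;
\item $\sfmRootedFunction(V'\setminus S)=\sfmFunction(V\setminus S)=\sfmFunction(S)$;
\item $\sfmRootedFunction(V'\setminus\{v_{\mcSize}\})=\sfmFunction(V\setminus\{v_{\mcSize}\})=\sfmFunction(\{v_{\mcSize}\})$;
\item $\sfmRootedFunction(\{v\})=\sfmFunction(\{\mcRoot,v\})$.
\end{itemize}
Let $v^*\in S$ attain the minimum. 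We obtain
\[
\sfmFunction(S\cup\{\mcRoot\})+\sfmFunction(S)\;\ge\;\sfmFunction(\{v_{\mcSize}\})+\sfmFunction(\{\mcRoot,v^*\}).
\]

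\textbf{Step 4: remove $\mcRoot$ (the main obstacle).} The troublesome terms are the rooted ones, $\sfmFunction(S\cup\{\mcRoot\})$ and $\sfmFunction(\{\mcRoot,v^*\})$. I would eliminate them with one submodularity inequality applied to $S$ and $\{\mcRoot,v^*\}$. Their union is $S\cup\{\mcRoot\}$, and since $v^*\in S$ and $\mcRoot\notin S$, their intersection is $\{v^*\}$. This gives
\[
\sfmFunction(S)+\sfmFunction(\{\mcRoot,v^*\})\;\ge\;\sfmFunction(S\cup\{\mcRoot\})+\sfmFunction(\{v^*\}).
\]
Adding this to the previous display cancels both rooted terms. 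The result is $2\sfmFunction(S)\ge\sfmFunction(\{v_{\mcSize}\})+\sfmFunction(\{v^*\})$, and the right-hand side is at least $\sfmFunction(\{v_{\mcSize}\})+\min_{v\in S}\sfmFunction(\{v\})$. This is the desired inequality.
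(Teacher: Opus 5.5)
Your proof is correct, but it takes a different route from the paper. The paper keeps the full ground set $V$ and shifts $f$ by a modular term, $g(X)=f(X)+c\,\mathbf{1}[r\in X]$ with $c=\min_{v\neq r}f(\{v\})-f(\{r\})$. This makes $r$ a legitimate first element of an MC ordering for $g$, and the later choices are unaffected because every later prefix contains $r$. The paper then applies the Iwata--Konno lemma to $g$ on all of $V$. Using symmetry and $r\notin S$, the terms $g(S)$, $g(V\setminus S)$, $g(V_{n-1})$ and $g(\{v\})$ become $f(S)$, $f(S)+c$, $f(\{v_n\})+c$ and $f(\{v\})$, so $c$ simply cancels. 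You instead contract $r$ into the function, which makes $(v_2,\ldots,v_n)$ an honest MC ordering of $X\mapsto f(X\cup\{r\})$ with no calibration constant. The price is that the lemma's output still carries $r$ in $f(S\cup\{r\})$ and $f(\{r,v^*\})$. You therefore need the extra submodular exchange between $S$ and $\{r,v^*\}$ to remove it, plus a separate (trivial) treatment of $n=2$, where the contracted ground set has one element. In exchange, your route is the standard ``rooted ordering $=$ ordering of the contraction'' reduction. It also passes through the intermediate inequality $f(S)+f(S\cup\{r\})\ge f(\{v_n\})+\min_{v\in S}f(\{r,v\})$, which yields the stated bound after one submodularity step. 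The paper's modular shift is shorter and lands directly on the stated inequality with no case split.
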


\begin{proof}
Let~${c \coloneqq \min_{v\in V\setminus\{\mcRoot\}} \sfmFunction(\{v\}) - \sfmFunction(\{\mcRoot\})}$ and define~${\sfmRootedFunction:2^V\to\R}$ by
\[
\sfmRootedFunction(X) 
\;\coloneqq\;
\sfmFunction(X) + c\,\indicator{\mcRoot\in X}.
\] 
The second term is modular, so~$\sfmRootedFunction$ is submodular. 
We claim that~${(v_1,\ldots,v_{\mcSize})}$ is an MC ordering for~$\sfmRootedFunction$. 
Since~${v_1=\mcRoot}$, we have~${\sfmRootedFunction(\{v_1\})=\sfmRootedFunction(\{\mcRoot\})=\sfmFunction(\{\mcRoot\})+c = \min_{v\in V\setminus\{\mcRoot\}}
\sfmFunction(\{v\})}$. As~${\sfmRootedFunction(\{v\})=\sfmFunction(\{v\})}$ for all~${v\in V\setminus\{\mcRoot\}}$, it follows that
\[
\sfmRootedFunction(\{v_1\})  
\;=\;
\min_{v\in V\setminus\{\mcRoot\}} \sfmRootedFunction(\{v\})
\;=\;
\min_{v\in V}
\sfmRootedFunction(\{v\}),
\] 
where the final step holds as~${v_1=\mcRoot}$. 
Hence~$v_1$ is a valid first element of an MC ordering for~$\sfmRootedFunction$.  

Let~${i\in\{2,\ldots,\mcSize\}}$. Then~${\mcRoot\in\mcPrefix{i-1}}$, so~${\sfmRootedFunction\bigl(\mcPrefix{i-1}\cup\{v\}\bigr) = \sfmFunction\bigl(\mcPrefix{i-1}\cup\{v\}\bigr)+c}$ for all~${v\in V\setminus\mcPrefix{i-1}}$. 
Adding a constant~$c$ to every function value does not change the minimizers, so
\[   
    \argmin_{v\in V\setminus\mcPrefix{i-1}}
    \sfmRootedFunction\bigl(\mcPrefix{i-1}\cup\{v\}\bigr)
    \;=\;
    \argmin_{v\in V\setminus\mcPrefix{i-1}}
    \sfmFunction\bigl(\mcPrefix{i-1}\cup\{v\}\bigr).
\]
Since~${(v_1,\ldots,v_{\mcSize})}$ is an~$\mcRoot$-rooted MC ordering for~$\sfmFunction$, 
\[
v_i 
\;\in\;
\argmin_{v\in V\setminus\mcPrefix{i-1}}
    \sfmRootedFunction\bigl(\mcPrefix{i-1}\cup\{v\}\bigr)
    \qquad\text{for all }~i\in\{2,\ldots,\mcSize\}.
\]
Therefore~${(v_1,\ldots,v_{\mcSize})}$ is an MC ordering for~$\sfmRootedFunction$. 

Now suppose that~${S\subseteq V\setminus\{\mcRoot\}}$ satisfies~${|S\cap\{v_{\mcSize-1},v_{\mcSize}\}|=1}$. By~\Cref{lemma:mc},
\begin{align}
\label{eq:mc-inequality-rooted-function}
\sfmRootedFunction(S)
+
\sfmRootedFunction(V\setminus S)
\;\geq\;
\sfmRootedFunction(\mcPrefix{\mcSize-1})
+
\min_{v\in S}\sfmRootedFunction(\{v\}).
\end{align}
Since~${\mcRoot\notin S}$ and~${\sfmFunction}$ is symmetric, we have~${\sfmRootedFunction(S)=\sfmFunction(S)}$ and~${\sfmRootedFunction(V \setminus S)=\sfmFunction(V \setminus S) + c=\sfmFunction(S) + c}$.
Moreover, as~${\mcRoot\in \mcPrefix{\mcSize-1}=V\setminus\{v_n\}}$, we have~${\sfmRootedFunction(\mcPrefix{\mcSize-1})=\sfmFunction(\mcPrefix{\mcSize-1})+c=\sfmFunction(\{v_n\})+c}$. 
Finally, as~${\mcRoot\notin S}$, we have~${\sfmRootedFunction(\{v\})=\sfmFunction(\{v\})}$ for all~${v\in S}$. 
Substituting these equalities into~\eqref{eq:mc-inequality-rooted-function} gives
\[
2\sfmFunction(S) + c
\;\geq\;
\sfmFunction(\{v_n\})+c
+\min_{v\in S}\sfmFunction(\{v\}),
\]
and the result follows by subtracting~$c$ from both sides. 
\end{proof}

Next, we show how to compute an MC ordering (or an ${\mcRoot}$-rooted MC ordering) using comparisons, without knowing the function values and without making a query involving~$\emptyset$ or~$V$.

\begin{lemma}[Computing an MC ordering with comparisons]
\label{lemma:computing-mc-ordering} 
Let~${\sfmFunction:2^V\to\R}$ be a set function with~${|V|=\mcSize\geq2}$, and let~${\mcRoot\in V}$. 
We can compute an MC ordering for~$\sfmFunction$, or an~${\mcRoot}$-rooted MC ordering, using~${O(\mcSize^2)}$ comparison queries. 
\end{lemma}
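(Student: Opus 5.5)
We build the ordering greedily, one position at a time. At each step we find an argmin of \eqref{eq:mc-ordering-inequality} by a linear scan over the remaining elements, keeping the best candidate seen so far. This uses only pairwise comparisons of the sets $\mcPrefix{i-1}\cup\{v\}$. The only subtleties are the bookkeeping of the count and making sure that no query involves $\emptyset$ or $V$.

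Concretely, suppose $v_1,\ldots,v_{i-1}$ are fixed and let $U\coloneqq V\setminus\mcPrefix{i-1}=\{u_1,\ldots,u_{\mcSize-i+1}\}$. We set $b\coloneqq u_1$. For $j=2,\ldots,\mcSize-i+1$, we compare $\sfmFunction(\mcPrefix{i-1}\cup\{u_j\})$ with $\sfmFunction(\mcPrefix{i-1}\cup\{b\})$ and replace $b$ by $u_j$ whenever the oracle reports that the former is strictly smaller. Then we set $v_i\coloneqq b$. By induction on $j$, $b$ always attains the minimum over $\{u_1,\ldots,u_j\}$. This holds because the comparison relation on real values is a total preorder, so the unknown values are never needed. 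Hence $v_i$ satisfies \eqref{eq:mc-ordering-inequality}. Step $i$ uses $|U|-1=\mcSize-i$ comparisons. In the rooted case we simply set $v_1\coloneqq\mcRoot$ without any query and run the scan only for $i\ge2$.

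Now we check the forbidden queries. Every queried set has the form $\mcPrefix{i-1}\cup\{v\}$, so it is nonempty. It equals $V$ only when $i=\mcSize$. In that step $|U|=1$, so no comparison is made and $v_\mcSize$ is forced. Similarly, for $i=1$ the queried sets are singletons, which are proper subsets since $\mcSize\ge2$.

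The total count is therefore at most $\sum_{i=1}^{\mcSize}(\mcSize-i)=\binom{\mcSize}{2}=O(\mcSize^2)$. In the rooted case it is even smaller, since the $i=1$ step is skipped. The only step needing care is the one just discussed: the forced last step avoids querying $V$, and the scan-based argmin is correct under ties because ties may be broken arbitrarily in \eqref{eq:mc-ordering-inequality}. The running time is clearly polynomial.
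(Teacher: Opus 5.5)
Your proposal is correct and follows essentially the same argument as the paper: a greedy linear scan that keeps the best candidate seen so far, using $n-i$ comparisons at step $i$, with $v_n$ forced and $v_1=r$ fixed in the rooted case, for $O(n^2)$ comparisons in total. Your explicit check that each queried set $V_{i-1}\cup\{v\}$ with $i\le n-1$ is a nonempty proper subset of $V$ matches the paper's observation that no query involves $\emptyset$ or $V$.
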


\begin{proof} 
We construct the ordering one element at a time.
Suppose that~${v_1,\ldots,v_{i-1}}$ have already been chosen. 
We find~${v_i \in \argmin_{v\in V\setminus\mcPrefix{i-1}} \sfmFunction\bigl(\mcPrefix{i-1}\cup\{v\}\bigr)}$ by comparing the values~${\sfmFunction(\mcPrefix{i-1}\cup\{v\})}$ for the remaining elements~${v\in V\setminus\mcPrefix{i-1}}$ one by one, and keeping an element with the smallest value found so far. This uses~${\mcSize-i}$ comparisons.

To construct an MC ordering, we apply this procedure for~${i \in [\mcSize-1]}$, and then take~$v_{\mcSize}$ to be the remaining element. For an~${\mcRoot}$-rooted MC ordering, we instead set~${v_1=\mcRoot}$ and apply the same procedure for~${i\in\{2,\ldots,\mcSize-1\}}$. 
The total number of comparisons is at most~$\sum_{i=1}^{\mcSize-1} (\mcSize-i)=O(\mcSize^2)$, and every comparison is between two nonempty proper subsets of~$V$. 
\end{proof}

\subsection{Minimization over a downward-closed family}\label{subsec:minimization-over-a-downward-closed-family} 
Recall that a family~${\cI\subseteq2^V}$ is \emph{downward closed} if, for all~${S\in\cI}$, every subset of~$S$ is also in~$\cI$. 
Fix a symmetric submodular function~${\sfmFunction:2^V\to\R}$ and a downward-closed family~${\cI\subseteq\downwardClosedDomain}$ with~${|\cI|\geq 2}$. 
We now explain how to find a minimizer of~$f$ over~${\cI\setminus\{\emptyset\}}$ using comparisons. 
At a high level, we adapt~\cite[Algorithm~2]{GoemansSoto} to use the~$\mcRoot$-rooted MC ordering above so each contraction step can be implemented using comparisons; the full algorithm is given in~\Cref{alg:downward-closed-sfm}.

Fix a dummy element~${\sfmDummyElement\notin V}$, and let~${\sfmDummyV\coloneqq V \cup\{\sfmDummyElement\}}$. 
We model contracting the ground set by partitioning~${\sfmDummyV}$ into \emph{blocks}. 
Each block is a nonempty set~${B\subseteq \sfmDummyV}$, and the block is \emph{feasible} if~${B\in\cI}$. 
The algorithm maintains a \emph{root block}~$\sfmRootBlock$ with~${\sfmDummyElement \in \sfmRootBlock}$ and a collection~${\sfmActiveBlocks\subseteq\cI\setminus\{\emptyset\}}$ of \emph{active blocks} that partition~${V\setminus\sfmRootBlock}$.  
The dummy element~$\sfmDummyElement$ ensures that~$\sfmRootBlock$ is always nonempty so that we can use an~$\sfmRootBlock$-rooted MC ordering in each iteration. This is only a bookkeeping device: one could instead omit~$\sfmDummyElement$ and use an ordinary MC~ordering while~${\sfmRootBlock=\emptyset}$, then switch to an~$\sfmRootBlock$-rooted MC ordering as soon as~$\sfmRootBlock$ becomes nonempty.

The collection~${\{\sfmRootBlock\}\cup\sfmActiveBlocks}$ is a partition of~$\sfmDummyV$, and each set in this partition represents an element of the contracted ground set.
Define the \emph{contracted function}~${\sfmContractedFunction:2^{\{\sfmRootBlock\}\cup\sfmActiveBlocks}\to\R}$ by
\begin{align}
\label{eq:sfm-contracted-function-definition}
\sfmContractedFunction(\mathcal S)
\;\coloneqq\;
\sfmFunction\left(\left(\bigcup_{B\in\mathcal S}B\right)\setminus\{\sfmDummyElement\} \right)
\qquad\text{for every }\mathcal S\subseteq\{\sfmRootBlock\}\cup\sfmActiveBlocks.
\end{align} 
Since~$f$ is submodular and symmetric, the contracted function~$\sfmContractedFunction$ is also submodular and symmetric. 
We can simulate a comparison oracle for~$\sfmContractedFunction$ using a comparison oracle for~$\sfmFunction$ by comparing the corresponding unions of blocks after deleting~$\sfmDummyElement$.  
Thus by~\Cref{lemma:computing-mc-ordering}, given~${\sfmRootBlock}$ and~${\sfmActiveBlocks}$, we can compute an~$\sfmRootBlock$-rooted MC ordering for~${\sfmContractedFunction}$ using~${O(\mcSize^2)}$ comparisons to~$\sfmFunction$, without making any comparisons involving~$\emptyset$ or~$V$.

In~\Cref{alg:downward-closed-sfm}, we initially take~$\sfmActiveBlocks$ to be the set of all singletons in~$\cI$. 
There exists at least one such singleton because~$\cI$ is downward closed and has cardinality~${|\cI|\geq 2}$. 
The algorithm maintains a set~${\sfmBestSet\in\cI\setminus\{\emptyset\}}$ that achieves the smallest feasible value of~${\sfmFunction}$ seen so far, and repeatedly contracts the ground set. 

In each iteration, we use the simulation described above to construct an~$\sfmRootBlock$-rooted MC ordering~${(\sfmRootBlock,B_1,\ldots,B_{\sfmBlockCount})}$ for the contracted function~${\sfmContractedFunction}$, where~${\sfmBlockCount=|\sfmActiveBlocks|}$. 
If~${\sfmMergedBlock\in\cI}$, then we \emph{contract} these two blocks together by replacing them by their union. If~${\sfmFunction(\sfmMergedBlock)<\sfmFunction(\sfmBestSet)}$, then we also update~${\sfmBestSet}$ to~${\sfmMergedBlock}$; this requirement for a strict improvement ensures that the final set~$\sfmBestSet$ is inclusionwise minimal. 
Otherwise, if~${\sfmMergedBlock\notin\cI}$, then we discard the blocks by moving them into~$\sfmRootBlock$. 
We prove that the algorithm satisfies the properties of~\Cref{thm:intro-constrained} below.

\begin{algorithm}[H]%[t]
\caption{Symmetric SFM over a downward-closed family}\label{alg:downward-closed-sfm} 

\algrenewcommand\algorithmicensure{\textbf{Returns:}}

\begin{algorithmic}[1]
\addtolength{\itemsep}{0.05ex} % Slightly increase line spacing

\Require A comparison oracle for a symmetric submodular function~${\sfmFunction:2^V\to\R}$, and a membership oracle for a downward-closed family~${\cI\subseteq\downwardClosedDomain}$ with~${|\cI|\geq 2}$.

\Ensure An inclusionwise-minimal minimizer~$\sfmBestSet$ of~$\sfmFunction$ over~${\cI\setminus\{\emptyset\}}$.

\State Initialize the active blocks~${\sfmActiveBlocks\gets\bigl\{\{v\}\setSeparator v\in V\text{ and }\{v\}\in\cI\bigr\}}$\label{line:sfm-initialize-active}

\State Fix a dummy element~${\sfmDummyElement\notin V}$, and initialize the root block~${\sfmRootBlock\gets\{\sfmDummyElement\}\cup\left(V\setminus\bigcup_{A\in\sfmActiveBlocks}A\right)}$

\State Choose~${\sfmBestSet\in\argmin_{A\in\sfmActiveBlocks}\sfmFunction(A)}$\label{line:sfm-initialize-best}

\While{${\sfmBlockCount\coloneqq |\sfmActiveBlocks|\geq 2}$}

  \State Construct an~${\sfmRootBlock}$-rooted MC ordering~${(\sfmRootBlock,B_1,\ldots,B_{\sfmBlockCount})}$ for the contracted function~$\sfmContractedFunction$\label{line:sfm-mc-ordering}
  
  \State ${\sfmActiveBlocks\gets\sfmActiveBlocks\setminus\{\sfmPenultimateBlock,\sfmLastBlock\}}$\label{line:remove-blocks-from-active-blocks}
  
  \If{${\sfmMergedBlock\in\cI}$}\label{line:sfm-membership}
  
    \State ${\sfmActiveBlocks\gets\sfmActiveBlocks\cup\{\sfmMergedBlock\}}$\label{line:sfm-merge} \Comment{Contract~$\sfmPenultimateBlock$ and~$\sfmLastBlock$ in~$\sfmActiveBlocks$}
    
    \If{${\sfmFunction(\sfmMergedBlock)<\sfmFunction(\sfmBestSet)}$}\label{line:sfm-improvement}
    
      \State ${\sfmBestSet\gets\sfmMergedBlock}$\label{line:sfm-update-best} \Comment{Update the best candidate minimizer}
      
    \EndIf
  \Else
    \State ${\sfmRootBlock\gets\sfmRootBlock\cup\sfmMergedBlock}$\label{line:sfm-discard} \Comment{If~${\sfmMergedBlock\notin\cI}$, then discard the blocks}
  \EndIf
\EndWhile
\State \Return $\sfmBestSet$
\end{algorithmic} 
\end{algorithm}

At the start of each iteration of the while loop in~\Cref{alg:downward-closed-sfm}, and when the algorithm terminates,~${\{\sfmRootBlock\}\cup\sfmActiveBlocks}$ is a partition of~$\sfmDummyV$.
Therefore the contracted function~$\sfmContractedFunction$ used in each iteration is symmetric and submodular.

Also, at the start of each iteration and when the algorithm terminates, we have~${\sfmActiveBlocks\subseteq\cI\setminus\{\emptyset\}}$.
Since~$\sfmBestSet$ is initially chosen from~$\sfmActiveBlocks$ and every subsequent assignment~${\sfmBestSet\gets\sfmMergedBlock}$ requires that~${\sfmMergedBlock\in\cI}$, the returned set also satisfies~${\sfmBestSet\in\cI\setminus\{\emptyset\}}$. 
We prove below that this set is an inclusionwise-minimal minimizer over~${\cI\setminus\{\emptyset\}}$.

\begin{lemma}[The stored set~$\sfmBestSet$ is no worse than any active block] 
\label{lemma:stored-set}
At the start of each iteration of the while loop in~\Cref{alg:downward-closed-sfm}, and when the algorithm terminates,~${\sfmFunction(\sfmBestSet)\leq\sfmFunction(A)}$ for every~${A\in\sfmActiveBlocks}$.  
\end{lemma}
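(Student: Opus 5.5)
We argue by induction on the number of iterations of the while loop, showing that the invariant ${\sfmFunction(\sfmBestSet)\leq\sfmFunction(A)}$ for all ${A\in\sfmActiveBlocks}$ holds initially and is preserved by each iteration. Since termination happens at the start of an iteration whose loop test fails, the invariant at termination then follows from the invariant holding at the start of that (nonexecuted) iteration.

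For the base case, \cref{line:sfm-initialize-best} chooses ${\sfmBestSet\in\argmin_{A\in\sfmActiveBlocks}\sfmFunction(A)}$, which gives ${\sfmFunction(\sfmBestSet)\leq\sfmFunction(A)}$ for every initial active block. Here $\sfmActiveBlocks$ is nonempty because $\cI$ is downward closed with ${|\cI|\ge 2}$, so the argmin is well defined.

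For the inductive step, fix an iteration and let $\sfmActiveBlocks$ and $\sfmBestSet$ be the values at its start, so the hypothesis gives ${\sfmFunction(\sfmBestSet)\le\sfmFunction(A)}$ for all ${A\in\sfmActiveBlocks}$. The new active family is ${\sfmActiveBlocks\setminus\{\sfmPenultimateBlock,\sfmLastBlock\}}$, possibly together with ${\sfmMergedBlock}$. We track how $\sfmBestSet$ changes in each case.

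\emph{Case 1: ${\sfmMergedBlock\notin\cI}$.} Then $\sfmBestSet$ is unchanged and the new active family is a subset of the old one, so the inequality holds trivially.

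\emph{Case 2: ${\sfmMergedBlock\in\cI}$ with ${\sfmFunction(\sfmMergedBlock)<\sfmFunction(\sfmBestSet)}$.} Then $\sfmBestSet$ becomes $\sfmMergedBlock$. For the merged block itself we have equality. For every other surviving block $A$, we get ${\sfmFunction(A)\ge\sfmFunction(\text{old }\sfmBestSet)>\sfmFunction(\sfmMergedBlock)}$.

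\emph{Case 3: ${\sfmMergedBlock\in\cI}$ with ${\sfmFunction(\sfmMergedBlock)\ge\sfmFunction(\sfmBestSet)}$.} Then $\sfmBestSet$ is unchanged. The only new active block satisfies the inequality by the failed test, and the old survivors satisfy it by the hypothesis.

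There is no genuine obstacle here; the lemma is bookkeeping. The one point needing care is that the comparison in \cref{line:sfm-improvement} is a legitimate oracle query between feasible nonempty sets. It does not involve $\emptyset$ or $V$, since both blocks lie in ${\cI\subseteq\downwardClosedDomain}$. The real content, that some minimizer is never lost when blocks are discarded or merged, belongs to the subsequent correctness lemma using \cref{cor:rooted-mc}, not to this invariant.
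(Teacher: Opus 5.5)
Your proof is correct and follows the same induction as the paper. The base case comes from the initial argmin choice. The inductive step uses two facts: the stored value $f(T)$ never increases, and the strict-improvement test handles the newly merged block. Your three-case split is a finer-grained version of the paper's argument, and your side remarks are accurate (the initial active family is nonempty, and the comparison involves only nonempty sets in the family, so it is a legitimate oracle query).
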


\begin{proof} 
Initially, we choose~${\sfmBestSet\in\argmin_{A\in\sfmActiveBlocks}\sfmFunction(A)}$, so the claim holds.
Suppose that the claim holds at the start of an iteration of the while loop.
During this iteration,~${\sfmFunction(\sfmBestSet)}$ does not increase, as we update~$\sfmBestSet$ only if the strict inequality in~\Cref{line:sfm-improvement} holds.
Thus, after the iteration, we still have~${\sfmFunction(\sfmBestSet)\leq\sfmFunction(A)}$ for every~$A$ that belonged to~${\sfmActiveBlocks\setminus\{\sfmPenultimateBlock,\sfmLastBlock\}}$ at the start of the iteration.

If~${\sfmMergedBlock\in\cI}$, then we add~$\sfmMergedBlock$ to~$\sfmActiveBlocks$, and the comparison in~\Cref{line:sfm-improvement} ensures that~${\sfmFunction(\sfmBestSet)\leq\sfmFunction(\sfmMergedBlock)}$ at the end of the iteration.
Otherwise, no set is added to~$\sfmActiveBlocks$.
Therefore the claim holds after the iteration in either case.
\end{proof}

\begin{lemma}[Sets better than~$\sfmBestSet$ are unions of active blocks] 
\label{lemma:downward-invariant} 
At the start of each iteration of the while loop in~\Cref{alg:downward-closed-sfm}, and when the algorithm terminates, every set~${S\in\cI\setminus\{\emptyset\}}$ with~${\sfmFunction(S)<\sfmFunction(\sfmBestSet)}$ is a union of sets in~$\sfmActiveBlocks$.
\end{lemma}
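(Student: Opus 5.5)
We will prove the invariant by induction over the iterations of the while loop. At initialization, $\sfmActiveBlocks$ consists of all feasible singletons. Any $S\in\cI\setminus\{\emptyset\}$ has every element $v\in S$ satisfying $\{v\}\in\cI$, by downward closure. Hence every such $S$ is a union of initial active blocks, and the claim holds trivially at the start.

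For the inductive step, suppose the claim holds at the start of an iteration with current $\sfmBestSet$, $\sfmRootBlock$, $\sfmActiveBlocks$ and ordering $(\sfmRootBlock,B_1,\ldots,B_{\sfmBlockCount})$. Let $\sfmBestSet'$ be the stored set at the end of the iteration. Since $\sfmFunction(\sfmBestSet')\le\sfmFunction(\sfmBestSet)$, any $S\in\cI\setminus\{\emptyset\}$ with $\sfmFunction(S)<\sfmFunction(\sfmBestSet')$ also satisfies $\sfmFunction(S)<\sfmFunction(\sfmBestSet)$. By the inductive hypothesis, $S$ is a union of blocks in the old $\sfmActiveBlocks$.

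Write $S=\bigcup_{B\in\mathcal S}B$ with $\mathcal S\subseteq\sfmActiveBlocks$, so $\mathcal S$ does not contain $\sfmRootBlock$. The new active blocks differ from the old ones only in $\sfmPenultimateBlock$ and $\sfmLastBlock$. It therefore suffices to show that $\mathcal S$ contains both or neither of these blocks, and, in the discard case, that it contains neither. Both claims follow once we rule out $|\mathcal S\cap\{\sfmPenultimateBlock,\sfmLastBlock\}|=1$, together with a separate argument for the case where it contains both.

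\textbf{Case 1:} $\mathcal S$ contains exactly one of the two final blocks. Apply \Cref{cor:rooted-mc} to the symmetric submodular contracted function $\sfmContractedFunction$ with root $\sfmRootBlock$ and the set $\mathcal S$. This gives
\[
2\sfmFunction(S)\;\ge\;\sfmFunction(\sfmLastBlock)+\min_{B\in\mathcal S}\sfmFunction(B).
\]
Here $\sfmContractedFunction$ agrees with $\sfmFunction$ on unions of active blocks, since $\sfmDummyElement$ lies only in $\sfmRootBlock$. By \Cref{lemma:stored-set}, both terms on the right are at least $\sfmFunction(\sfmBestSet)$, because $\sfmLastBlock$ and every $B\in\mathcal S$ are active. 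Hence $\sfmFunction(S)\ge\sfmFunction(\sfmBestSet)$, a contradiction.

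\textbf{Case 2:} $\mathcal S$ contains both final blocks. Then $\sfmMergedBlock\subseteq S\in\cI$, so by downward closure the merge test succeeds and the two blocks are contracted. Consequently $S$ remains a union of the new active blocks. Only in the discard case could $S$ fail to be such a union, and there $\mathcal S$ must contain neither final block, so $S$ is again a union of surviving blocks.

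The termination statement follows because the invariant holds after the last iteration. The main obstacle is checking that \Cref{cor:rooted-mc} applies verbatim to $\sfmContractedFunction$. This requires that $\mathcal S$ avoids the root element, that the ground set of $\sfmContractedFunction$ has size $\sfmBlockCount+1\ge 3\ge2$, and that the corollary's right-hand side, stated in terms of singletons of the contracted ground set, translates into values of $\sfmFunction$ on active blocks, which \Cref{lemma:stored-set} then bounds below by $\sfmFunction(\sfmBestSet)$.
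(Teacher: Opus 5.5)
Your proof is correct and follows the paper's argument essentially step for step. Both arguments run by induction from the initial feasible singletons. Both rule out $|\mathcal S\cap\{\sfmPenultimateBlock,\sfmLastBlock\}|=1$ by applying \Cref{cor:rooted-mc} to $\sfmContractedFunction$ and bounding both right-hand terms below by $\sfmFunction(\sfmBestSet)$ via \Cref{lemma:stored-set}. Both then use downward closure to force the merge when both final blocks lie in $S$, with the case where $\mathcal S$ avoids both final blocks being immediate.
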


\begin{proof} 
We prove the claim by induction on the iterations.
Before the first iteration, we initialize~$\sfmActiveBlocks$ to contain every singleton in~$\cI$.
Therefore, as~$\cI$ is downward closed, every~${S\in\cI\setminus\{\emptyset\}}$ is a union of singletons in~$\sfmActiveBlocks$, and the claim holds. 

Suppose that the claim holds at the start of an iteration, and let~$\sfmBestSet$ be the currently stored set, $\sfmActiveBlocks$ the collection of active blocks, and~$\sfmRootBlock$ the root block. 
Let~$\sfmBestSet'$ be the stored set at the end of the iteration.  
Fix~${S\in\cI\setminus\{\emptyset\}}$ with~${\sfmFunction(S)<\sfmFunction(\sfmBestSet')}$.
Since the value of the stored set does not increase during the iteration,~${\sfmFunction(S)< \sfmFunction(\sfmBestSet')\leq\sfmFunction(\sfmBestSet)}$.  

Let~${\mathcal S\coloneqq\{A\in\sfmActiveBlocks\setSeparator A\subseteq S\}}$ be the set of blocks contained in~$S$. 
Then~${\sfmRootBlock\notin\mathcal S}$ and, by the inductive hypothesis,~$S$ is a union of sets in~$\sfmActiveBlocks$, so~${S=\bigcup_{A\in\mathcal S}A}$.  
Let~${(\sfmRootBlock,B_1,\ldots,B_{\sfmBlockCount})}$ be the~$\sfmRootBlock$-rooted MC ordering computed in~\Cref{line:sfm-mc-ordering}, and suppose for contradiction that~${|\mathcal S\cap\{\sfmPenultimateBlock,\sfmLastBlock\}|=1}$.
Then applying~\Cref{cor:rooted-mc} to~$\sfmContractedFunction$ and~$\mathcal S$ gives
\[
2\sfmContractedFunction(\mathcal S)
\;\geq\;
\sfmContractedFunction(\{\sfmLastBlock\})
+\min_{A\in\mathcal S}\sfmContractedFunction(\{A\}).
\]
Thus, by the definition of~$\sfmContractedFunction$ in~\eqref{eq:sfm-contracted-function-definition}, we have~${2\sfmFunction(S) \geq \sfmFunction(\sfmLastBlock) +\min_{A\in\mathcal S}\sfmFunction(A) \geq 2\sfmFunction(\sfmBestSet)}$, where the second inequality follows by~\Cref{lemma:stored-set}.
This contradicts our assumption that~${\sfmFunction(S)<\sfmFunction(\sfmBestSet)}$, so it follows that~${|\mathcal S\cap\{\sfmPenultimateBlock,\sfmLastBlock\}|\neq 1}$.

Now let~$\sfmActiveBlocks'$ be the collection of active blocks at the end of the iteration.
First, suppose that~${|\mathcal S\cap\{\sfmPenultimateBlock,\sfmLastBlock\}|=0}$.
Since the iteration removes only~$\sfmPenultimateBlock$ and~$\sfmLastBlock$ from~$\sfmActiveBlocks$, we have~${\mathcal S\subseteq\sfmActiveBlocks'}$, so~$S$ is a union of sets in~$\sfmActiveBlocks'$. 
Otherwise, suppose that~${|\mathcal S\cap\{\sfmPenultimateBlock,\sfmLastBlock\}|=2}$.
Then~${\sfmMergedBlock\subseteq S}$.
Since~${S\in\cI}$ and~$\cI$ is downward closed, we have~${\sfmMergedBlock\in\cI}$.
Hence~${\sfmMergedBlock\in\sfmActiveBlocks'}$ by~\Cref{line:sfm-merge}.
We also have~${\mathcal S\setminus\{\sfmPenultimateBlock,\sfmLastBlock\}\subseteq\sfmActiveBlocks'}$, since no other blocks are removed. 
Thus~$S$ is a union of sets in~$\sfmActiveBlocks'$.
The claim therefore holds at the end of the iteration, completing the induction.
\end{proof}

\DownwardClosedSFM*

\begin{proof} 
Each iteration of~\Cref{alg:downward-closed-sfm} decreases~${|\sfmActiveBlocks|}$ by at least one.
Since initially~${|\sfmActiveBlocks|\leq n}$, the algorithm terminates after at most~${n-1}$ iterations.
Let~$\sfmBestSet'$ be the returned set and~$\sfmActiveBlocks'$ the collection of active blocks at termination.
Then~${|\sfmActiveBlocks'|\leq1}$ and~${\sfmBestSet'\in\cI\setminus\{\emptyset\}}$. 
Suppose for contradiction that some~${S\in\cI\setminus\{\emptyset\}}$ satisfies~${\sfmFunction(S)<\sfmFunction(\sfmBestSet')}$.
By~\Cref{lemma:downward-invariant},~$S$ is a union of sets in~$\sfmActiveBlocks'$.
Since~${S\neq\emptyset}$ and~${|\sfmActiveBlocks'|\leq1}$, we have~${\sfmActiveBlocks'=\{S\}}$.
But~\Cref{lemma:stored-set} gives~${\sfmFunction(\sfmBestSet')\leq\sfmFunction(S)}$, a contradiction.
Therefore~$\sfmBestSet'$ minimizes~$\sfmFunction$ over~${\cI\setminus\{\emptyset\}}$.

We next prove that~$\sfmBestSet'$ is inclusionwise minimal.
If~${|\sfmBestSet'|=1}$, then the claim holds.
Otherwise, since the initially stored set is a singleton, consider the last iteration that updates the stored set.
Let~$\sfmBestSet$ be the stored set and~$\sfmActiveBlocks$ the collection of active blocks at the start of this iteration, and let~$\sfmPenultimateBlock$ and~$\sfmLastBlock$ be the two blocks merged in this iteration.
Then~${\sfmBestSet'=\sfmMergedBlock}$ and~${\sfmFunction(\sfmBestSet')<\sfmFunction(\sfmBestSet)}$ by~\Cref{line:sfm-improvement,line:sfm-update-best}. 
Suppose for contradiction that some nonempty~${X\subsetneq\sfmBestSet'}$ satisfies~${\sfmFunction(X)=\sfmFunction(\sfmBestSet')}$.
Since~${\sfmBestSet'\in\cI}$ and~$\cI$ is downward closed, we have~${X\in\cI}$.
As~${\sfmFunction(X)=\sfmFunction(\sfmBestSet')<\sfmFunction(\sfmBestSet)}$, it follows by~\Cref{lemma:downward-invariant} that~$X$ is a union of sets in~$\sfmActiveBlocks$.
Since~${\sfmBestSet'=\sfmMergedBlock}$ and the sets in~$\sfmActiveBlocks$ are pairwise disjoint, only~$\sfmPenultimateBlock$ and~$\sfmLastBlock$ can appear in this union, so it follows that~${X\in\{\sfmPenultimateBlock,\sfmLastBlock\}}$.
Applying~\Cref{lemma:stored-set} at the start of this iteration gives~${\sfmFunction(\sfmBestSet)\leq\sfmFunction(X)}$, a contradiction.
Thus~$\sfmBestSet'$ is inclusionwise minimal.

For the query complexity, the initialization before the while loop uses~${O(n)}$ comparisons and~${O(n)}$ membership queries. 
Each of the~${O(n)}$ iterations computes a rooted MC ordering on at most~${n+1}$ blocks using~${O(n^2)}$ comparisons by~\Cref{lemma:computing-mc-ordering}.  
It also makes one membership query and at most one further comparison. 
Thus the algorithm uses~${O(n^3)}$ comparisons and~${O(n)}$ membership queries in total, and all operations take polynomial time. 
\end{proof}

Setting~${\cI\coloneqq2^V\setminus\{V\}}$ in~\Cref{thm:intro-constrained} gives the following result.

\SymmetricSFM*

The~${O(n^3)}$ comparison bound in~\Cref{thm:intro-sfm} matches the best-known deterministic value-oracle bound for symmetric~SFM~\cite{Queyranne,IwataKonno}.
Since two value-oracle queries simulate one comparison, an~${o(n^3)}$-query comparison algorithm for symmetric~SFM would also give an~${o(n^3)}$-query value-oracle algorithm. 
In the next section, we consider the special case of weighted minimum cut.

%%%%%%%%%%%%%%%%%%%
\section{Weighted minimum-cut upper bounds}
\label{sec:bounded}
%%%%%%%%%%%%%%%%%%%

The cut function of an undirected graph with nonnegative edge weights is symmetric and submodular. 
We can therefore use~\Cref{thm:intro-sfm} to find a minimum cut in polynomial time using~${O(n^3)}$ comparisons, even when the edge weights are unbounded. 
We now show how to improve this bound for graphs whose edge weights are bounded integers.
 
Throughout this section, let~${G=(V,w)}$ be an undirected graph on~$n$ vertices with edge weights in~${\weightedEdgeWeights}$, where~$\B$ is a known positive integer. 
An edge that is not present in the graph has weight~$0$.
Let~$m$ be the number of positive-weight edges in~$G$, and let~${\totalWeight\coloneqq\sum_e\edgeWeightE{e}}$ be their total weight. 
The parameters~$m$ and~$\totalWeight$ are used only in the analysis and need not be known by the algorithm.

In~\Cref{sec:online}, we show how to compute a minimum cut using~${ O\!\left((n^2+\totalWeight)\log(\totalWeight+2)\right)}$ comparisons. Since the total edge weight in~$G$ is at most~${m\B}$, this gives an upper bound of~${\softO(n^2+m\B)}$ comparisons. 
In~\Cref{sec:support}, we use an expander decomposition of~$G$ to create a contracted graph with total edge weight~${\softO(n\B^2)}$ while preserving the nonsingleton minimum cuts of~$G$. 
Applying the algorithm from~\Cref{sec:online} to the contracted graph gives an upper bound of~${\softO(n^2+n\B^2)}$ comparisons, which is stronger when~$G$ is dense and~${\B=o(n)}$. 

Combining both bounds proves~\Cref{thm:bounded-weight-min-cut-upper-bound}: we can compute a weighted minimum cut using
\[
\weightedMinCutUpperBound
\]
comparisons and polynomial time.

\subsection{Weighted minimum cut in graphs with bounded total weight}
\label{sec:online} 
The following algorithm is an online form of the Global Subspace Learning~(GSL) algorithm of \mbox{Cohen-Addad}~et~al.~\cite[Theorem~4.2]{comparison}. 
We first prove the result for a general linear function, then specialize to weighted cut functions below.

Let~${\gslDimension\in\N_{\geq 1}}$, and let~${\cP\subseteq\{0,1\}^{\gslDimension}}$ be nonempty. 
Suppose that we are given comparison-query access to a function~${f:\cP\to\R}$ defined by~${f(x)\coloneqq\langle \gslWeight,x\rangle}$, where~${\gslWeight\in\R^{\gslDimension}}$ is unknown and~${\innerProd{\gslWeight}{x}=\gslWeight^\top x}$ is the standard inner product. 
For any vectors~${x,y\in\cP}$, if~${f(x)=f(y)}$, then~${\innerProd{\gslWeight}{x-y}=0}$. 
Thus, if a comparison query returns equality, then the difference of the queried points is orthogonal to~$\gslWeight$. 

The~GSL algorithm uses this observation to sort the vectors~${x\in\cP}$ in increasing order of the values~${f(x)}$ using only~${O((\gslDimension+\gslDistinctValues)\log( \gslDistinctValues+1))}$ comparisons, where~$\gslDistinctValues$ is the number of distinct values taken by~$f$ on~$\cP$~\cite[Theorem~4.2]{comparison}.  
However, the implementation in~\cite[Section~4.1]{comparison} processes the entire set~$\cP$, which can take time exponential in~$\gslDimension$.  

In the next result, we adapt this subspace-learning approach to \emph{simulate another algorithm}. 
The key modification is that the simulation only processes vectors that are actually queried by the algorithm. 
Hence, if the original algorithm makes~$\poly(\gslDimension)$ queries, then the simulation increases the running time by at most a polynomial in~$\gslDimension$. 
In~\Cref{cor:weighted-min-cut-total-weight}, we use this result to solve weighted minimum cut in polynomial time by simulating the~\mbox{$O(n^3)$-comparison} algorithm from~\Cref{subsec:minimization-over-a-downward-closed-family}.

\begin{theorem}[Simulating comparison-query access to a linear function]
\label{thm:online}
Let~${\gslDimension\in\N_{\geq 1}}$, and let~${\cP\subseteq\{0,1\}^{\gslDimension}}$ be nonempty. 
Let~${\gslWeight\in\R^{\gslDimension}}$ be unknown, and define~${f:\cP\to\R}$ by~${f(x)\coloneqq\langle \gslWeight,x\rangle}$. 
Let~${\gslDistinctValues\coloneqq\bigl|\{f(x):x\in\cP\}\bigr|}$ be the number of distinct values taken by~$f$ on~$\cP$.

Suppose that an adaptive algorithm~$\alg$ has comparison-query access to~$f$.
Then we can simulate~$\alg$ using
\[
  O\bigl(  (\gslDimension+\gslDistinctValues)\log(\gslDistinctValues+1) \bigr)
\]
comparisons. 
The simulation does not need to know~$\gslDistinctValues$. 
If~$\alg$ makes~$q$ comparisons when run on~$f$, then the simulation requires~${\poly(\gslDimension,q)}$ additional time beyond the running time of~$\alg$.
\end{theorem}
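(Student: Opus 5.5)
We will simulate $\alg$ by maintaining two data structures that are updated only when $\alg$ queries new vectors. The first is a list of \emph{representatives} $r_1,\ldots,r_t\in\cP$, kept sorted so that $f(r_1)<f(r_2)<\cdots<f(r_t)$. The second is a subspace $\gslSubspace\subseteq\gslWeight^\perp$, stored as a basis, spanned by differences $x-y$ of vectors that some comparison has shown to be equal. The invariant is that every vector seen so far is assigned to exactly one representative with the same $f$-value, and that the representatives have pairwise distinct values. Given this invariant, any query of $\alg$ on two already-processed vectors $x,y$ can be answered with no oracle call: find their representatives $r_a,r_b$ and return the comparison of the indices $a$ and $b$. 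So it suffices to process each newly queried vector $x$ once, which we call \emph{inserting} $x$.

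To insert $x$, we first check by linear algebra whether $x-r_j\in\gslSubspace$ for some $j$. If so, then $f(x)=f(r_j)$, and we assign $x$ to $r_j$ without any query. Such an $r_j$ is unique, because $r_j-r_{j'}\in\gslSubspace$ would force $f(r_j)=f(r_{j'})$, contradicting distinctness.

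Otherwise, we binary search for the position of $f(x)$ in the sorted list, comparing $x$ against representatives. Each comparison either returns strict and halves the interval, or returns equality with some $r_j$. In the equality case we assign $x$ to $r_j$ and add $x-r_j$ to $\gslSubspace$. This strictly increases $\dim\gslSubspace$, since we had checked beforehand that $x-r_j\notin\gslSubspace$. If the binary search ends with no equality, then $f(x)$ is a new value, and we insert $x$ into the list as a new representative.

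For the accounting, each insertion costs at most $O(\log(t+1))\le O(\log(\gslDistinctValues+1))$ comparisons, because $t\le\gslDistinctValues$ always holds. An insertion that spends comparisons has one of two outcomes. Either it creates a new representative, which happens at most $\gslDistinctValues$ times, or it raises $\dim\gslSubspace$, which happens at most $\gslDimension$ times since $\gslSubspace\subseteq\R^{\gslDimension}$. Insertions resolved by the membership test are free. The total is therefore $O((\gslDimension+\gslDistinctValues)\log(\gslDistinctValues+1))$ comparisons, and the argument never uses knowledge of $\gslDistinctValues$. For running time, the simulation handles at most $2q$ distinct vectors. Each membership test is Gaussian elimination over $\mathbb{Q}$ on integer vectors of dimension $\gslDimension$ against at most $t\le 2q$ representatives, so it takes $\poly(\gslDimension,q)$ time; storing $\gslSubspace$ in reduced echelon form keeps bit sizes polynomial.

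The main obstacle is correctness of the free answers, namely that the representative assignment never errs. Equalities are certified by exact oracle answers or by membership in $\gslSubspace\subseteq\gslWeight^\perp$. The strict order among representatives is certified directly by comparisons, and strict order is transitive. Hence every simulated answer equals the true oracle answer, and $\alg$'s transcript is reproduced exactly.
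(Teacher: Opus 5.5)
Your proposal is correct and follows essentially the same argument as the paper. Both keep representatives sorted by value and a subspace $\mathcal{L}\subseteq\beta^\perp$ of certified differences, run a free membership test before any binary search, and charge each binary search to either a new representative or a dimension increase of $\mathcal{L}$. Your additional details (uniqueness of the matching representative, answering queries on already-processed vectors from representative indices, and the transitivity argument for correctness) make explicit what the paper leaves implicit.
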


\begin{proof}
For each distinct value of~$f$ that we have discovered, store one representative point attaining that value. Label these representatives~${r_1,r_2,r_3,\ldots}$, ordered so that~${f(r_1)<f(r_2)<f(r_3)<\ldots}$. 
Since~$f$ takes at most~$\gslDistinctValues$ distinct values, there are at most~$\gslDistinctValues$ such representatives. 
We also maintain a subspace~${\gslSubspace\subseteq\gslWeight^\perp}$ orthogonal to~$\gslWeight$. 

The simulation works as follows. Initially,~${\gslSubspace=\{0\}}$. 
When a point~${x\in\cP}$ first appears in a comparison query, we check whether~${x-r_i\in \gslSubspace}$ for some~$i$. If~${x-r_i\in \gslSubspace}$, then~${f(x)=f(r_i)}$, so the simulation knows that this equality holds without making a comparison query.

Otherwise, we use binary search to determine where~$f(x)$ lies among the ordered values~${f(r_1),f(r_2),f(r_3),\ldots}$, using~${O(\log(\gslDistinctValues+1))}$ comparison queries.
If~$f(x)$ differs from every previously discovered value, then we insert~$x$ into the list of representatives in the appropriate position.
Otherwise,~${f(x)=f(r_i)}$ for some~$i$, so~${x-r_i\in\gslWeight^\perp}$.  
We therefore replace~$\gslSubspace$ by the span of~${\gslSubspace\cup\{x-r_i\}}$.
Since we already checked that~${x-r_i\notin \gslSubspace}$, this increases~${\dim \gslSubspace}$ by one.

Thus, whenever a binary search is required, either we add a new representative or increase~${\dim \gslSubspace}$ by one. We can add at most~$\gslDistinctValues$ representatives, and~${\dim \gslSubspace\le\gslDimension}$, so binary search is required for at most~${\gslDimension+\gslDistinctValues}$ points. Since each binary search uses~${O(\log(\gslDistinctValues+1))}$ comparison queries, the total number of comparison queries is~${O\!\left((\gslDimension+\gslDistinctValues)\log(\gslDistinctValues+1)\right)}$.

It remains to bound the additional running time used by the simulation.  
Every vector added to~$\gslSubspace$ is the difference of two points in~${\{0,1\}^{\gslDimension}}$, so it lies in ${\{-1,0,1\}^{\gslDimension}}$.
We can therefore maintain~$\gslSubspace$ and test membership in~$\gslSubspace$ in time polynomial in~$\gslDimension$ and in the number of distinct queried points.   
\end{proof}

We now use~\Cref{thm:online} to solve weighted minimum cut by simulating the symmetric~SFM algorithm from~\Cref{sec:symmetric}.

\begin{corollary}[Total-weight bound for minimum cut]
\label{cor:weighted-min-cut-total-weight}
Let~${G=(V,w)}$ be an undirected graph on~${n\geq 2}$ vertices with nonnegative integer edge weights, and let~$\totalWeight$ be its total edge weight. 
There is a deterministic polynomial-time algorithm that finds a minimum cut of~$G$ using
\[
 O\!\left((n^2+\totalWeight)\log(\totalWeight+2)\right)
\]
cut comparisons. The algorithm does not need to know~$\totalWeight$.
\end{corollary}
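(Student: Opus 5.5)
We obtain the corollary by applying \Cref{thm:online} to the cut function and letting it simulate the algorithm of \Cref{thm:intro-sfm}, with $\cI\coloneqq2^V\setminus\{V\}$. Set $\gslDimension\coloneqq\binom{n}{2}$ and let $\gslWeight\in\R^{\gslDimension}$ be the vector of edge weights indexed by vertex pairs. Let $\cP\coloneqq\{\cutIndicator{S}: \emptyset\neq S\subsetneq V\}\subseteq\{0,1\}^{\gslDimension}$, where $\cutIndicator{S}$ indicates the pairs crossing $S$. Then $\cutGraphSet{G}{S}=\innerProd{\gslWeight}{\cutIndicator{S}}$, so comparisons between nontrivial cuts are exactly comparisons of the linear function $x\mapsto\innerProd{\gslWeight}{x}$ on $\cP$. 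The vector $\cutIndicator{S}$ can be computed from $S$ in $O(n^2)$ time. Consequently, each query $(S,T)$ issued by the SFM algorithm is translated into a query on $(\cutIndicator{S},\cutIndicator{T})$ and answered by the simulation. We use $\cutIndicator{S}=\cutIndicator{V\setminus S}$, which is harmless, since $S$ and $V\setminus S$ have equal cut value.

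Next we bound the number $\gslDistinctValues$ of distinct values. The weights are nonnegative integers, so every cut value is an integer in $\{0,1,\ldots,\totalWeight\}$, and therefore $\gslDistinctValues\le\totalWeight+1$. \Cref{thm:online} then gives
\[
O\bigl((\gslDimension+\gslDistinctValues)\log(\gslDistinctValues+1)\bigr)=O\bigl((n^2+\totalWeight)\log(\totalWeight+2)\bigr)
\]
comparisons. The simulation does not need to know $\gslDistinctValues$, so the algorithm does not need $\totalWeight$.

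For the running time, the simulated algorithm makes $q=O(n^3)$ comparisons and runs in polynomial time by \Cref{thm:intro-sfm}. By \Cref{thm:online}, the simulation adds only $\poly(\gslDimension,q)=\poly(n)$ time. The output of the simulated algorithm is a nontrivial minimizer of $\cutGraph{G}$, which is a minimum cut. All queries forwarded to the real oracle are of the form $\cutIndicator{S}$ for nonempty proper $S$, because the SFM algorithm never queries $\emptyset$ or $V$, and the binary searches compare only against stored representatives, which are themselves such points. Hence the no-trivial-query convention is respected.

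The main point requiring care is that the simulation must faithfully return the same answers as the true oracle, including equalities inferred from $\gslSubspace$ without any query. This holds because $\gslSubspace\subseteq\gslWeight^\perp$: whenever $x-r_i\in\gslSubspace$, the values $f(x)$ and $f(r_i)$ are genuinely equal. Once each queried point is placed relative to the ordered representatives, every comparison between two queried points is determined correctly. Everything else reduces to invoking the earlier results.
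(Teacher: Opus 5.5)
Your proposal is correct and follows the paper's proof: you apply \Cref{thm:online} with $N=\binom{n}{2}$ and cut-indicator vectors to simulate the $O(n^3)$-comparison algorithm of \Cref{thm:intro-sfm}, bound $R\le W+1$ because cut values are integers in $\{0,\ldots,W\}$, and obtain polynomial time from $q=O(n^3)$. Your additional remarks, that no trivial cuts are queried and that equalities inferred from the stored subspace are genuine, are correct details the paper leaves implicit.
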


\begin{proof} 
Let~${\gslDimension=\binom{n}{2}}$. Fix an arbitrary ordering of the unordered pairs of vertices, and define~${\gslWeight\in\R^{\gslDimension}}$ so that the coordinate corresponding to~${\{u,v\}}$ is~${\edgeWeight{u}{v}}$. 
For each~${S\subseteq V}$, let~${\cutIndicator{S}\in\{0,1\}^{\gslDimension}}$ be the indicator vector of which edges cross the cut~$S$. Then~${\cutGraphSet{G}{S}=  \innerProd{\gslWeight}{\cutIndicator{S}}}$.

We can therefore use~\Cref{thm:online} to simulate the symmetric~SFM algorithm from~\Cref{thm:intro-sfm} on~$\cutGraph{G}$.
Every cut value is in~${\{0,1,\ldots,\totalWeight\}}$, so~${\gslDistinctValues \leq \totalWeight+1}$ and the simulation makes 
\[
O\bigl((\gslDimension+\gslDistinctValues)\log(\gslDistinctValues+1)\bigr)
\;=\;
O\bigl((n^2+\totalWeight)\log(\totalWeight+2)\bigr)
\] 
comparisons. 
The algorithm from~\Cref{thm:intro-sfm} uses~${O(n^3)}$ queries and~$\poly(n)$ time, so the simulation also runs in~$\poly(n)$ time. 
\end{proof}

If every edge weight is at most~$\B$, then the total graph weight satisfies~${\totalWeight\leq m\B}$, and~\Cref{cor:weighted-min-cut-total-weight} gives an~${\softO(n^2+m\B)}$ comparison bound for finding a minimum cut. The dependence on~$m$ is useful when~$G$ is sparse, but can be quadratic in~$n$ when~$G$ is dense. We next show how to replace the~${m\B}$ term by~${n\B^2}$, which improves the query complexity when~$m$ is large and~$\B=o(n)$.

\subsection{Weighted minimum cut in dense graphs}\label{sec:support}

In this subsection, we show that we can compute a weighted minimum cut using~${\softO(n^2+n\B^2)}$ comparisons and polynomial time.
The main step is replacing~$G$ by a contracted graph with total edge weight~${\softO(n\B^2)}$ while preserving the nonsingleton minimum cuts. The result then follows by applying~\Cref{cor:weighted-min-cut-total-weight} to the contracted graph.

For a vertex~${v\in V}$, denote its weighted degree by ${\degreeGraph{v}{G}\coloneqq\cutGraphSet{G}{\{v\}}}$.
For ${S\subseteq V\setminus\{v\}}$, write~${\weightVertexSet{v}{S}\coloneqq\sum_{x\in S}\edgeWeight{v}{x}}$ for the sum of edge weights between~$v$ and~$S$.

We use the following standard identity for weighted graph cuts.  
If~$S$ and~${S\cup\{v\}}$ are nonempty proper subsets of~$V$, then this allows us to determine whether~${\weightVertexSet{v}{S} < \degreeGraph{v}{G}/2}$ using one comparison query, which we use to construct the contracted graph.

\begin{lemma}[Cut marginal identity]
\label{lemma:cut-marginal} 
Let~${G=(V,w)}$ be an undirected weighted graph, and let~${S\subseteq V}$. For each~${v\in V\setminus S}$,
\[
 \cutGraphSet{G}{S\cup\{v\}}-\cutGraphSet{G}{S}
 \;=\;
 \degreeGraph{v}{G}-2\weightVertexSet{v}{S}.
\] 
\end{lemma}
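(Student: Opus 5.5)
The plan is a direct edge-by-edge computation: expand both cut values using the definition of $\cutGraph{G}$ and cancel everything not incident to $v$. Write $S' \coloneqq S\cup\{v\}$. By definition, $\cutGraphSet{G}{S'}-\cutGraphSet{G}{S}$ is a sum over unordered pairs $\{x,y\}$ of $\edgeWeight{x}{y}\bigl(\indicator{|S'\cap\{x,y\}|=1}-\indicator{|S\cap\{x,y\}|=1}\bigr)$. I will first observe that for any pair with $v\notin\{x,y\}$ we have $S'\cap\{x,y\}=S\cap\{x,y\}$, so the bracket vanishes. Only pairs $\{v,x\}$ with $x\in V\setminus\{v\}$ survive.

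For such a pair, I split into two cases according to whether $x\in S$. Since $v\notin S$ and $v\in S'$:
\begin{itemize}
\item If $x\in S$, then $|S\cap\{v,x\}|=1$ and $|S'\cap\{v,x\}|=2$, so the bracket equals $-1$.
\item If $x\notin S$ (and $x\neq v$), then $|S\cap\{v,x\}|=0$ and $|S'\cap\{v,x\}|=1$, so the bracket equals $+1$.
\end{itemize}

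Hence the difference is $\sum_{x\notin S,\,x\neq v}\edgeWeight{v}{x}-\sum_{x\in S}\edgeWeight{v}{x}$. Writing the first sum as $\degreeGraph{v}{G}-\weightVertexSet{v}{S}$ gives $\degreeGraph{v}{G}-2\weightVertexSet{v}{S}$. This uses $\degreeGraph{v}{G}=\cutGraphSet{G}{\{v\}}=\sum_{x\neq v}\edgeWeight{v}{x}$, which follows from the definition of the cut function because the graph is loopless.

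There is no real obstacle here. The only care needed is the bookkeeping of the indicator cases, and noting that the hypothesis $v\notin S$ is exactly what makes the case analysis valid.
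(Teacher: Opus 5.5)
Your proposal is correct and is the same argument as the paper's: only edges incident to $v$ change crossing status. Edges to $S$ stop crossing and contribute $-\weightVertexSet{v}{S}$, while edges to $V\setminus(S\cup\{v\})$ start crossing and contribute $\degreeGraph{v}{G}-\weightVertexSet{v}{S}$; your indicator expansion just writes out the bookkeeping the paper states in words.
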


\begin{proof}
Fix~${v\in V\setminus S}$. 
When~$v$ is added to~$S$, the edges from~$v$ to~$S$ stop crossing the cut, decreasing the weight by~${\weightVertexSet{v}{S}}$, while the edges from~$v$ to~${V\setminus(S\cup\{v\})}$ start crossing the cut, increasing the weight by~${\degreeGraph{v}{G}-\weightVertexSet{v}{S}}$. Adding these contributions gives the result.
\end{proof}

If there is a vertex~${u}$ with weighted degree~${\degreeGraph{u}{G}=0}$, then the singleton~${\{u\}}$ is a minimum cut; we call such a vertex~\emph{isolated}. 
If we could compare~${\cutGraphSet{G}{\{u\}}}$ with~${\cutGraphSet{G}{\emptyset}=0}$, then we could decide whether~$u$ is isolated using one comparison. 
Even when trivial comparisons are not allowed, the following observation of Cohen-Addad et al.~\cite[Lemma~D.4]{comparison} allows us to decide whether~$u$ is isolated using three comparisons. 
The restriction to~${n\geq 4}$ is harmless for the asymptotic bound: if~${n<4}$, then we can compare all pairs of nontrivial cuts and output a minimum cut.

\begin{lemma}[Isolated vertex test]
\label{lemma:deciding-whether-min-degree-is-0-isolated}
Let~${G=(V,w)}$ be an undirected weighted graph on~${n\ge4}$ vertices with nonnegative edge weights, and let~${u \in V}$.
We can determine whether~${\degreeGraph{u}{G}=0}$ using three cut comparisons.  
\end{lemma}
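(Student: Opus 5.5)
Pick two distinct vertices $v,w\in V\setminus\{u\}$, and use \Cref{lemma:cut-marginal} three times to compare cuts that differ only by adding $u$. Each such comparison tests a linear relation between $\degreeGraph{u}{G}$ and the weights from $u$ to $v$ and $w$. The three relations together force $\degreeGraph{u}{G}=0$, and they all hold when $u$ is isolated.

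Write $a\coloneqq\edgeWeight{u}{v}$ and $b\coloneqq\edgeWeight{u}{w}$. The three comparisons are:
\begin{itemize}
\item $\cutGraphSet{G}{\{v\}}$ with $\cutGraphSet{G}{\{u,v\}}$. By \Cref{lemma:cut-marginal} with $S=\{v\}$, the difference is $\degreeGraph{u}{G}-2a$.
\item $\cutGraphSet{G}{\{w\}}$ with $\cutGraphSet{G}{\{u,w\}}$. The difference is $\degreeGraph{u}{G}-2b$.
\item $\cutGraphSet{G}{\{v,w\}}$ with $\cutGraphSet{G}{\{u,v,w\}}$. The difference is $\degreeGraph{u}{G}-2(a+b)$.
\end{itemize}
All six sets are nonempty. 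The largest has size $3<n$ because $n\ge4$, so every queried set is a nontrivial cut, as the model requires. This is the only place where $n\ge4$ is used.

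We declare $u$ isolated exactly when all three comparisons return equality.

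If $\degreeGraph{u}{G}=0$, then nonnegativity of the edge weights gives $a=b=0$. All three differences are then zero, so all three comparisons return equality.

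Conversely, suppose all three comparisons return equality. Then
\[
\degreeGraph{u}{G}=2a=2b=2(a+b).
\]
From $2a=2(a+b)$ we get $b=0$, and hence $\degreeGraph{u}{G}=2b=0$.

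The argument is elementary. The only point needing care is choosing the third comparison so that the three linear relations have only the trivial solution while every queried set stays a nonempty proper subset of $V$.
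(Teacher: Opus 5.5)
Your proof is correct and essentially identical to the paper's. The paper uses the same three comparisons $(\{a\},\{a,u\})$, $(\{b\},\{b,u\})$, $(\{a,b\},\{a,b,u\})$, applies the cut marginal identity to each, and concludes that all three returning equality forces the degree of $u$ to be zero. As a cosmetic point, naming a vertex $w$ clashes with the weight function $w$ in $G=(V,w)$.
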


\begin{proof} 
Choose distinct~${a,b\in V\setminus\{u\}}$, and query the pairs
\[
 \bigl(\{a,u\},\, \{a\}\bigr),
 \qquad
 \bigl(\{b,u\},\, \{b\}\bigr), 
 \qquad\text{and}\qquad
 \bigl(\{a,b,u\},\, \{a,b\}\bigr).
\]
These are valid cut comparisons because~${n\ge4}$. 
If~${\degreeGraph{u}{G}=0}$, then every edge incident to~$u$ has weight~$0$, so all three comparisons return equality.
Conversely, if all three comparisons return equality, then by~\Cref{lemma:cut-marginal},~${\edgeWeight{u}{a}=\edgeWeight{u}{b}=\edgeWeight{u}{a}+\edgeWeight{u}{b}=\degreeGraph{u}{G}/2}$, and it follows that~${\degreeGraph{u}{G}=0}$. 
Thus,~${\degreeGraph{u}{G}=0}$ if and only if all three comparisons return equality. 
\end{proof}

We can use~\Cref{lemma:deciding-whether-min-degree-is-0-isolated} to decide whether there is an isolated vertex in~$G$ using~${O(n)}$ comparisons. If so, then the singleton containing this vertex is a minimum cut. Otherwise, if no vertices are isolated, then we~\emph{contract}~$G$, in the sense explained below.

A \emph{partition}~$\partition$ of~$V$ is a collection~${\partition=(V_1,\ldots,V_{\partitionIndex})}$ of pairwise disjoint nonempty subsets of~$V$ whose union is~$V$. 
Each~$V_i$ is a~\emph{block} of the partition.
The~\emph{contracted graph}~$\contractedG$ contains one vertex for each block of the partition; we call these~\emph{supervertices} to distinguish them from the original vertices of~$G$.
For distinct blocks~$V_i$ and~$V_j$, the edge between the corresponding supervertices in~${\contractedG}$ has weight equal to the total weight between~$V_i$ and~$V_j$ in~$G$. 
Edges with both endpoints in the same block are removed. 

Even though we do not know the edge weights of~$G$, we can simulate comparison-query access to~${\cutGraph{\contractedG}}$ as follows.

\begin{remark}[Querying the contracted graph]\label{rem:querying-cut-function-of-contracted-graph}
Given a known partition~$\partition=(V_1,\ldots,V_{\partitionIndex})$, we can simulate a comparison query to~${\cutGraph{\contractedG}}$ using one comparison query to~${\cutGraph{G}}$, without knowing any edge weights.
This holds because a cut in~$\contractedG$ corresponds exactly to the cut in~$G$ obtained by taking the union of the blocks on each side of the cut. 
\end{remark}

Next, we construct a partition~$\partition$ such that contracting the graph according to this partition does not destroy the minimum-cut structure of~$G$. 
The total edge weight of~$G$ can be as large as~${O(n^2 \B)}$, but the following result ensures that the contracted graph~$\contractedG$ has total edge weight bounded by~${\softO(n \B^2)}$, even though we do not know any individual edge weights. This allows us to reduce the quadratic dependence on~$n$ when~${\B=o(n)}$.

\begin{restatable}[Minimum-cut-preserving contraction]{lemma}{GraphContractionUsingComparisons}
\label{lemma:contraction}
Let~${G=(V,w)}$ be an undirected weighted graph on~${n\geq 2}$ vertices with edge weights in~${\weightedEdgeWeights}$ for a known parameter~$\B$. 
Suppose that~${\minimumGraphDegree{G}>0}$. 
Then using~${O(n^2)}$ comparisons and polynomial time, we can construct a partition~$\partition$ of~$V$ with the following properties.
\begin{enumerate}[label=(\roman*)] 
\item\label{item:contraction-preserves-min-cut} Either some singleton is a minimum cut of~$G$, or every minimum cut of~$G$ is a union of blocks of~$\partition$. 
\item\label{item:contraction-total-weight} The total edge weight of the contracted graph~${\contractedG}$ is~${\softO(n\B^2)}$.
\end{enumerate}
\end{restatable}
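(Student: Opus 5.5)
The plan follows the three-step outline in the technical overview: recover an unweighted support graph, take an expander decomposition of it, and contract vertices whose weight stays mostly inside their block. First I would handle the trivial cases. If $n<4$ there is nothing to do: take the partition into singletons. Otherwise no vertex is isolated, so $\delta\coloneqq\minimumGraphDegree{G}\ge1$.

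\emph{Step 1 (support graph, $O(n^2)$ comparisons).} For each vertex $u$ I would build a set $\uEdges$ of recovered edges at $u$, so that every recovered edge has positive weight and the unrecovered weight $\missingWeightSum{\uEdges}$ is at most $\B$. The tool is \Cref{lemma:cut-marginal}: with one comparison between $S\cup\{u\}$ and $S$, it decides whether $\weightVertexSet{u}{S}<\degreeGraph{u}{G}/2$.

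I would scan the other vertices in a fixed order while maintaining a set $S$ with $\weightVertexSet{u}{S}<\degreeGraph{u}{G}/2$. At each vertex $v$ I test $S\cup\{v\}$.
\begin{itemize}
\item If the test fails, then $v$ alone pushes the weight past $\degreeGraph{u}{G}/2$. Hence $\edgeWeight{u}{v}>0$, and I record $\{u,v\}$.
\item Otherwise I add $v$ to $S$.
\end{itemize}
I would then rerun the same scan with the roles of $S$ and its complement in $V\setminus\{u\}$ swapped, and iterate on the unresolved part. Since every weight is at most $\B$, each threshold crossing pins down a positive edge, and at most $O(1)$ amortised comparisons are spent per pair. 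The invariant I want is the following: any unrecovered positive edge $\{u,v\}$ would still fit under the threshold together with the unrecovered mass, which forces that mass to be at most $\B$.

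I expect to need some care here to make the amortisation genuinely $O(n)$ comparisons per vertex. Every query must also avoid $\emptyset$ and $V$; this is fine for $n\ge4$ by picking the scan order appropriately. The resulting graph $\supp$ has the property that every vertex $u$ has at least $\degreeGraph{u}{G}-\B$ weight on edges of $\suppEdges$.

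\emph{Step 2 (expander decomposition).} I would apply \Cref{thm:expander} to $\supp$ with conductance parameter $\phi=\Theta(1/\log n)$, scaled so that every split of a block $V_i$ into $X\cup(V_i\setminus X)$ has at least $\expanderConstant\B|X|$ recovered edges crossing it, where $X$ is the smaller side. Vertices of $\supp$ with degree below about $\B$ become singleton blocks. The decomposition leaves only $\softO(n\B)$ recovered edges between blocks. This is the only place where $\B$ is used.

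\emph{Step 3 (contraction).} For each nonsingleton block $V_i$, I would let $\contractionSet{i}\subseteq V_i$ consist of the vertices $v$ whose external weight $\externalIncidentWeight$ is small compared with $\degreeGraph{v}{G}$, say at most a small constant fraction of it. Each vertex needs at most one comparison, using the halving test of \Cref{lemma:cut-marginal} against a set containing $V\setminus V_i$ (this is \Cref{lemma:external}). The partition $\partition$ then consists of the nonempty sets $\contractionSet{i}$ together with all remaining vertices as singletons.

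For the weight bound (ii), every edge of $\contractedG$ has an endpoint outside $\bigcup_i\contractionSet{i}$, or is an unrecovered edge, or is a recovered edge between blocks. Unrecovered edges contribute at most $n\B$ and inter-block recovered edges contribute $\softO(n\B)\cdot\B$. A vertex $v\notin\contractionSet{i}$ has external weight that is a constant fraction of its degree, and that external weight is essentially recovered inter-block edges plus at most $\B$. Charging its degree to these edges gives $\softO(n\B^2)$ in total.

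\emph{Main obstacle: property (i).} Let $C$ be a minimum cut that is not a singleton, and suppose it splits some $\contractionSet{i}$ into a nonempty part $X$ and a nonempty remainder, with $X$ the smaller side. By the expander guarantee, the recovered edges inside $V_i$ crossing $C$ number at least $\expanderConstant\B|X|$, contributing at least $\expanderConstant\B|X|$ weight. Since $C$ is a minimum cut, $\cutGraphSet{G}{C}\le\delta\le\degreeGraph{v}{G}$ for all $v$.

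I would then move $X$ to the other side of $C$ and compare the two cut values. The vertices of $X$ lie in $\contractionSet{i}$, so they have little external weight. Combining this with the large internal crossing weight should show either that the modified cut is strictly smaller, or that $|X|$ is small enough that $\cutGraphSet{G}{C}$ exceeds $\delta$. Either way this is a contradiction, unless a singleton is itself a minimum cut.

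Balancing these constants is where I expect the real difficulty. The thresholds in Step 3 and the expansion constant in Step 2 must be tuned so that the inequality closes, while the unrecovered weight of at most $\B$ per vertex from Step 1 is still absorbed.
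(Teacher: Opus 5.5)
Your three-step architecture and your charging argument for (ii) match the paper. However, property (i), the heart of the lemma, is not actually proved, and the sketch you give does not close. Moving the whole smaller part $X$ of $K_i$ and splitting on $|X|$ is the wrong dichotomy. \Cref{thm:expander} bounds the crossing count by $cB\min\{|S\cap V_i|,|V_i\setminus S|\}$, so you need a lower bound on \emph{both} sides of $V_i$, and $|X|$ does not supply one. Moving all of $X$ can also empty a side of the cut.

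The missing idea is single-vertex local optimality. Assume no singleton is a minimum cut, so a minimum cut $S$ has $\cutGraphSet{G}{S}<\delta$ and $|S|,|V\setminus S|\ge2$. For $x\in K_i\cap S$, minimality of $S$ against $S\setminus\{x\}$ and \Cref{lemma:cut-marginal} give $w(x,S\setminus\{x\})\ge\deg_G(x)/2$. If membership in $K_i$ guarantees $w(x,V\setminus V_i)<\deg_G(x)/3$, then $w(x,(S\cap V_i)\setminus\{x\})>\deg_G(x)/6$. Drop the unrecovered weight, which is less than $B\le\deg_G(x)/c$, and use that every edge weighs at most $B$. Then $x$ has more than $\deg_G(x)/(12B)\ge\delta/(12B)$ recovered neighbors in $S\cap V_i$. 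The same argument for some $y\in K_i\setminus S$ gives $|V_i\setminus S|>\delta/(12B)$. Applying \Cref{thm:expander} to $S\cap V_i$ then yields more than $c\delta/12>\delta$ recovered edges crossing $S$, each of weight at least $1$, which is a contradiction.

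This argument needs the external weight of contracted vertices to be bounded \emph{away} from $\deg_G/2$, and your Step 3 does not deliver that. One halving comparison against $V\setminus V_i$ tests only the threshold $1/2$, and then you get only $w(x,(S\cap V_i)\setminus\{x\})>0$. A smaller threshold would require adding a known fraction of $v$'s internal weight to the query set, and those weights are unknown.

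\Cref{lemma:external} instead tests $w(v,X_h)<\deg_G(v)/2$ for $X_h=(V\setminus V_i)\cup\{x\in V_i\setminus\{v\}:h(x)=0\}$ over a pairwise-independent family of $O(n)$ hash functions, and takes a majority vote. Chebyshev then gives a gap: $v\in K_i$ implies external weight $<\deg_G(v)/3$, and $v\notin K_i$ implies external weight $>\deg_G(v)/6$ (the latter is what (ii) uses). This requires $\deg_G(v)\ge cB$ with $c>144$, which holds in every nonsingleton block by applying the expander bound to $X=\{v\}$. The cost is $O(n)$ comparisons per vertex, not one, but still $O(n^2)$ in total.

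Your Step 1 is also only sketched, and no amortization is needed: exactly two passes suffice. The first pass yields $X_1$ with $w(u,X_1)<d/2$. A second pass that scans $V\setminus(X_1\cup\{u\})$ first must reject a vertex there while its accumulated set lies in $X_2\setminus X_1$, so $w(u,X_2\setminus X_1)\ge d/2-B$. Every unrecovered edge ends in $X_1\cap X_2$, whose weight is $w(u,X_2)-w(u,X_2\setminus X_1)<B$.
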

\begin{proof}
Deferred to~\Cref{app:weighted-graph-contraction}.
\end{proof}

We now use~\Cref{lemma:contraction} to create a contracted graph~$\contractedG$ of total edge weight~${\softO(n\B^2)}$, then apply the minimum-cut algorithm from~\Cref{cor:weighted-min-cut-total-weight}.

\begin{theorem}[Weighted minimum cut in dense graphs]
\label{thm:weighted-min-cut-nB2}
   Let~${G=(V,w)}$ be an undirected graph on~${n\geq 2}$ vertices with edge weights in~$\weightedEdgeWeights$ for a known parameter~$\B$. 
   There is a deterministic polynomial-time algorithm that finds a minimum cut of~$G$ using
\[
 \softO\!\left(n^2+nB^2\right)
\]
cut comparisons.
\end{theorem}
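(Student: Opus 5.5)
The plan is to combine the isolated-vertex test, the contraction lemma, and the total-weight algorithm, keeping one candidate cut from each stage and comparing the candidates at the end. If $n<4$, we compare all pairs of nontrivial cuts and output a minimum, using $O(1)$ comparisons. Otherwise we proceed in three stages.

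First, we apply \Cref{lemma:deciding-whether-min-degree-is-0-isolated} to every vertex $u\in V$, using $3n$ comparisons. If some vertex is isolated, its singleton is a minimum cut and we return it. Otherwise $\minimumGraphDegree{G}>0$, so the hypothesis of \Cref{lemma:contraction} holds. We invoke that lemma to obtain a partition $\partition$ with $O(n^2)$ comparisons and polynomial time.

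Second, we find a singleton candidate. We scan the $n$ singleton cuts $\{v\}$ with $n-1$ comparisons, keeping a vertex $v^\ast$ of minimum degree.

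Third, we handle the contracted graph. If $\partition$ has at least two blocks, we run the algorithm of \Cref{cor:weighted-min-cut-total-weight} on $\contractedG$, simulating each of its comparison queries by one query to $\cutGraph{G}$ as in \Cref{rem:querying-cut-function-of-contracted-graph}. Let $W'$ be the total weight of $\contractedG$. Note that $\contractedG$ has at most $n$ supervertices and nonnegative integer weights, and \Cref{lemma:contraction}\ref{item:contraction-total-weight} gives $W'=\softO(n\B^2)$. Hence this stage uses
\[
O\bigl((n^2+W')\log(W'+2)\bigr)=\softO(n^2+n\B^2)
\]
comparisons, since $\log(W'+2)=O(\log(n\B))$. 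It returns a minimum cut $\mathcal{S}$ of $\contractedG$. We expand $\mathcal{S}$ to the union $S$ of its blocks, which is a nonempty proper subset of $V$ with $\cutGraphSet{G}{S}=\cutGraphSet{\contractedG}{\mathcal S}$. Finally, we compare $\{v^\ast\}$ with $S$ (one comparison) and output the smaller.

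Correctness follows from \Cref{lemma:contraction}\ref{item:contraction-preserves-min-cut}. If some singleton is a minimum cut, then $\{v^\ast\}$ is a minimum cut, and the final comparison outputs a set of equal value. Otherwise every minimum cut $X$ of $G$ is a union of blocks of $\partition$. In that case $\partition$ must have at least two blocks, because $X$ is a nonempty proper union of blocks. The set $X$ then corresponds to a cut of $\contractedG$ with the same value. Every cut of $\contractedG$ is a cut of $G$, so the minimum cut values of $\contractedG$ and $G$ coincide, and $S$ is a minimum cut of $G$. If $\partition$ is a single block, no minimum cut can be a union of blocks, so the first alternative holds and we output $\{v^\ast\}$.

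The total comparison count is $O(n)+O(n^2)+\softO(n^2+n\B^2)$, and every stage runs in polynomial time. The only step that needs care is checking that the simulated queries never compare against $\emptyset$ or $V$. This holds because the \Cref{cor:weighted-min-cut-total-weight} algorithm only queries nontrivial cuts of $\contractedG$, and these map to nontrivial cuts of $G$. The substantive difficulty is entirely inside \Cref{lemma:contraction}, which we assume.
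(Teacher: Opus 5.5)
Your proposal is correct and follows essentially the same route as the paper: brute force for $n<4$, rule out an isolated vertex, apply \Cref{lemma:contraction}, run \Cref{cor:weighted-min-cut-total-weight} on $\contractedG$ via \Cref{rem:querying-cut-function-of-contracted-graph}, and compare the lifted cut with a minimum-degree singleton. The only difference is cosmetic. You run the isolated-vertex test on all $n$ vertices ($3n$ comparisons), whereas the paper first finds a minimum-degree vertex $u$ and tests only $u$, which suffices since some vertex is isolated exactly when $\degreeGraph{u}{G}=0$; both cost $O(n)$ comparisons.
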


\begin{proof}
If~${n<4}$, then we can compare all of the nontrivial cuts using~${O(1)}$ comparisons and return a minimum cut. Suppose therefore that~${n \geq 4}$. 
First, find a vertex~$u$ of minimum weighted degree. 
This is possible using~${n-1}$ comparisons by comparing the singleton cuts one by one, each time keeping a vertex with the smallest cut value seen so far.

Next, use~\Cref{lemma:deciding-whether-min-degree-is-0-isolated} to check whether~${\degreeGraph{u}{G}=0}$. If so, then~$\{u\}$ is a minimum cut and we return~${\{u\}}$. Otherwise,~${\degreeGraph{u}{G}=\minimumGraphDegree{G}>0}$, so we can use~\Cref{lemma:contraction} to compute a partition~$\partition$ of~$V$. By~\Cref{lemma:contraction}\ref{item:contraction-preserves-min-cut}, either~${\{u\}}$ is a minimum cut of~$G$, or every minimum cut of~$G$ is a union of blocks of~$\partition$.  
If~${\partition=\{V\}}$, then the second alternative is impossible, so we return~${\{u\}}$.

Otherwise,~$\partition$ contains at least two blocks. 
In this case, run the weighted minimum-cut algorithm from~\Cref{cor:weighted-min-cut-total-weight} on the contracted graph~$\contractedG$, and let~${S\subseteq V}$ be the union of the blocks returned by the algorithm. 
By~\Cref{rem:querying-cut-function-of-contracted-graph}, we can implement each comparison in~$\contractedG$ using one comparison in~$G$, without knowing the edge weights of~$\contractedG$.

Finally, compare~${\cutGraphSet{G}{S}}$ with~${\cutGraphSet{G}{\{u\}}}$ and return whichever of~$S$ and~$\{u\}$ has the smaller cut value, breaking ties arbitrarily. 
By~\Cref{lemma:contraction}\ref{item:contraction-preserves-min-cut}, at least one of~${\{u\}}$ and~$S$ is a minimum cut, so the algorithm always outputs a minimum cut.

It remains to bound the number of comparisons.
We can find~$u$ in~${O(n)}$ comparisons, then check whether~${\degreeGraph{u}{G}=0}$ in three comparisons using~\Cref{lemma:deciding-whether-min-degree-is-0-isolated}.
Computing the partition~$\partition$ with~\Cref{lemma:contraction} uses~${O(n^2)}$ comparisons and polynomial time.
By~\Cref{lemma:contraction}\ref{item:contraction-total-weight}, the total edge weight of~$\contractedG$ is~${\softO(n\B^2)}$.
The graph~$\contractedG$ has at most~$n$ supervertices, so~\Cref{cor:weighted-min-cut-total-weight} uses~${\softO(n^2+n\B^2)}$ comparisons on~$\contractedG$.
All other operations use~${O(n^2)}$ comparisons and polynomial time, so the entire algorithm has the claimed query complexity and running time. 
\end{proof}

\subsection{Combining the bounds for sparse and dense graphs}

The upper bound in~\Cref{thm:bounded-weight-min-cut-upper-bound} is now almost immediate by combining the bounds above.

\BoundedWeightMinCutUpperBound*

\begin{proof} 
Run the algorithms from~\Cref{thm:weighted-min-cut-nB2,cor:weighted-min-cut-total-weight} in parallel, alternating oracle queries, and return the minimum cut found by whichever algorithm terminates first.
The respective algorithms run in polynomial time, so the combined algorithm runs in polynomial time. 
This parallel execution increases the smallest query count by at most a factor of two. 
The total edge weight in~$G$ satisfies~${\totalWeight\leq m\B}$, so the algorithm in~\Cref{cor:weighted-min-cut-total-weight} makes~${O\!\left((n^2+\totalWeight)\log(\totalWeight+2)\right)=\softO\!\left(n^2 + m\B\right)}$ comparisons. The algorithm in~\Cref{thm:weighted-min-cut-nB2} makes~${\softO(n^2+n\B^2)}$ comparisons, and the result follows by taking the minimum of these two values. 
\end{proof}

\section{Randomized minimum-cut lower bound}
\label{sec:lower}

In this section, we prove the following lower bound.

\randomizedlowerbound*

To prove~\Cref{thm:lower}, we encode a hidden ordering~${\tau=(v_1,\ldots,v_n)}$ of the vertices into the edge weights of a complete graph.
For each~$\tau$, we construct a base graph~$\lbBaseGraph{\tau}$ whose minimum cuts are precisely the prefixes~${\{v_1,\ldots,v_i\}}$~(up to complements).  
We then choose~${2\ell}$ path edges whose weights may be decreased by~$1$ to create a new unique minimum cut while leaving every comparison that was already strict unchanged. Finally, the lower bound follows by choosing a uniformly random ordering~$\tau$ and counting transcripts. 
The following consequences of~\Cref{thm:lower} are immediate.

\begin{corollary}[Minimum-cut lower bound with polynomial weights]
\label{cor:poly}
There is an absolute constant~${C>0}$ such that, for every sufficiently large even~$n$, every randomized comparison algorithm that outputs a minimum cut with probability at least~${2/3}$ on every complete~$n$-vertex graph with positive integer edge weights at most~${n^{C}}$ has worst-case expected query complexity~${\Omega(n\log\log n)}$.
\end{corollary}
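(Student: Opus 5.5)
This follows from \Cref{thm:lower} by choosing the parameter~$\ell$ as a function of~$n$ and checking two things: the weight bound~$\lbWeightRange$ becomes a fixed polynomial, and the lower bound becomes~${\Omega(n\log\log n)}$.

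\emph{Choice of~$\ell$.} Set~${\ell\coloneqq\lceil\log n\rceil}$, with logarithms base~$2$. For sufficiently large even~$n$ we have~${2\le\ell<n/2}$, so \Cref{thm:lower} applies.

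\emph{Weight bound.} The hidden constants in~$\lbWeightRange$ are absolute, so there are constants~${a,b}$ such that every graph in the class~$\mathcal{G}$ of \Cref{thm:lower} has weights at most~${n^{a}2^{b\ell}}$. Since~${\ell\le\log n+1}$, we get~${2^{b\ell}\le 2^{b}n^{b}}$. Hence every weight is at most~${2^{b}n^{a+b}\le n^{a+b+1}}$ once~${n\ge 2^{b}}$. Take~${C\coloneqq a+b+1}$, an absolute constant.

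\emph{Transfer of the lower bound.} Every graph in~$\mathcal{G}$ is a complete~$n$-vertex graph with positive integer weights at most~$n^{C}$. So an algorithm that succeeds with probability at least~$2/3$ on all such graphs also succeeds with probability at least~$2/3$ on every graph in~$\mathcal{G}$. By \Cref{thm:lower}, its worst-case expected query complexity over~$\mathcal{G}$ is~${\Omega(n\log\ell)=\Omega(n\log\log n)}$. The worst case over the larger class can only be larger, which gives the claim.

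The only point needing care is the weight bound: one must confirm that the~$O(\cdot)$ constants inside~$\lbWeightRange$ in \Cref{thm:lower} do not depend on~$\ell$ or~$n$. With that, the bound collapses to a single polynomial~$n^{C}$. If those constants were allowed to depend on~$\ell$, one would instead need to go back to the construction (\Cref{lemma:base_graph_properties}) to extract explicit constants.
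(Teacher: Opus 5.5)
Your proposal is correct and takes the same approach as the paper, whose proof is simply to apply \Cref{thm:lower} with $\ell=\lceil\log n\rceil$. Your extra check is justified: the constants hidden in $\lbWeightRange$ are absolute, as the explicit bounds in the proof of \Cref{lemma:encoding-perturbed-edges} and in \Cref{def:base_graph} show, so the weight bound does collapse to a fixed polynomial $n^{C}$.
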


\begin{proof}
Apply~\Cref{thm:lower} with~${\ell=\lceil\log n\rceil}$. 
\end{proof}

\begin{corollary}[Minimum-cut lower bound with arbitrarily small exponential weights]
\label{cor:epsilon} 
For every fixed~${\varepsilon\in(0,1)}$, there are constants~${c_\varepsilon,C_\varepsilon>0}$ such that, for every sufficiently large even~$n$, every randomized comparison algorithm that outputs a minimum cut with probability at least~${2/3}$ on every complete~$n$-vertex graph with positive integer edge weights at most~${2^{C_\varepsilon n^\varepsilon}}$ has worst-case expected query complexity at least~${c_\varepsilon n\log n}$. 
\end{corollary}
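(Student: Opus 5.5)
The plan is to apply \Cref{thm:lower} with $\ell \coloneqq \lfloor (n/2)^\varepsilon\rfloor$ and then verify two things: that this choice of $\ell$ is admissible, and that the weight bound and the query lower bound have the stated forms.

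\emph{Admissibility.} For fixed $\varepsilon\in(0,1)$ and $n$ sufficiently large (depending on $\varepsilon$), we have $(n/2)^\varepsilon\geq 2$, so $\ell\geq 2$. We also have $(n/2)^\varepsilon<n/2$ because $\varepsilon<1$ and $n/2>1$, so $\ell<n/2$. Thus $\ell$ is an integer with $2\leq\ell<n/2$, as \Cref{thm:lower} requires. The threshold on $n$ should be taken large enough that both this and the ``sufficiently large'' hypothesis of \Cref{thm:lower} hold.

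\emph{Weight bound.} \Cref{thm:lower} gives edge weights bounded by $n^{O(1)}2^{O(\ell)}$, say at most $n^{a}2^{b\ell}$ for absolute constants $a,b$. With $\ell\leq n^\varepsilon$, this is at most $2^{a\log n+bn^\varepsilon}$. Since $\log n=o(n^\varepsilon)$ for fixed $\varepsilon>0$, for large $n$ we get $a\log n\leq n^\varepsilon$. The weights are therefore at most $2^{C_\varepsilon n^\varepsilon}$ with $C_\varepsilon\coloneqq b+1$. An algorithm that succeeds with probability $2/3$ on every complete graph with weights at most $2^{C_\varepsilon n^\varepsilon}$ in particular succeeds on the class $\mathcal G$ of \Cref{thm:lower}, so the lower bound transfers.

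\emph{Query bound.} \Cref{thm:lower} gives expected complexity at least $c\,n\log\ell$ for an absolute constant $c>0$. For large $n$, $\ell\geq (n/2)^\varepsilon/2$, so
\[
\log\ell\;\geq\;\varepsilon\log(n/2)-1\;\geq\;\tfrac{\varepsilon}{2}\log n
\]
once $n$ is large enough in terms of $\varepsilon$. Setting $c_\varepsilon\coloneqq c\varepsilon/2$ gives the claimed bound of $c_\varepsilon n\log n$.

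The only point needing care is the quantifier order. The constants hidden in \Cref{thm:lower} are absolute and do not depend on $\ell$, which is what lets $c_\varepsilon$ and $C_\varepsilon$ depend only on $\varepsilon$. Beyond that, the argument is routine asymptotic bookkeeping.
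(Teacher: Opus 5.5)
Your proposal is correct and matches the paper's proof, which applies \Cref{thm:lower} with the same choice $\ell=\lfloor (n/2)^\varepsilon\rfloor$ and observes that $\log\ell=\Theta_\varepsilon(\log n)$ and $n^{O(1)}2^{O(\ell)}=2^{O_\varepsilon(n^\varepsilon)}$. You spell out the admissibility check ($2\le\ell<n/2$), the containment of the hard class in the larger weight class, and the constant bookkeeping that the paper leaves implicit.
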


\begin{proof}
Apply~\Cref{thm:lower} with~${\ell=\lfloor(n/2)^\varepsilon\rfloor}$.
Then~${\log\ell=\Theta_\varepsilon(\log n)}$ and~${\lbWeightRange=2^{O_\varepsilon(n^\varepsilon)}}$, which gives the result.
\end{proof}

\subsection{Constructing the base graphs}
\label{subsec:path-construction} 
We first define the base graphs. 
Write~${n=2k}$, let~${V\coloneqq [n]}$, and fix a set~${\lbTargetCut\subseteq V}$ with~${|\lbTargetCut|=k}$.
Let~${\lbOrderings}$ be the family of orderings~${\tau=(v_1,\ldots,v_n)}$ of~$V$ in which~${(v_1,\ldots,v_k)}$ is an ordering of~${\lbTargetCut}$ and~${(v_{k+1},\ldots,v_n)}$ is an ordering of~${V\setminus\lbTargetCut}$.
The total number of such orderings is~${|\lbOrderings|=(k!)^2}$.

% Local helpers for this figure.
\newcommand{\lbFigureOrder}{\tau}
\newcommand{\lbFigureSet}{X}

% -----------------------------------------------------------------------------
% Repeated dimensions and style parameters.
% -----------------------------------------------------------------------------
\pgfmathsetmacro{\lbFigXStep}{1.32}
\pgfmathsetmacro{\lbFigYTop}{1.48}
\pgfmathsetmacro{\lbFigYBot}{0.00}
\pgfmathsetmacro{\lbFigDotR}{2.15}
\pgfmathsetmacro{\lbFigGuideTop}{2.20}
\pgfmathsetmacro{\lbFigGuideBot}{-0.40}
\pgfmathsetmacro{\lbFigSideLabelY}{2.42}
\pgfmathsetmacro{\lbFigTopLabelShift}{7.8}
\pgfmathsetmacro{\lbFigBotLabelShift}{-8.8}
\pgfmathsetmacro{\lbFigLeftLabelXShift}{-10.0}
\pgfmathsetmacro{\lbFigRightLabelXShift}{-9.5}
\pgfmathsetmacro{\lbFigEdgeLabelYShift}{-0.5}
\pgfmathsetmacro{\lbFigBaseLineWidth}{1.0}
\pgfmathsetmacro{\lbFigGuideLineWidth}{1.2}
\pgfmathsetmacro{\lbFigBoundaryEdgeWidth}{1.45}
\pgfmathsetmacro{\lbFigBoundaryCurveWidth}{1.30}
\pgfmathsetmacro{\lbFigCurveOffset}{0.34}
\pgfmathsetmacro{\lbFigCurveTailDX}{0.92}
\pgfmathsetmacro{\lbFigCurveTailY}{2.02}

\definecolor{lbFigureBoundaryPurple}{RGB}{122,50,190}
\colorlet{lbFigurePartitionGray}{black!62}
\colorlet{lbFigureMutedGraphGray}{black!28}

\newcommand{\lbFigureEdgeLabelFont}{\small}
\newcommand{\lbFigureVertexLabelFont}{\normalsize}

\tikzset{
  lbPartitionGuide/.style={lbFigurePartitionGray, dotted, line width=\lbFigGuideLineWidth pt},
  lbPartitionLabel/.style={black, font=\normalsize},
  lbBaseEdge/.style={draw=black, line width=\lbFigBaseLineWidth pt, line cap=round},
  lbMutedEdge/.style={draw=lbFigureMutedGraphGray, line width=\lbFigBaseLineWidth pt, line cap=round},
  lbBoundaryEdge/.style={draw=black, line width=\lbFigBoundaryEdgeWidth pt, line cap=round},
  lbBoundaryCurve/.style={draw=lbFigureBoundaryPurple, line width=\lbFigBoundaryCurveWidth pt, line cap=round, line join=round},
  lbDarkVertex/.style={circle, fill=black, inner sep=\lbFigDotR pt},
  lbMutedVertex/.style={circle, fill=lbFigureMutedGraphGray, inner sep=\lbFigDotR pt},
  lbVertexLabelTop/.style={font=\lbFigureVertexLabelFont, inner sep=0pt, yshift=\lbFigTopLabelShift pt},
  lbVertexLabelBot/.style={font=\lbFigureVertexLabelFont, inner sep=0pt, yshift=\lbFigBotLabelShift pt},
  lbLeftLabel/.style={font=\lbFigureEdgeLabelFont, inner sep=0pt, xshift=\lbFigLeftLabelXShift pt, yshift=\lbFigEdgeLabelYShift pt},
  lbRightLabel/.style={font=\lbFigureEdgeLabelFont, inner sep=0pt, xshift=\lbFigRightLabelXShift pt, yshift=\lbFigEdgeLabelYShift pt}
}

\newcommand{\lbFigurePathCoordinates}{%
  \coordinate (v1) at (0,\lbFigYBot);
  \coordinate (v2) at (\lbFigXStep,\lbFigYTop);
  \coordinate (v3) at (2*\lbFigXStep,\lbFigYBot);
  \coordinate (v4) at (3*\lbFigXStep,\lbFigYTop);
  \coordinate (v5) at (4*\lbFigXStep,\lbFigYBot);
  \coordinate (v6) at (5*\lbFigXStep,\lbFigYTop);
  \pgfmathsetmacro{\lbFigureMidGuide}{2.5*\lbFigXStep}
  \pgfmathsetmacro{\lbFigureSLabelX}{\lbFigXStep}
  \pgfmathsetmacro{\lbFigureTLabelX}{4*\lbFigXStep}
}

\newcommand{\lbFigurePartitionGuide}{%
  \draw[lbPartitionGuide] (\lbFigureMidGuide,\lbFigGuideBot) -- (\lbFigureMidGuide,\lbFigGuideTop);
  \node[lbPartitionLabel] at (\lbFigureSLabelX,\lbFigSideLabelY) {$\lbTargetCut$};
  \node[lbPartitionLabel] at (\lbFigureTLabelX,\lbFigSideLabelY) {$V\setminus\lbTargetCut$};
}

\newcommand{\lbDrawLeftPanel}{%
  \begin{tikzpicture}[x=1cm,y=1cm]
    \lbFigurePathCoordinates
    \lbFigurePartitionGuide

    \draw[lbBaseEdge] (v1) -- (v2) -- (v3) -- (v4) -- (v5) -- (v6);
    \foreach \p in {v1,v2,v3,v4,v5,v6}{\node[lbDarkVertex] at (\p) {};}

    \node[lbVertexLabelBot] at (v1) {$v_1$};
    \node[lbVertexLabelTop] at (v2) {$v_2$};
    \node[lbVertexLabelBot] at (v3) {$v_3$};
    \node[lbVertexLabelTop] at (v4) {$v_4$};
    \node[lbVertexLabelBot] at (v5) {$v_5$};
    \node[lbVertexLabelTop] at (v6) {$v_6$};

    \node[lbLeftLabel]  at ($(v1)!0.5!(v2)$) {$\lbPathEdge{\lbFigureOrder}{1}$};
    \node[lbRightLabel] at ($(v2)!0.5!(v3)$) {$\lbPathEdge{\lbFigureOrder}{2}$};
    \node[lbLeftLabel]  at ($(v3)!0.5!(v4)$) {$\lbPathEdge{\lbFigureOrder}{3}$};
    \node[lbRightLabel] at ($(v4)!0.5!(v5)$) {$\lbPathEdge{\lbFigureOrder}{4}$};
    \node[lbLeftLabel]  at ($(v5)!0.5!(v6)$) {$\lbPathEdge{\lbFigureOrder}{5}$};
  \end{tikzpicture}%
}

\newcommand{\lbDrawRightPanel}{%
  \begin{tikzpicture}[x=1cm,y=1cm]
    \lbFigurePathCoordinates
    \lbFigurePartitionGuide

    \draw[lbMutedEdge] (v1) -- (v2) -- (v3) -- (v4) -- (v5) -- (v6);

    \draw[lbBoundaryEdge] (v1) -- (v2);
    \draw[lbBoundaryEdge] (v2) -- (v3);
    \draw[lbBoundaryEdge] (v5) -- (v6);

    \foreach \p in {v1,v3,v4,v5}{\node[lbMutedVertex] at (\p) {};}
    \foreach \p in {v2,v6}{\node[lbDarkVertex] at (\p) {};}

    \node[lbVertexLabelTop] at (v2) {$v_2$};
    \node[lbVertexLabelTop] at (v6) {$v_6$};

    \node[lbLeftLabel]  at ($(v1)!0.5!(v2)$) {$\lbPathEdge{\lbFigureOrder}{1}$};
    \node[lbRightLabel] at ($(v2)!0.5!(v3)$) {$\lbPathEdge{\lbFigureOrder}{2}$};
    \node[lbLeftLabel]  at ($(v5)!0.5!(v6)$) {$\lbPathEdge{\lbFigureOrder}{5}$};

    % Curve representing the cut
    \coordinate (c0) at ($(v2)+(-\lbFigCurveTailDX,{\lbFigCurveTailY-\lbFigYTop})$);
    \coordinate (c1) at ($(v2)+(0,-\lbFigCurveOffset)$);
    \coordinate (c2) at ($(v4)+(0,\lbFigCurveOffset)$);
    \coordinate (c3) at ($(v6)+(0,-\lbFigCurveOffset)$);
    \coordinate (c4) at ($(v6)+(\lbFigCurveTailDX,{\lbFigCurveTailY-\lbFigYTop})$);
    \draw[lbBoundaryCurve, use Hobby shortcut] (c0) .. (c1) .. (c2) .. (c3) .. (c4);
  \end{tikzpicture}%
} 

\begin{figure}[th]
\centering
\begin{subfigure}[t]{0.48\textwidth}
  \centering
  \lbDrawLeftPanel
  \caption{The path~${\lbFigureOrder=(v_1,v_2,\ldots,v_6)}$. Each pair~$v_i,v_{i+1}$ is connected by the edge~${\lbPathEdge{\tau}{i}\coloneqq\{v_i,v_{i+1}\}}$.}
\end{subfigure}
\hfill
\begin{subfigure}[t]{0.48\textwidth}
  \centering
  \lbDrawRightPanel
  \caption{The cut represented by~${\lbFigureSet=\{v_2,v_6\}}$ is crossed by edges~${\lbPathEdge{\lbFigureOrder}{1}}$,~${\lbPathEdge{\lbFigureOrder}{2}}$, and~${\lbPathEdge{\lbFigureOrder}{5}}$.
  Therefore~${\lbBoundary{\lbFigureOrder}{\lbFigureSet}=\{1,2,5\}}$.
}
\end{subfigure}
\caption{Example of the path construction from~\Cref{subsec:path-construction} on~${n=6}$ vertices with~${k=n/2=3}$. Here~${\lbTargetCut=\{v_1,v_2,v_3\}}$, and the vertices in~$\lbTargetCut$ are on the left of the dotted line. 
The cut represented by~${X}$ is a prefix cut if and only if the number of path edges crossed by the cut satisfies~${|\lbBoundary{\tau}{X}|=1}$. }
\label{fig:lb_path_boundaries}
\end{figure}

Fix~${\tau=(v_1,\ldots,v_n)\in\lbOrderings}$.
We view~$\tau$ as a Hamiltonian path, as shown in~\Cref{fig:lb_path_boundaries}.
For~${i\in[n-1]}$, let~${\lbPathEdge{\tau}{i}\coloneqq\{v_i,v_{i+1}\}}$ be the path edge at position~$i$, and let~${\lbPrefix{\tau}{i}\coloneqq\{v_1,\ldots,v_i\}}$ be the corresponding prefix.
We call the cut represented by~${\lbPrefix{\tau}{i}}$ the \emph{prefix cut at position~${i}$}.

An edge~${\{u,v\}}$ \emph{crosses} a set~${X\subseteq V}$ if exactly one of~${u}$ and~${v}$ belongs to~$X$.
For~${X\subseteq V}$, let
\begin{equation*}
\lbBoundary{\tau}{X}
\;\coloneqq\;
\bigl\{i\in[n-1]\setSeparator \lbPathEdge{\tau}{i}\text{ crosses }X\bigr\}.
\end{equation*}
For every nonempty proper set~${X\subsetneq V}$, we consider~$X$ and~${V\setminus X}$ to \emph{represent} the same cut, even though they are not the same set; when referring to cut values, we use both interchangeably.
The cut represented by~$X$ is a prefix cut if and only if~${|\lbBoundary{\tau}{X}|=1}$.

Let~${\lbSpacing\coloneqq\lfloor k/(\ell+1)\rfloor}$. 
Since~${\ell<k}$, we have~${\lbSpacing\geq1}$.
To create the perturbed graphs, we decrease the weight of some path edge~$\lbPathEdge{\tau}{i}$ with~${i \in \lbPerturbIndices}$, where 
\begin{align}
\lbPerturbIndices
\;\coloneqq\;
\{\lbSpacing,\,2\lbSpacing,\,\ldots,\,\ell\lbSpacing\}
\cup
\{n-\ell\lbSpacing,\,\ldots,\,n-2\lbSpacing,\,n-\lbSpacing\}.
\label{eq:perturbation-indices}
\end{align} 
Since~${\ell\lbSpacing<k<n-\ell\lbSpacing}$, it follows that~${|\lbPerturbIndices|=2\ell}$ and~${k\notin\lbPerturbIndices}$.

We first show how to encode which of these~${2\ell}$ path edges are crossed by a given cut using a quadratic form.
The matrices below are indexed by the~${n-1}$ path-edge positions.
Fix~${\lbBoundarySet\subseteq[n-1]}$ and write~${\lbBoundarySet=\{b_1<\ldots<b_m\}}$.
Define its signed indicator vector~${\lbBoundaryVector{\lbBoundarySet}\in\{-1,0,1\}^{n-1}}$ by
\[
\lbBoundaryVectorEntry{\lbBoundarySet}{i}
\;\coloneqq\;
\begin{cases}
(-1)^{r}, & \text{if }i=b_r\text{ for some }r\in[m],\\
0, & \text{if }i\notin\lbBoundarySet.
\end{cases}
\]
For example, if~${\lbBoundarySet=\{1,2,4\}}$, then~$\lbBoundaryVector{\lbBoundarySet}=(-1,1,0,-1,0,\ldots,0)$.   

Denote the maximum absolute value of an entry in a matrix~$\lbEncodingMatrix$ by~${\maxNorm{\lbEncodingMatrix}\coloneqq \max_{i,j}|M_{i,j}|}$. 
For a symmetric~$(n-1)\times(n-1)$ matrix~${\lbEncodingMatrix}$, let
\[
\lbQuadraticForm{\lbEncodingMatrix}{\lbBoundarySet}
\;\coloneqq\;
(\lbBoundaryVector{\lbBoundarySet})^\top
\lbEncodingMatrix
\lbBoundaryVector{\lbBoundarySet}
\;=\;
\sum_{r=1}^{m}
\sum_{s=1}^{m}
(-1)^{r+s}\lbEncodingMatrix_{b_r b_s}.
\]
The next lemma shows that we can choose~$\lbEncodingMatrix$ so that the quadratic form~${\lbQuadraticForm{\lbEncodingMatrix}{\lbBoundarySet}}$ determines~${\lbBoundarySet\cap\lbPerturbIndices}$.

\begin{restatable}[Encoding crossings with the perturbed path edges]{lemma}{encodingCrossingsWithPerturbedPathIndices}
\label{lemma:encoding-perturbed-edges}
There is a symmetric matrix~${\lbEncodingMatrix\in\Z^{(n-1)\times(n-1)}}$ with zero diagonal such that~${1\leq \maxNorm{\lbEncodingMatrix} \leq \lbWeightRange}$ and, for every~${\lbBoundarySet,\lbOtherBoundarySet\subseteq[n-1]}$ with~${|\lbBoundarySet|,|\lbOtherBoundarySet|\geq2}$,
\[
\lbQuadraticForm{\lbEncodingMatrix}{\lbBoundarySet}
\;=\;\lbQuadraticForm{\lbEncodingMatrix}{\lbOtherBoundarySet}
\qquad\implies\qquad
\lbBoundarySet\cap\lbPerturbIndices
\;=\;
\lbOtherBoundarySet\cap\lbPerturbIndices.
\]
Moreover, after setting~${\lbEncodingMatrix_{0j}\coloneqq0}$ and~${\lbEncodingMatrix_{i n}\coloneqq0}$ whenever these entries occur, the quantity
\[
\lbEncodingMatrix_{r,s-1}
+
\lbEncodingMatrix_{r-1,s}
-
\lbEncodingMatrix_{r-1,s-1}
-
\lbEncodingMatrix_{r,s}
\]
is a positive even integer for all~${1\leq r<s-1\leq n-1}$.
\end{restatable}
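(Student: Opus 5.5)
The plan is to build $\lbEncodingMatrix$ as a sum $\lbEncodingMatrix=\lbEncodingScale\,E+G$. The term $G$ is a fixed ``Green's-function'' matrix that makes the second-difference condition hold. The term $E$ is an encoding matrix of the same shape whose quadratic form determines $\lbBoundarySet\cap\lbPerturbIndices$. The scale $\lbEncodingScale$ is chosen so that $\lbEncodingScale E$ dominates $\lbQuadraticForm{G}{\cdot}$. Both pieces have the ``lower--upper product'' form
\[
M_{ij}=u_{\min(i,j)}\,v_{\max(i,j)}\quad(i\neq j),\qquad M_{ii}=0,
\]
with $u_0=0$, $u$ nondecreasing, $v_n=0$ and $v$ nonincreasing. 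The boundary conventions $\lbEncodingMatrix_{0j}=\lbEncodingMatrix_{in}=0$ then hold automatically. For $1\le r<s-1\le n-1$, all four entries in the second difference are strictly off-diagonal with first index smaller, so a direct expansion gives
\[
\lbEncodingMatrix_{r,s-1}+\lbEncodingMatrix_{r-1,s}-\lbEncodingMatrix_{r-1,s-1}-\lbEncodingMatrix_{r,s}=(u_r-u_{r-1})(v_{s-1}-v_s)\;\ge0 .
\]
For $G$, take $u_i=i$ and $v_j=n-j$, scaled by $2$, i.e.\ $G_{ij}=2\min(i,j)(n-\max(i,j))$. This contributes exactly $2$ to every second difference. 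For $E$, take integer sequences $u,v$ that are constant except for jumps at the positions of $\lbPerturbIndices$. Then $E$ contributes a nonnegative integer, and multiplying it by the even scale $\lbEncodingScale$ keeps the total second difference a positive even integer. The entries of $G$ are $O(n^2)$, so $\maxNorm{\lbEncodingMatrix}\le\lbWeightRange$ as long as the jumps of $u,v$ are $2^{O(\ell)}$ and $\lbEncodingScale=n^{O(1)}$.

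Next comes the domination step. Since $|\lbQuadraticForm{G}{\lbBoundarySet}|\le (n-1)^2\maxNorm{G}=O(n^4)$, choose $\lbEncodingScale$ even with $\lbEncodingScale>4\max_{\lbBoundarySet}|\lbQuadraticForm{G}{\lbBoundarySet}|$. Suppose $\lbQuadraticForm{\lbEncodingMatrix}{\lbBoundarySet}=\lbQuadraticForm{\lbEncodingMatrix}{\lbOtherBoundarySet}$. Then $\lbEncodingScale\bigl|\lbQuadraticForm{E}{\lbBoundarySet}-\lbQuadraticForm{E}{\lbOtherBoundarySet}\bigr|<\lbEncodingScale$, and since $\lbQuadraticForm{E}{\cdot}$ is integer-valued, $\lbQuadraticForm{E}{\lbBoundarySet}=\lbQuadraticForm{E}{\lbOtherBoundarySet}$. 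The problem therefore reduces to an injectivity statement for $E$ alone.

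The main obstacle is choosing $u,v$ (jumps $\alpha_1<\dots$ at $\lbPerturbIndicesLower=\{\lbSpacing,\dots,\ell\lbSpacing\}$ for $u$, and jumps at $\lbPerturbIndicesUpper$ for $v$) so that
\[
\lbQuadraticForm{E}{\lbBoundarySet}=2\sum_{r<s}(-1)^{r+s}u_{b_r}v_{b_s}
\]
determines $\lbBoundarySet\cap\lbPerturbIndices$ whenever $|\lbBoundarySet|\ge2$. The difficulty is that the signs depend on the parities of positions inside the whole of $\lbBoundarySet$, not only inside $\lbBoundarySet\cap\lbPerturbIndices$.

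My first attempt is to rewrite the sum by Abel summation over the jump positions. Since $u_{b}=\sum_{p\le b}\Delta u_p$ and $v_b=\sum_{q>b}\Delta v_q$, the form becomes $2\sum_{p<q}\Delta u_p\,\Delta v_q\,N_{pq}(\lbBoundarySet)$. Here $N_{pq}$ is an alternating count of pairs $r<s$ with $p\le b_r$ and $b_s<q$, which should collapse to a quantity in $\{-1,0,1\}$ (or a small range). It is determined by the parity of $|\lbBoundarySet\cap[p,q)|$ together with the parities of the elements of $\lbBoundarySet$ just after $p$ and just before $q$. Taking geometric jumps, e.g.\ $\Delta u,\Delta v\in\{8^{t}\}$ with distinct exponents for the $O(\ell)$ jump positions, the products $\Delta u_p\Delta v_q$ become $2^{O(\ell)}$-separated, so the digits $N_{pq}$ can be read off.

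I would then check that the pattern of these digits over $p\in\lbPerturbIndicesLower$ and $q\in\lbPerturbIndicesUpper$ recovers which perturbed positions lie in $\lbBoundarySet$: membership flips the parity of the relevant interval counts. If pairwise products give too many distinct digits, I would fall back on making the jumps of $u$ on $\lbPerturbIndicesLower$ and of $v$ on $\lbPerturbIndicesUpper$ separated by a factor $n^{O(1)}\cdot 4$ per index. That keeps all weights at $\lbWeightRange$ and decodes the perturbed positions greedily from the most significant jump downward.
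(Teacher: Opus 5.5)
\textbf{There is a genuine gap.} Your outer reduction is sound. The product form gives second difference $(u_r-u_{r-1})(v_{s-1}-v_s)$, the term $G$ contributes exactly $2$, and the domination step correctly reduces everything to injectivity of $\psi_E$. But that injectivity is the whole content of the lemma, and the proposed $E$ does not have it.

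\emph{The digits do not collapse.} The Abel-summation digit is
\[
N_{pq}(\mathcal B)=-\bigl\lfloor|\mathcal B\cap[p,q)|/2\bigr\rfloor .
\]
This is the same alternating-sum identity the paper uses to compute base-graph cut values. It depends on a count, not on parities, and it can be as large as $\Theta(n)$. So $\psi_E(\mathcal B)$ sees $\mathcal B$ only through counts on intervals between jump positions, and such counts cannot tell a position of $I$ from its neighbour.

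\emph{A concrete collision.} Use your convention $u_b=\sum_{p\le b}\Delta u_p$ and set $i\coloneqq \ell d$. Then $i+1\le k$, so $i+1\notin I$ and $u_{i+1}=u_i$. For any $x>i+1$,
\[
\psi_E(\{i,x\})=-2u_iv_x=-2u_{i+1}v_x=\psi_E(\{i+1,x\}),
\]
yet the two sets meet $I$ differently. Adding jumps at $I+1$ does not fix this. If $d\ge 2$ then $u_1=0$, so $\psi_E(\{1,d\})=\psi_E(\{1,d+1\})=0$, although only the first set meets $I$. A lone element of $\mathcal B\cap I$ is visible only through its pairing with another element of $\mathcal B$, and that element can sit anywhere in $[n-1]$.

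\emph{The decoding breaks the weight budget.} Digits of size $\Theta(n)$ cannot be read off in base $8$. The products $8^{t_p+t_q}$ also coincide for different pairs $(p,q)$. Your fallback, separating successive jumps by $n^{O(1)}$, gives entries $n^{\Theta(\ell)}$ rather than $n^{O(1)}2^{O(\ell)}$. For $\ell=\lceil\log n\rceil$ this is superpolynomial, which is exactly the regime of the polynomial-weight corollary.

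\textbf{How the paper avoids this.} It reverses your roles and sets
\[
M_{ij}=i(n-j)\Lambda+2M'_{ij}.
\]
The Green's-function term alone then yields positive even second differences, so the encoding matrix $M'$ need not have nonnegative second differences. In your setup $E$ must, and that is what pushes you into a rank-one step form with few jumps. Equality of $\psi_M$ gives equality of $\psi_{M'}$ by a congruence modulo $\Lambda$, the mirror image of your domination step.

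The encoding is $M'=H+\Gamma$, built from a certificate $(Z^{\mathcal B},\psi_H(\mathcal B))$ that determines $\mathcal B\cap I$:
\begin{itemize}
\item $Z^{\mathcal B}$, with entries $z_pz_q$ for $p\neq q$ in $I$, handles the case $|\mathcal B\cap I|\ge 2$.
\item $H_{ij}=-(n-|i-j|)(2n^2+p)$, for pairs meeting $I$ only in $p$, couples $p$ to every other position. An alternating-series bound gives $\psi_H(\mathcal B)=2(2n^2+p)t$ with $1\le t<2n$, which identifies $p$ when $|\mathcal B\cap I|=1$.
\end{itemize}
The certificate takes at most $N=3^{2\ell}\cdot 4n^5$ values. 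A pigeonhole choice of coefficients $\gamma_{pq}\in\{0,\dots,N^2\}$ therefore separates all pairs of distinct certificates with $\lVert M'\rVert_{\max}=n^{O(1)}2^{O(\ell)}$. Some counting argument of this kind, rather than digit separation, is what your proposal is missing.
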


\begin{proof}
    Deferred to~\Cref{subsec:constructing_the_encoding_matrix_M}.
\end{proof}

We now use the matrix~${\lbEncodingMatrix}$ from~\Cref{lemma:encoding-perturbed-edges} to construct the base graph for the lower bound.

\begin{restatable}[Base graph~$\lbBaseGraph{\tau}$]{definition}{baseGraphDefinition}
\label{def:base_graph}
Let~$n$ be sufficiently large and even, and let~${V=[n]}$.
Let~${\lbEncodingMatrix\in\Z^{(n-1)\times(n-1)}}$ be the matrix from~\Cref{lemma:encoding-perturbed-edges}.
Define~${\lbWeightScale\coloneqq2n^2 \maxNorm{\lbEncodingMatrix}}$ and set~${\lbEncodingMatrix_{0j}\coloneqq0}$ and~${\lbEncodingMatrix_{i n}\coloneqq0}$ whenever these entries occur below.

For a path~${\tau=(v_1,\ldots,v_n)\in\lbOrderings}$, let~${\lbBaseGraph{\tau}=(V,w)}$ be the complete graph with the following weights.
For each nonpath edge~${\{v_r,v_s\}}$ with~${s\geq r+2}$, define
\begin{align}
\lbEdgeWeight{r}{s}
\;&\coloneqq\;
\lbEncodingMatrix_{r,s-1}
+
\lbEncodingMatrix_{r-1,s}
-
\lbEncodingMatrix_{r-1,s-1}
-
\lbEncodingMatrix_{r,s}.
\label{eq:nonpath_edge_weights}
\end{align}
For each path edge~$\{v_i,v_{i+1}\}$ with~${i\in[n-1]}$, define 
\begin{align}
\lbEdgeWeight{i}{i+1}
\;&\coloneqq\;
n^3\lbWeightScale
-
\sum_{r=1}^{i-1}
\sum_{s=i+1}^{n}
\lbEdgeWeight{r}{s}
-
\sum_{s=i+2}^{n}
\lbEdgeWeight{i}{s}.
\label{eq:path_edge_weight}
\end{align}
\end{restatable}

We use the following properties of the base graph~$\lbBaseGraph{\tau}$ to prove the lower bound.

\begin{restatable}[Properties of the base graph]{lemma}{baseGraphProperties}
\label{lemma:base_graph_properties} 
    Let~${\lbBaseGraph{\tau}=(V,w)}$ be the base graph in~\Cref{def:base_graph}. 
    \begin{enumerate}[label=(\roman*)]
        \item \label{item:base_graph_edge_weights} 
        Every edge weight in~$\lbBaseGraph{\tau}$ is a positive even integer at most~${\lbWeightRange}$.
        
        \item \label{item:base_graph_min_cuts} 
        The minimum cuts of~$\lbBaseGraph{\tau}$ are exactly the prefix cuts of~$\tau$.
        
        \item \label{item:base_graph_boundary_intersections} 
        If~${\emptyset\subsetneq X,Y\subsetneq V}$ represent two non-prefix cuts and satisfy~${\cutGraphSet{\lbBaseGraph{\tau}}{X}=\cutGraphSet{\lbBaseGraph{\tau}}{Y}}$, then
        \[
        \lbBoundary{\tau}{X}\cap\lbPerturbIndices
        \;=\;
        \lbBoundary{\tau}{Y}\cap\lbPerturbIndices.
        \]  
    \end{enumerate} 
\end{restatable}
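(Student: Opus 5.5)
The plan is to derive a closed formula for every cut value of $\lbBaseGraph{\tau}$ in terms of the boundary set $\lbBoundary{\tau}{X}$:
\[
\cutGraphSet{\lbBaseGraph{\tau}}{X}
\;=\;
n^3\lbWeightScale\,\bigl|\lbBoundary{\tau}{X}\bigr|
+\lbQuadraticForm{\lbEncodingMatrix}{\lbBoundary{\tau}{X}} .
\]
All three items then follow by size comparisons. To get this formula, write $x_i\coloneqq\indicator{v_i\in X}$ and $d_i\coloneqq x_{i+1}-x_i$ for $i\in[n-1]$. Each $d_i$ is nonzero exactly when $i\in\lbBoundary{\tau}{X}$, and the nonzero values alternate in sign. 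Hence $d=\pm\lbBoundaryVector{\lbBoundary{\tau}{X}}$. An edge $\{v_r,v_s\}$ with $r<s$ crosses $X$ if and only if
\[
(x_s-x_r)^2=\Bigl(\sum_{i=r}^{s-1}d_i\Bigr)^2=\sum_{i,j\in[r,s-1]}d_id_j .
\]
Summing over edges gives $\cutGraphSet{\lbBaseGraph{\tau}}{X}=\sum_{i,j}d_id_jA_{ij}$. Here $A_{ij}$ is the total weight of edges $\{v_r,v_s\}$ with $r\le\min(i,j)$ and $s\ge\max(i,j)+1$.

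Next I identify $A$. For $i=j$, $A_{ii}$ is the weight of the prefix cut at position $i$. This consists of the path edge $\{v_i,v_{i+1}\}$ plus the nonpath edges with $r\le i<s$. By~\eqref{eq:path_edge_weight} this equals exactly $n^3\lbWeightScale$; the path weight was defined precisely to subtract these nonpath edges. For $i<j$, every edge with $r\le i$ and $s\ge j+1$ has $s\ge r+2$, so only nonpath edges contribute. Substituting~\eqref{eq:nonpath_edge_weights}, the double sum $\sum_{r=1}^{i}\sum_{s=j+1}^{n}$ of mixed second differences telescopes to $\lbEncodingMatrix_{ij}$. This uses the boundary conventions $\lbEncodingMatrix_{0,\cdot}=0$ and $\lbEncodingMatrix_{\cdot,n}=0$. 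Since $\lbEncodingMatrix$ has zero diagonal, $A=n^3\lbWeightScale I+\lbEncodingMatrix$, and the displayed formula follows because $d=\pm\lbBoundaryVector{\cdot}$. I expect this telescoping computation, with careful index bookkeeping, to be the main obstacle; everything else is estimation.

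With the formula in hand, the items follow in order. For~\ref{item:base_graph_edge_weights}, nonpath weights are positive even integers by \Cref{lemma:encoding-perturbed-edges}, and each is at most $4\maxNorm{\lbEncodingMatrix}$. A path weight is $n^3\lbWeightScale$ minus fewer than $n^2$ nonpath weights. Since $\lbWeightScale=2n^2\maxNorm{\lbEncodingMatrix}$, this is at least $n^3\lbWeightScale-4n^2\maxNorm{\lbEncodingMatrix}>0$. It is also even, because $\lbWeightScale$ is even, and it is at most $n^3\lbWeightScale=\lbWeightRange$. For~\ref{item:base_graph_min_cuts}, every nonempty proper $X$ has $|\lbBoundary{\tau}{X}|\ge1$, and $|\lbQuadraticForm{\lbEncodingMatrix}{\cdot}|\le (n-1)^2\maxNorm{\lbEncodingMatrix}<\lbWeightScale\le n^3\lbWeightScale/2$. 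So cut values with larger $|\lbBoundary{\tau}{X}|$ are strictly larger. Prefix cuts have a single boundary index, so $\psi=\lbEncodingMatrix_{ii}=0$ and their value is exactly $n^3\lbWeightScale$; every other cut exceeds this. For~\ref{item:base_graph_boundary_intersections}, equal cut values force equal $|\lbBoundary{\tau}{X}|=|\lbBoundary{\tau}{Y}|$ by the same gap argument, and hence equal quadratic forms. Both boundaries have size at least $2$ because the cuts are non-prefix. \Cref{lemma:encoding-perturbed-edges} then gives $\lbBoundary{\tau}{X}\cap\lbPerturbIndices=\lbBoundary{\tau}{Y}\cap\lbPerturbIndices$.
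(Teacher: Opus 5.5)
Your proposal is correct and follows the paper's structure. Both proofs establish $\cutGraphSet{\lbBaseGraph{\tau}}{X}=|\lbBoundary{\tau}{X}|\,n^3\lbWeightScale+\lbQuadraticForm{\lbEncodingMatrix}{\lbBoundary{\tau}{X}}$ via the same telescoping identity $\sum_{r\le i}\sum_{s\ge j+1}w_{r,s}=\lbEncodingMatrix_{ij}$, and then read off all three items by the same size comparisons and \Cref{lemma:encoding-perturbed-edges}. Your derivation differs only in how you reach the formula: writing the crossing indicator as $\bigl(\sum_{i=r}^{s-1}d_i\bigr)^2$ makes the cut function the quadratic form $d^\top(n^3\lbWeightScale\,\mathrm{Id}+\lbEncodingMatrix)\,d$ directly, which avoids the paper's bookkeeping with $t_{rs}$ and the $-2\lfloor t_{rs}/2\rfloor$ coefficients.
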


\begin{proof}
    Deferred to~\Cref{subsec:properties_of_the_base_graph}.
\end{proof}

\subsection{Perturbing the base graph}
\label{subsec:perturbing-base-graph}

For~${\tau\in\lbOrderings}$ and~${i\in\lbPerturbIndices}$, let~${\lbPerturbedGraph{\tau}{i}}$ be the \emph{perturbed graph} obtained from~${\lbBaseGraph{\tau}}$ by decreasing the weight of the path edge~${\lbPathEdge{\tau}{i}=\{v_i,v_{i+1}\}}$ by~$1$ and leaving all other weights unchanged.
The next lemma gives the properties of this perturbation that we use in the proof of~\Cref{thm:lower}.

\begin{lemma}[Perturbing one path edge]
\label{lemma:one-edge-perturbation}
Fix~${\tau\in\lbOrderings}$ and~${i\in\lbPerturbIndices}$.
Then~${\lbPerturbedGraph{\tau}{i}}$ has positive integer edge weights bounded by~${\lbWeightRange}$ and satisfies the following properties.
\begin{enumerate}[label=(\roman*)]
    \item \label{item:perturbed_unique_min}
    The prefix cut represented by~${\lbPrefix{\tau}{i}}$ is the unique minimum cut of~${\lbPerturbedGraph{\tau}{i}}$.
    
    \item \label{item:perturbed_strict_comparisons}
    For all~${X,Y\subseteq V}$, if~${\cutGraphSet{\lbBaseGraph{\tau}}{X}<\cutGraphSet{\lbBaseGraph{\tau}}{Y}}$, then~${\cutGraphSet{\lbPerturbedGraph{\tau}{i}}{X}<\cutGraphSet{\lbPerturbedGraph{\tau}{i}}{Y}}$. 
    
    \item \label{item:perturbed_changed_comparisons}  
    For all~${X,Y\subseteq V}$, if~${\cutGraphSet{\lbBaseGraph{\tau}}{X}=\cutGraphSet{\lbBaseGraph{\tau}}{Y}}$ but~${\cutGraphSet{\lbPerturbedGraph{\tau}{i}}{X}\neq\cutGraphSet{\lbPerturbedGraph{\tau}{i}}{Y}}$, then~$X$ and~$Y$ represent prefix cuts of~$\tau$, and one of them belongs to~${\{\lbPrefix{\tau}{i},\,V\setminus\lbPrefix{\tau}{i}\}}$.
\end{enumerate}
\end{lemma}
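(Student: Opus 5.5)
Everything will follow from the single identity
\[
\cutGraphSet{\lbPerturbedGraph{\tau}{i}}{X}
\;=\;
\cutGraphSet{\lbBaseGraph{\tau}}{X}-\indicator{i\in\lbBoundary{\tau}{X}},
\]
valid for every $X\subseteq V$. It holds because the only changed edge is $\lbPathEdge{\tau}{i}$, and it crosses $X$ exactly when $i\in\lbBoundary{\tau}{X}$. For the weight bounds I will invoke \Cref{lemma:base_graph_properties}\ref{item:base_graph_edge_weights}. Each base weight is a positive even integer, hence at least $2$, so after subtracting $1$ the weight of $\lbPathEdge{\tau}{i}$ is still a positive integer. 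All other weights are unchanged and still at most $\lbWeightRange$.

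For \ref{item:perturbed_strict_comparisons}, I will use that all base edge weights are even, so every base cut value is even. A strict inequality $\cutGraphSet{\lbBaseGraph{\tau}}{X}<\cutGraphSet{\lbBaseGraph{\tau}}{Y}$ therefore means a gap of at least $2$. By the identity, the perturbation lowers $X$'s value by at most $1$ and never raises $Y$'s, so the strict inequality survives.

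For \ref{item:perturbed_unique_min}, let $\lambda$ be the minimum cut value of $\lbBaseGraph{\tau}$. By \Cref{lemma:base_graph_properties}\ref{item:base_graph_min_cuts}, the cuts of value $\lambda$ are exactly the prefix cuts, and every other nontrivial cut has value at least $\lambda+2$ by evenness.
\begin{itemize}
\item The prefix cut $\lbPrefix{\tau}{i}$ satisfies $\lbBoundary{\tau}{\lbPrefix{\tau}{i}}=\{i\}$, so its perturbed value is $\lambda-1$.
\item Any other prefix cut $\lbPrefix{\tau}{j}$ with $j\neq i$ has boundary $\{j\}$, which does not contain $i$, so it keeps value $\lambda$.
\item Every non-prefix cut has perturbed value at least $\lambda+1$.
\end{itemize}
Hence $\lbPrefix{\tau}{i}$ is the unique minimum cut, up to complements.

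For \ref{item:perturbed_changed_comparisons}, suppose $\cutGraphSet{\lbBaseGraph{\tau}}{X}=\cutGraphSet{\lbBaseGraph{\tau}}{Y}$ but the perturbed values differ. By the identity, exactly one of $i\in\lbBoundary{\tau}{X}$ and $i\in\lbBoundary{\tau}{Y}$ holds, so $\lbBoundary{\tau}{X}\cap\lbPerturbIndices\neq\lbBoundary{\tau}{Y}\cap\lbPerturbIndices$. I expect this step to carry the real content, and the plan is a case split on whether $X$ and $Y$ are prefix cuts.
\begin{itemize}
\item If both are non-prefix cuts, \Cref{lemma:base_graph_properties}\ref{item:base_graph_boundary_intersections} forces the two intersections to be equal, a contradiction.
\item If exactly one is a prefix cut, it has value $\lambda$ and the other has value strictly larger by \ref{item:base_graph_min_cuts}, contradicting equality of base values.
\end{itemize}
So both are prefix cuts, say with boundaries $\{j\}$ and $\{j'\}$. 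Exactly one of $j,j'$ equals $i$, which means exactly one of $X,Y$ represents $\lbPrefix{\tau}{i}$, i.e., lies in $\{\lbPrefix{\tau}{i},V\setminus\lbPrefix{\tau}{i}\}$.

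I also need to handle trivial sets in \ref{item:perturbed_strict_comparisons} and \ref{item:perturbed_changed_comparisons}. If $X$ is $\emptyset$ or $V$, its boundary is empty and its value is $0$ in both graphs, so the identity still applies. In \ref{item:perturbed_changed_comparisons}, equality with a nontrivial cut is impossible because all weights are positive; if both $X,Y$ are trivial, both values stay $0$ and the hypothesis fails. So nothing extra is needed there.
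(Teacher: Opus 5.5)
Your proposal is correct and follows essentially the same route as the paper: the one-edge identity for perturbed cut values, evenness of base cut values to get a gap of at least $2$ (used both for preserving strict comparisons and for separating prefix from non-prefix cuts), and the case split in (iii) that rules out mixed and non-prefix pairs via \Cref{lemma:base_graph_properties}. Your explicit check that the trivial sets $\emptyset$ and $V$ cause no trouble in (ii) and (iii) is a small piece of extra care that the paper leaves implicit.
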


\begin{proof}
By~\Cref{lemma:base_graph_properties}\ref{item:base_graph_edge_weights}, every edge weight in~${\lbBaseGraph{\tau}}$ is a positive even integer at most~${\lbWeightRange}$. 
Hence decreasing the weight of~${\{v_i,v_{i+1}\}}$ by~$1$ gives a positive integer, and it follows that~${\lbPerturbedGraph{\tau}{i}}$ has positive integer edge weights bounded by~${\lbWeightRange}$. 
Now for every~${X\subseteq V}$,
\begin{align}
    \label{eq:perturbed_cut_values}
    \cutGraphSet{\lbPerturbedGraph{\tau}{i}}{X}
    \;=\;
    \cutGraphSet{\lbBaseGraph{\tau}}{X}
    -
    \indicator{\lbPathEdge{\tau}{i}\text{ crosses }X}.
\end{align} 
By~\Cref{lemma:base_graph_properties}\ref{item:base_graph_min_cuts}, the minimum cuts of~$\lbBaseGraph{\tau}$ are exactly the prefix cuts~${\lbPrefix{\tau}{1},\ldots,\lbPrefix{\tau}{n-1}}$, so every non-prefix cut has strictly larger value than a prefix cut in~${\lbBaseGraph{\tau}}$. 
Since all cut values in~${\lbBaseGraph{\tau}}$ are even, every non-prefix cut has value at least~${2+\cutGraphSet{\lbPerturbedGraph{\tau}{i}}{\lbPrefix{\tau}{i}}}$. The perturbation decreases any cut value by at most~$1$, so every non-prefix cut still has larger value than~${\lbPrefix{\tau}{i}}$ in~${\lbPerturbedGraph{\tau}{i}}$. 
Among the prefix sets, only~${\lbPrefix{\tau}{i}}$ crosses~${\lbPathEdge{\tau}{i}}$, so it follows by~\eqref{eq:perturbed_cut_values} that~${\cutGraphSet{\lbPerturbedGraph{\tau}{i}}{\lbPrefix{\tau}{i}}=\cutGraphSet{\lbBaseGraph{\tau}}{\lbPrefix{\tau}{i}}-1}$, while the cut value of all other prefix sets remains unchanged in the perturbed graph.
Therefore the cut represented by~${\lbPrefix{\tau}{i}}$ is the unique minimum cut of~${\lbPerturbedGraph{\tau}{i}}$, which proves~(i).

Fix~${X,Y\subseteq V}$. 
Suppose first that~${\cutGraphSet{\lbBaseGraph{\tau}}{X}<\cutGraphSet{\lbBaseGraph{\tau}}{Y}}$.
Since every cut value in~${\lbBaseGraph{\tau}}$ is even,
\[
\cutGraphSet{\lbBaseGraph{\tau}}{Y}
-
\cutGraphSet{\lbBaseGraph{\tau}}{X}
\;\geq\;
2.
\]
By~\eqref{eq:perturbed_cut_values}, each of the two cut values decreases by either~$0$ or~$1$, so~${\cutGraphSet{\lbPerturbedGraph{\tau}{i}}{Y}-\cutGraphSet{\lbPerturbedGraph{\tau}{i}}{X}\geq 1}$ and therefore~${\cutGraphSet{\lbPerturbedGraph{\tau}{i}}{X}<\cutGraphSet{\lbPerturbedGraph{\tau}{i}}{Y}}$.  This proves~(ii).

Finally, suppose that~${\cutGraphSet{\lbBaseGraph{\tau}}{X}=\cutGraphSet{\lbBaseGraph{\tau}}{Y}}$ but~${\cutGraphSet{\lbPerturbedGraph{\tau}{i}}{X}\neq\cutGraphSet{\lbPerturbedGraph{\tau}{i}}{Y}}$.
By~\eqref{eq:perturbed_cut_values}, exactly one of~$X$ and~$Y$ crosses~${\lbPathEdge{\tau}{i}}$. 
The sets~$X$ and~$Y$ cannot represent one prefix cut and one non-prefix cut, since the prefix cuts are exactly the minimum cuts of~${\lbBaseGraph{\tau}}$ and therefore have strictly smaller value than every non-prefix cut.
Suppose for contradiction that~$X$ and~$Y$ both represent non-prefix cuts.
Then by~\Cref{lemma:base_graph_properties}\ref{item:base_graph_boundary_intersections},
${\lbBoundary{\tau}{X}\cap\lbPerturbIndices=\lbBoundary{\tau}{Y}\cap\lbPerturbIndices}$.
Since~${i\in\lbPerturbIndices}$, this implies that~${\lbPathEdge{\tau}{i}}$ crosses~$X$ if and only if it crosses~$Y$, a contradiction.

Thus~$X$ and~$Y$ both represent prefix cuts.
Among the prefix cuts of~$\tau$, the prefix cut at position~$i$ is the unique one crossed by~${\lbPathEdge{\tau}{i}}$.
Since exactly one of~$X$ and~$Y$ crosses~${\lbPathEdge{\tau}{i}}$, one of them belongs to~${\{\lbPrefix{\tau}{i},\,V\setminus\lbPrefix{\tau}{i}\}}$, which proves~(iii).   
\end{proof}

\subsection{Counting hidden orderings}
\label{subsec:counting-hidden-orderings}
We prove the lower bound in~\Cref{thm:lower} by choosing~$\tau$ uniformly at random from~$\lbOrderings$ and then possibly decreasing the weight of a single path edge~${\lbPathEdge{\tau}{i}}$ with~${i\in\lbPerturbIndices}$. 
Since the cut represented by~${\lbPrefix{\tau}{i}}$ is the unique minimum cut of~${\lbPerturbedGraph{\tau}{i}}$, distinguishing the base graph from the possible perturbations forces the algorithm to identify many prefix cuts of~$\tau$.
 
The next lemma allows us to count how many orderings in~$\lbOrderings$ can share at least~$\ell$ of these prefix cuts; this is where the~${\log\ell}$ factor in~\Cref{thm:lower} comes from.  

\begin{lemma}[Number of orderings sharing at least~$\ell$ prefix cuts]
\label{lemma:common-cuts}
Fix~${\tau\in\lbOrderings}$.
There are at most~${|\lbOrderings|2^{-\Omega(n\log\ell)}}$ distinct orderings~${\lbOtherOrdering\in\lbOrderings}$ satisfying~${\bigl|\bigl\{i\in\lbPerturbIndices\setSeparator\lbPrefix{\tau}{i}=\lbPrefix{\lbOtherOrdering}{i}\bigr\}\bigr|\geq\ell}$. 
\end{lemma}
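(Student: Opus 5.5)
Fix $\tau$ and count the orderings $\lbOtherOrdering\in\lbOrderings$ that agree with $\tau$ on at least $\ell$ prefix sets at positions in $\lbPerturbIndices$. The plan is a union bound over the set $J\subseteq\lbPerturbIndices$ of agreeing positions, with $|J|=\ell$. There are at most $\binom{2\ell}{\ell}\le 4^\ell$ choices of $J$.

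For a fixed $J$, the condition $\lbPrefix{\tau}{i}=\lbPrefix{\lbOtherOrdering}{i}$ for all $i\in J$ says exactly that $\lbOtherOrdering$ maps each block between consecutive positions of $J\cup\{0,k,n\}$ onto the same set of vertices as $\tau$. Every $\lbOtherOrdering\in\lbOrderings$ already has $\lbPrefix{\lbOtherOrdering}{k}=\lbTargetCut$, so including $k$ costs nothing. Only the order within each block is free. Hence the number of such $\lbOtherOrdering$ is $\prod_j a_j!$, where the block lengths $a_j$ sum to $n$.

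Next I compare this with $|\lbOrderings|=(k!)^2$. By pigeonhole, $J$ contains at least $\ell/2$ positions from one half, say the lower half $\{\lbSpacing,\dots,\ell\lbSpacing\}$. The positions in $\lbPerturbIndices$ are spaced exactly $\lbSpacing$ apart and $\lbSpacing=\lfloor k/(\ell+1)\rfloor$. So the first half of positions $1..k$ is cut into at least $\ell/2+1$ blocks. All but possibly one of these blocks has length a multiple of $\lbSpacing$, and the remaining block has length at most $k$.

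The key estimate is that cutting a segment of length $k$ into pieces reduces $k!$ by a multinomial factor $\binom{k}{a_1,\dots,a_t}$. I would bound this multinomial from below by restricting to the $\ell/2$ pieces that lie inside $[0,\ell\lbSpacing]$. If $\ell\lbSpacing\ge k/2$, this segment of length $\Theta(k)$ is split at $\ge \ell/2$ cut points, all taken from a grid of spacing $\lbSpacing$.

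The cleanest route is to note that $\prod a_j!$ is maximized, subject to the cuts, when the cut points are consecutive grid points. The agreed positions then isolate $\ell/2-1$ blocks of length exactly $\lbSpacing$ and leave one large remainder. In that case
\[
\frac{k!}{(\lbSpacing!)^{\ell/2-1}(k-(\ell/2-1)\lbSpacing)!}\;\ge\;\binom{k}{(\ell/2-1)\lbSpacing}\,\frac{((\ell/2-1)\lbSpacing)!}{(\lbSpacing!)^{\ell/2-1}},
\]
and the last factor is a multinomial of $\Theta(k)$ elements into $\Theta(\ell)$ equal parts. By Stirling it is $2^{\Omega(k\log\ell)}$, since $(\ell/2)\lbSpacing=\Theta(k)$ when $\ell\ge2$. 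So each fixed $J$ contributes at most $|\lbOrderings|\,2^{-\Omega(n\log\ell)}$.

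Multiplying by $4^\ell$ keeps the bound, since $\ell\le k$ gives $\ell=O(n)$ and hence $4^\ell=2^{O(\ell)}$. If $\ell$ is so large that $\ell\gg n\log\ell$ could occur, I would instead use the cruder union bound over $J\subseteq\lbPerturbIndices$, which is $2^{2\ell}$. Since $\ell<k$, this only requires the $\Omega(k\log\ell)$ term to have a big enough constant. This is fine for large $\ell$, where the saving per block is about $\log\lbSpacing!$.

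The main obstacle is the constant-chasing at this last point. When $\ell$ is small (constant), $\log\ell=O(1)$ and $2^{O(\ell)}$ is also $O(1)$, so the claim is trivial up to constants. When $\ell$ is close to $k$, $\lbSpacing=1$ and the blocks are singletons; then the count is $(k-\ell/2)!\cdot k!$ against $(k!)^2$, a saving of $2^{\Omega(k\log k)}$ that dominates $4^\ell$. I would handle these regimes explicitly, and justify the rearrangement claim by the log-convexity of factorials: merging pieces only increases the product of factorials.
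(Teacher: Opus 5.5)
Your route is the paper's: a union bound over the agreeing positions, pigeonhole into one half of $I$, the exact count $\prod_j a_j!$ of consistent orderings, and the exchange (log-convexity) argument that pushes all but one block down to the minimum length $d$. However, the estimate you actually write down has an off-by-one error that makes it vacuous for the smallest allowed $\ell$.

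With $h=\lfloor \ell/2\rfloor$ agreeing cut points $j_1<\dots<j_h$ in $\{d,2d,\dots,\ell d\}$, the interval $[0,k]$ splits into $h+1$ blocks $[0,j_1],[j_1,j_2],\dots,[j_h,k]$, each of length at least $d$. The block starting at $0$ counts. So the worst case is $h$ blocks of length exactly $d$ plus one block of length $k-hd$. It is not $\ell/2-1$ blocks of length $d$ plus a remainder: your configuration has only $h$ blocks, as if two had been merged. That is still a valid upper bound. But for $\ell\in\{2,3\}$ it contains no block of length $d$, and your display collapses to $1$, giving no saving at all.

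Your fallback for constant $\ell$, that ``the claim is trivial up to constants,'' is false. For $\ell=2$ the lemma still requires the qualifying orderings to be a $2^{-\Omega(n)}$ fraction of the family. Relatedly, the multinomial factor alone is not $2^{\Omega(k\log\ell)}$ for small $\ell$; it equals $1$ when there is a single part. In that regime the saving must come from the binomial factor.

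The repair is to keep all $h$ blocks, as the paper does. For a fixed agreeing set in one half, the fraction of qualifying orderings is then at most
\[
\frac{(k-hd)!\,(d!)^h}{k!}\;\le\;\Bigl(\frac{d}{k-hd+1}\Bigr)^{hd}\;\le\;\Bigl(\frac{2}{\ell+1}\Bigr)^{k/9}.
\]
This uses $k!/(k-hd)!\ge(k-hd+1)^{hd}$, $(d!)^h\le d^{hd}$, $k/9\le hd<k/2$, and $d\le k/(\ell+1)$. The bound is exponentially small uniformly for every $\ell\ge 2$, since the base is at most $2/3$, so no Stirling split into binomial and multinomial factors is needed.

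For absorbing the union bound, the clean dichotomy, which your last paragraph only gestures at, is as follows. Once $\log\ell$ exceeds a suitable absolute constant, $k\log\ell$ dominates the at most $2\ell<2k$ bits lost to the union bound. Below that threshold, $4^\ell=O(1)$ is absorbed because $n$ is large.
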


\begin{proof}
Sample~$\uniformSample{\lbOtherOrdering}{\lbOrderings}$ uniformly at random, and let~$\lbSharedPrefixEvent$ be the event that
\[
\left|
\bigl\{
i\in\lbPerturbIndices
\setSeparator
\lbPrefix{\tau}{i}
=
\lbPrefix{\lbOtherOrdering}{i}
\bigr\}
\right|
\;\geq\;
\ell.
\]
Since every ordering in~$\lbOrderings$ is equally likely, it is enough to prove~${\pr{\lbSharedPrefixEvent}\leq 2^{-\Omega(n\log\ell)}}$. 

Write~${\lbPerturbIndices=\lbPerturbIndicesLower\cup\lbPerturbIndicesUpper}$ where~${\lbPerturbIndicesLower \;\coloneqq\; \{ \lbSpacing,\,2\lbSpacing,\,\ldots,\,\ell\lbSpacing \}}$ and~${\lbPerturbIndicesUpper \;\coloneqq\; \{n-\ell\lbSpacing,\,\ldots,\,n-2\lbSpacing,\,n-\lbSpacing\}}$, and let~${h\coloneqq\lfloor\ell/2\rfloor}$.
If the event~$\lbSharedPrefixEvent$ occurs, then at least half of the indices in~${\bigl\{i\in\lbPerturbIndices\setSeparator\lbPrefix{\tau}{i}=\lbPrefix{\lbOtherOrdering}{i}\bigr\}}$ belong to the same set~${\lbPerturbIndicesLower}$ or~${\lbPerturbIndicesUpper}$.  
Hence there exist~${a\in\{1,2\}}$ and~${J\subseteq\lbPerturbIndices_a}$ with~${|J|=h}$ such that~${\lbPrefix{\tau}{i}=\lbPrefix{\lbOtherOrdering}{i}}$ for every~${i\in J}$. Fix such an~$a$ and~$J$.

If~${a=1}$, write~${J=\{j_1,\ldots,j_h\}}$, and if~${a=2}$, write~${J=\{n-j_h,\ldots,n-j_1\}}$, where~${j_1<\ldots<j_h}$ and~${j_r\in\lbPerturbIndicesLower}$ for each~${r\in[h]}$. 
By symmetry, it is enough to analyze the case with~${a=1}$; for~${a=2}$, taking complements changes equality of the prefixes at positions~${n-j_r}$ into equality of the last~${j_r}$ vertices, and reading the ordering from right to left gives the same probabilities.

Suppose therefore that~${a=1}$. 
Write~${\tau=(v_1,\ldots,v_n)}$ and~${\lbOtherOrdering=(v_1',\ldots,v_n')}$, and let~${j_0\coloneqq0}$ and~${j_{h+1}\coloneqq k}$.
Then~${\lbPrefix{\tau}{i}=\lbPrefix{\lbOtherOrdering}{i}}$ for every~${i\in J}$ if and only if 
\[
\bigl\{v_i\setSeparator j_r<i\leq j_{r+1}\bigr\}
\;=\;
\bigl\{v_i'\setSeparator j_r<i\leq j_{r+1}\bigr\}
\qquad
\text{for every }r\in\{0,\ldots,h\}.
\]
Since~${(v_1',\ldots,v_k')}$ is uniformly ordered, all of these equalities hold simultaneously with probability
\[
\pr{
\lbPrefix{\tau}{i}
=
\lbPrefix{\lbOtherOrdering}{i}
\text{ for every }i\in J
}
\;=\;
\frac{\prod_{r=0}^{h}(j_{r+1}-j_r)!}{k!}.
\]
Each difference satisfies~${j_{r+1}-j_r \geq \lbSpacing}$ and their sum is~${\sum_{r=0}^{h}(j_{r+1}-j_r)=k}$. Therefore the product~${\prod_{r=0}^{h}(j_{r+1}-j_r)!}$ is maximized when~$h$ of the differences equal~$\lbSpacing$ and the remaining difference equals~${k-h\lbSpacing}$. 
This holds as, if~${x\geq y>\lbSpacing}$, replacing~${x,y}$ by~${x+1,y-1}$ multiplies~${x!y!}$ by~${(x+1)/y\geq1}$. 
Hence~${\prod_{r=0}^{h}(j_{r+1}-j_r)!\leq(k-h\lbSpacing)!(\lbSpacing!)^h}$. 
Moreover, as~${2\leq\ell<k}$, the definitions of~$h$ and~$\lbSpacing$ give~${k/9\leq h\lbSpacing<k/2}$ and~${\lbSpacing\leq k/(\ell+1)}$.
Therefore
\[
\pr{
\lbPrefix{\tau}{i}
=
\lbPrefix{\lbOtherOrdering}{i}
\text{ for every }i\in J
}
\;\leq\;
\frac{
(k-h\lbSpacing)!
(\lbSpacing!)^h
}{k!}
\;\leq\;
\left(
\frac{\lbSpacing}
{k-h\lbSpacing+1}
\right)^{h\lbSpacing}
\;\leq\;
\left(\frac{2}{\ell+1}\right)^{k/9}.
\]
There are~${2\binom{\ell}{h}}$ choices of~$a$ and~$J$.
Thus, by a union bound,
\[
\pr{\lbSharedPrefixEvent}
\;\leq\;
2\binom{\ell}{h}
\left(\frac{2}{\ell+1}\right)^{k/9}
\;\leq\;
2^{\ell+1}
\left(\frac{2}{\ell+1}\right)^{k/9}
\;\leq\; 
2^{-\Omega(k\log\ell)}.
\]
Since~${n=2k}$ and~$n$ is sufficiently large,~${\pr{\lbSharedPrefixEvent}\leq 2^{-\Omega(n\log\ell)}}$. 
The result follows as~$\lbOtherOrdering$ is uniform over~$\lbOrderings$, so at most~${|\lbOrderings| 2^{-\Omega(n\log\ell)}}$ orderings satisfy~$\lbSharedPrefixEvent$. 
\end{proof}

\subsection{A hard distribution over graphs}
\label{subsec:distributional-lower-bound} 
Let~${\lbDistribution}$ be the following distribution over graphs.
First sample~${\uniformSample{\tau}{\lbOrderings}}$.
With probability~${1/2}$, return the base graph~${\lbBaseGraph{\tau}}$, and with probability~${1/2}$, sample~${\uniformSample{i}{\lbPerturbIndices}}$ and return the perturbed graph~${\lbPerturbedGraph{\tau}{i}}$.

We use the distribution~$\lbDistribution$ to prove~\Cref{thm:lower}.
By Yao's principle, it is enough to show that a deterministic algorithm that outputs a minimum cut with probability at least~${9/10}$ over~$\lbDistribution$ must make~${\Omega(n\log\ell)}$ comparisons.

\begin{lemma}[Correctness on many graphs]
\label{lemma:many-correct-base-and-perturbed-graphs}
Let~$\alg$ be a deterministic algorithm that outputs a minimum cut with probability at least~${9/10}$ when the input is drawn from~$\lbDistribution$.
Then there are at least~${2|\lbOrderings|/5}$ orderings~$\tau$ for which~$\alg$ is correct on the base graph~${\lbBaseGraph{\tau}}$ and on at least~${\ell+1}$ of the perturbed graphs~${\lbPerturbedGraph{\tau}{i}}$.
\end{lemma}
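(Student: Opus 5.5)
This is a Markov/averaging argument over~$\tau$. For each~$\tau\in\lbOrderings$, let~$p_\tau$ be the probability, conditioned on~$\tau$, that~$\alg$ errs on a graph drawn from~$\lbDistribution$. Then~$\ex[\tau]{p_\tau}\leq 1/10$. The conditional distribution given~$\tau$ puts mass~$1/2$ on~$\lbBaseGraph{\tau}$ and mass~$1/(4\ell)$ on each~$\lbPerturbedGraph{\tau}{i}$, since~$|\lbPerturbIndices|=2\ell$. Since~$\alg$ is deterministic, each of these graphs is either handled correctly or not.

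Call~$\tau$ \emph{bad} if~$\alg$ fails on~$\lbBaseGraph{\tau}$ or fails on at least~$\ell$ of the~$2\ell$ perturbed graphs. Failing on the base graph forces~$p_\tau\geq 1/2$. Failing on at least~$\ell$ perturbed graphs forces~$p_\tau\geq \ell\cdot\frac{1}{4\ell}=1/4$. So every bad~$\tau$ has~$p_\tau\geq 1/4$, and Markov's inequality gives
\[
\pr[\tau]{\tau\text{ bad}}
\;\leq\;
\pr[\tau]{p_\tau\geq 1/4}
\;\leq\;
\frac{1/10}{1/4}
\;=\;
\frac{2}{5}.
\]
Hence at least~$3|\lbOrderings|/5\geq 2|\lbOrderings|/5$ orderings are good.

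A good~$\tau$ has~$\alg$ correct on~$\lbBaseGraph{\tau}$. It also fails on at most~$\ell-1$ perturbed graphs, so it is correct on at least~$2\ell-(\ell-1)=\ell+1$ of them. This is exactly the claim.

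The only step needing care is fixing the failure threshold in the definition of bad. Failing on~$\ell$ or more perturbed graphs must count as bad, so that good orderings retain at least~$\ell+1$ correct perturbations, while the resulting Markov bound must still stay below~$3/5$. The threshold~$\ell$ with the bound~$1/4$ does both, with slack.
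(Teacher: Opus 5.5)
Your argument is correct, and it takes a slightly different route from the paper's. The paper uses the full error budget $1/10$ twice. First, $\alg$ is correct on $\lbBaseGraph{\tau}$ for at least a $4/5$ fraction of the orderings. Second, $\alg$ is correct on at least a $4/5$ fraction of the pairs $(\tau,i)$, and the averaging inequality $8\ell/5\le(1+p)\ell$ turns this into at least a $3/5$ fraction of orderings with at least $\ell+1$ correct perturbations. Intersecting the two events then gives $4/5+3/5-1=2/5$.

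You instead apply Markov's inequality once to the conditional error $p_\tau$, using the observation that either failure mode already costs at least $1/4$ of the conditional mass. This avoids splitting the budget and the final intersection, and it yields the stronger fraction $3/5$, which implies the stated $2/5$. Either constant suffices for \Cref{lemma:distributional-lower-bound}, which only needs a constant fraction of the orderings.
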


\begin{proof}
The distribution~$\lbDistribution$ returns a base graph~${\lbBaseGraph{\tau}}$ with probability~${1/2}$.
Since~$\alg$ succeeds with probability at least~${9/10}$ over~$\lbDistribution$, it follows that~$\alg$ is correct on~${\lbBaseGraph{\tau}}$ for at least a~${4/5}$ fraction of the orderings~${\tau\in\lbOrderings}$.

Similarly, when~$\lbDistribution$ returns a perturbed graph, the pair~${(\tau,i)}$ is uniform over~${\lbOrderings\times\lbPerturbIndices}$.
Thus~$\alg$ outputs the minimum cut of~${\lbPerturbedGraph{\tau}{i}}$ for at least a~${4/5}$ fraction of these pairs.
Since~${|\lbPerturbIndices|=2\ell}$,
\[
\ex{\left|
\left\{
i\in\lbPerturbIndices
\setSeparator
\alg\text{ is correct on }\lbPerturbedGraph{\tau}{i}
\right\}
\right|}
\;\geq\;
2\ell\cdot\frac45
\;=\;
\frac{8\ell}{5},
\]
where the expectation is over~$\tau$. 
Let~$p$ be the fraction of orderings~$\tau$ for which~$\alg$ is correct on at least~${\ell+1}$ perturbed graphs.
For the remaining~${1-p}$ fraction, it is correct on at most~$\ell$ perturbations, so~${8\ell/5\leq p(2\ell)+(1-p)\ell=(1+p)\ell}$.
Hence~${p\geq3/5}$.

Thus~$\alg$ is correct on the base graph for at least a~${4/5}$ fraction of the orderings~${\tau\in\lbOrderings}$, and on at least~${\ell+1}$ perturbations for at least a~${3/5}$ fraction of the orderings.
The two properties therefore hold simultaneously for at least a~${2/5}$ fraction of the orderings~${\tau\in\lbOrderings}$.
\end{proof}

\begin{lemma}[Deterministic comparison lower bound under~$\lbDistribution$]
\label{lemma:distributional-lower-bound}
Let~$\alg$ be a deterministic comparison algorithm that makes at most~$q$ comparisons on every $n$-vertex input graph.
If~$\alg$ outputs a minimum cut with probability at least~${9/10}$ when the input is drawn from~$\lbDistribution$, then
\[
q
\;=\;
\Omega(n\log\ell).
\]
\end{lemma}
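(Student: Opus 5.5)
Let $\lbOrderings^\ast$ be the set of at least $2|\lbOrderings|/5$ orderings given by \Cref{lemma:many-correct-base-and-perturbed-graphs}. For each $\tau\in\lbOrderings^\ast$, $\alg$ is correct on $\lbBaseGraph{\tau}$ and on the perturbed graphs $\lbPerturbedGraph{\tau}{i}$ for all $i$ in some set $J_\tau\subseteq\lbPerturbIndices$ with $|J_\tau|\geq\ell+1$. The plan is a transcript-counting argument. We show that the transcript of $\alg$ on the base graph $\lbBaseGraph{\tau}$ must certify at least $\ell$ of the prefix cuts at positions in $\lbPerturbIndices$. Since each transcript is consistent with only few orderings sharing those prefix cuts (\Cref{lemma:common-cuts}), there must be at least $2^{\Omega(n\log\ell)}$ distinct transcripts. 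Each transcript is a sequence of at most $q$ ternary answers, and $\alg$ is deterministic, so there are at most $3^q$ transcripts, giving $q=\Omega(n\log\ell)$.

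First, fix $\tau\in\lbOrderings^\ast$ and run $\alg$ on $\lbBaseGraph{\tau}$. Call $i\in\lbPerturbIndices$ \emph{touched} if some comparison in this run involves a set in $\{\lbPrefix{\tau}{i},V\setminus\lbPrefix{\tau}{i}\}$ and returns equality. We claim that at least $\ell$ indices of $J_\tau$ are touched. If $i\in J_\tau$ is untouched, then by \Cref{lemma:one-edge-perturbation}\ref{item:perturbed_strict_comparisons} every strict answer is preserved in $\lbPerturbedGraph{\tau}{i}$. By \Cref{lemma:one-edge-perturbation}\ref{item:perturbed_changed_comparisons}, any equality answer that changes must involve $\lbPrefix{\tau}{i}$ or its complement, which is excluded because $i$ is untouched. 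Hence, by induction over the adaptive queries, $\alg$ produces the identical transcript and output on $\lbPerturbedGraph{\tau}{i}$. Now $\alg$ is correct on $\lbPerturbedGraph{\tau}{i}$, and by \Cref{lemma:one-edge-perturbation}\ref{item:perturbed_unique_min} its unique minimum cut is $\lbPrefix{\tau}{i}$, so the output must represent $\lbPrefix{\tau}{i}$. Two distinct untouched indices in $J_\tau$ would force the same output to represent two different prefix cuts, which is impossible. Therefore at most one index of $J_\tau$ is untouched, and at least $\ell$ are touched.

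Second, the transcript on $\lbBaseGraph{\tau}$ determines the queried sets, and therefore determines a family $\mathcal{F}$ of at most $2q$ sets that appeared in equality comparisons. Let $P\coloneqq\{\lbPrefix{\tau}{i}: i\in\lbPerturbIndices \text{ touched}\}$; this is a collection of at least $\ell$ prefix sets, each of which (or whose complement) lies in $\mathcal{F}$. A prefix set determines its position through its size, and distinct positions have distinct sizes. Note that $V\setminus\lbPrefix{\tau}{i}$ has size $n-i$, and $n-i$ may also lie in $\lbPerturbIndices$. To handle this, we resolve the ambiguity using $\lbTargetCut$: a prefix at position $i<k$ is a subset of $\lbTargetCut$, whereas its complement contains $V\setminus\lbTargetCut$.

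To finish the count, we charge each $\tau\in\lbOrderings^\ast$ to the pair consisting of its transcript and a choice of $\ell$ touched positions together with the set in $\mathcal{F}$ identifying each. This gives at most $3^q\cdot(2q)^{\ell}$ possible labels. Alternatively, we can simply fix the $\ell$ positions among the $2\ell$ choices, giving $\binom{2\ell}{\ell}\leq 4^\ell$, which is negligible. Given a label, the prefixes $\lbPrefix{\tau}{i}$ at $\ell$ positions of $\lbPerturbIndices$ are determined. By \Cref{lemma:common-cuts}, at most $|\lbOrderings|2^{-\Omega(n\log\ell)}$ orderings share these $\ell$ prefixes. Hence
\[
\tfrac{2}{5}|\lbOrderings|\;\leq\;3^q(2q)^{\ell}\,|\lbOrderings|\,2^{-\Omega(n\log\ell)},
\]
which gives $q\log 3+\ell\log(2q)\geq\Omega(n\log\ell)$. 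Since $\ell<n/2$, the term $\ell\log(2q)$ needs care: if $q\geq n\log\ell$ we are already done, and otherwise $\ell\log(2q)=O(\ell\log n)$, which could be comparable to $n\log\ell$ when $\ell$ is close to $n/2$.

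The main obstacle is this bookkeeping. The cleanest fix is to avoid paying for set identities at all. The transcript fixes the entire sequence of queried sets, and the $\ell$ touched positions can be recovered from the transcript plus a subset of $\lbPerturbIndices$ costing at most $2^{2\ell}$. For each chosen position $i$, the set is then the unique queried set of size $i$ that is contained in $\lbTargetCut$ (respectively, contains $\lbTargetCut$ in the upper half) and appeared in an equality. We must check that this set is unique. If two queried sets $X$ and $Y$ of the same size were both called equal to something, only the one that is an actual prefix counts, but we cannot tell which. So we instead pay $\log(2q)$ only when $\ell$ is small, and for large $\ell$ we use $2^{2\ell}\cdot 3^q$ together with a refined lemma. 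I would try first to prove that among equality-answered queried sets there is at most one prefix of each size that is consistent with $\lbBaseGraph{\tau}$. The justification is that sets with equal value that are both minimum cuts are prefixes, and each prefix size is unique within $\tau$.
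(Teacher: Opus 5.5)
Your first step is sound. Showing that at most one index of $J_\tau$ is untouched is the contrapositive of the paper's ``first differing comparison'' argument, and it uses \Cref{lemma:one-edge-perturbation}\ref{item:perturbed_strict_comparisons} and~\ref{item:perturbed_changed_comparisons} in the same way.

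The gap is in the counting, and it is the one you flag without closing. Charging $\tau$ to its transcript plus an \emph{ordered} assignment of a witness set in $\mathcal{F}$ to each of $\ell$ touched positions costs about $\ell\log(2q)$ bits. When $\ell=\Theta(n)$, this is already $\Theta(n\log n)$ at $q=\Theta(n)$. That is the same order as the $\Omega(n\log\ell)$ bits supplied by \Cref{lemma:common-cuts}, so your final inequality no longer forces $q=\Omega(n\log\ell)$. When $\lbSpacing=1$, no sharpening of \Cref{lemma:common-cuts} can help, since sharing the first $\ell$ prefixes already has probability at least $k^{-\ell}$.

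The repair you sketch does not work either. The plain transcript does not reveal which equality-answered set of size $i$ inside $\lbTargetCut$ is $\lbPrefix{\tau}{i}$. Equalities also arise between non-prefix cuts; for instance, every comparison of $X$ with $V\setminus X$ returns equality. Your claim of ``at most one prefix of each size'' is a fact about the unknown $\tau$, not something a decoder can read off the answers.

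The idea you are missing is the paper's \emph{augmented transcript}. Whenever $X$ and $Y$ compare equal, also record the comparison of $X$ with $\lbTargetCut$, which represents a minimum cut of every base graph. This at most doubles the number of answers. Now take any $\lbOtherOrdering$ whose base graph $\lbBaseGraph{\lbOtherOrdering}$ produces the same augmented transcript. Each witness $X\in\{\lbPrefix{\tau}{i},V\setminus\lbPrefix{\tau}{i}\}$ must then be a minimum cut of $\lbBaseGraph{\lbOtherOrdering}$, hence a prefix cut of $\lbOtherOrdering$. The position of $X$ relative to $\lbTargetCut$ then forces $\lbPrefix{\tau}{i}=\lbPrefix{\lbOtherOrdering}{i}$. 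So each class of orderings with a common augmented transcript is bounded directly by \Cref{lemma:common-cuts}, there are fewer than $3^{2q+1}$ classes, and no identification cost is paid.

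Alternatively, you can rescue your own labeling by charging the \emph{unordered} set of $\ell$ witnesses. Each witness $X$ determines its position by itself: since $k\notin\lbPerturbIndices$, exactly one of $X$ and $V\setminus X$ is a proper subset or proper superset of $\lbTargetCut$, and that set is $\lbPrefix{\tau}{i}$ with $i$ its size. The extra cost is then at most $\log\binom{2q}{\ell}\le\ell\log(2eq/\ell)$ bits. This is $o(n\log\ell)$ when $q=O(n\log\ell)$ and $\ell<n/2$, so the count does yield $q=\Omega(n\log\ell)$.
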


\begin{proof}
By~\Cref{lemma:many-correct-base-and-perturbed-graphs}, there are at least~${2|\lbOrderings|/5}$ orderings~$\tau$ for which~$\alg$ is correct on~${\lbBaseGraph{\tau}}$ and on at least~${\ell+1}$ of its perturbed graphs.
Fix one such~$\tau$.

By~\Cref{lemma:base_graph_properties}, the cut output by~$\alg$ on~${\lbBaseGraph{\tau}}$ is the prefix cut at some position~$j$.
By~\Cref{lemma:one-edge-perturbation}, the unique minimum cut of~${\lbPerturbedGraph{\tau}{i}}$ is the prefix cut at position~$i$.
Thus, among the perturbations on which~$\alg$ is correct, only the one with~${i=j}$ can cause~$\alg$ to output the same cut as on~${\lbBaseGraph{\tau}}$. 
Since~$\alg$ is correct on at least~${\ell+1}$ perturbed graphs, there is a set~${J\subseteq\lbPerturbIndices}$ with~${|J|=\ell}$ such that, for every~${i\in J}$,~$\alg$ is correct on~${\lbPerturbedGraph{\tau}{i}}$ and outputs different cuts on~${\lbBaseGraph{\tau}}$ and~${\lbPerturbedGraph{\tau}{i}}$. 

For the counting argument, we record slightly more information than the transcript of~$\alg$.
Whenever~$\alg$ compares two cuts~$X$ and~$Y$ and finds that they have equal cut values, we also record the result of comparing~$X$ with~$\lbTargetCut$.
We call this the \emph{augmented transcript}.
Since~$\alg$ makes at most~$q$ comparisons, the augmented transcript contains at most~$2q$ comparison outcomes.

Now fix~${i\in J}$.
Since~$\alg$ is deterministic and has different outputs on~${\lbBaseGraph{\tau}}$ and~${\lbPerturbedGraph{\tau}{i}}$, there is a first comparison~${X,Y}$ whose answer differs on the two graphs. 
By~\Cref{lemma:one-edge-perturbation}\ref{item:perturbed_strict_comparisons}, the perturbed graph preserves strict comparison answers, so we must have~${\cutGraphSet{\lbBaseGraph{\tau}}{X}=\cutGraphSet{\lbBaseGraph{\tau}}{Y}}$. 
Thus by~\Cref{lemma:one-edge-perturbation}\ref{item:perturbed_changed_comparisons},~$X$ and~$Y$ represent prefix cuts of~$\tau$, and one of them represents the prefix cut~$\lbPrefix{\tau}{i}$.
Renaming~$X$ and~$Y$ if necessary, suppose that~${X\in\{\lbPrefix{\tau}{i},V\setminus\lbPrefix{\tau}{i}\}}$.
Since~$X$ and~$\lbTargetCut$ represent prefix cuts of~$\tau$,~\Cref{lemma:base_graph_properties}\ref{item:base_graph_min_cuts} gives~${\cutGraphSet{\lbBaseGraph{\tau}}{X}=\cutGraphSet{\lbBaseGraph{\tau}}{\lbTargetCut}}$, so the augmented transcript contains this equality. 

Now let~${\lbOtherOrdering\in\lbOrderings}$ be such that~$\alg$ has the same augmented transcript on~${\lbBaseGraph{\lbOtherOrdering}}$.
Then
\[
\cutGraphSet{\lbBaseGraph{\lbOtherOrdering}}{X}
\;=\;
\cutGraphSet{\lbBaseGraph{\lbOtherOrdering}}{Y}
\;=\;
\cutGraphSet{\lbBaseGraph{\lbOtherOrdering}}{\lbTargetCut}.
\]
Since the cut represented by~$\lbTargetCut$ is a minimum cut of every base graph, the equalities above imply that~$X$ and~$Y$ also represent minimum cuts of~${\lbBaseGraph{\lbOtherOrdering}}$.
By~\Cref{lemma:base_graph_properties}\ref{item:base_graph_min_cuts}, they therefore represent prefix cuts of~$\lbOtherOrdering$. 
Since~$X$ represents the same cut as~${\lbPrefix{\tau}{i}}$, the set~${\lbPrefix{\tau}{i}}$ also represents a prefix cut of~$\lbOtherOrdering$.
Hence, for some~${j\in[n-1]}$,
\[
\lbPrefix{\tau}{i}
\;\in\;
\bigl\{
\lbPrefix{\lbOtherOrdering}{j},
V\setminus\lbPrefix{\lbOtherOrdering}{j}
\bigr\}.
\]
Since~${i\in\lbPerturbIndices}$, we have~${i\neq k}$.
If~${i<k}$, then~${\lbPrefix{\tau}{i}}$ is a nonempty proper subset of~$\lbTargetCut$, whereas~${V\setminus\lbPrefix{\lbOtherOrdering}{j}}$ either contains~${V\setminus\lbTargetCut}$ or is contained in~${V\setminus\lbTargetCut}$.
If~${i>k}$, then~${\lbPrefix{\tau}{i}}$ contains all of~$\lbTargetCut$, whereas~${V\setminus\lbPrefix{\lbOtherOrdering}{j}}$ either omits an element of~$\lbTargetCut$ or is contained in~${V\setminus\lbTargetCut}$.
Thus~${\lbPrefix{\tau}{i}=\lbPrefix{\lbOtherOrdering}{j}}$. 
Both sets have cardinality~$i$, so~${j=i}$ and we have~${\lbPrefix{\tau}{i}=\lbPrefix{\lbOtherOrdering}{i}}$.

Applying this argument to each~${i\in J}$ gives~${\bigl|\bigl\{i\in\lbPerturbIndices\setSeparator\lbPrefix{\tau}{i}=\lbPrefix{\lbOtherOrdering}{i}\bigr\}\bigr|\geq|J|=\ell}$.  
Hence, by~\Cref{lemma:common-cuts}, at most~${|\lbOrderings|2^{-\Omega(n\log\ell)}}$ orderings~$\lbOtherOrdering$ have the same augmented transcript as~$\tau$. 
This argument applies to at least~${2|\lbOrderings|/5}$ orderings~$\tau$, so the base graphs for these orderings produce at least
\[
\frac{2|\lbOrderings|/5}
{|\lbOrderings|2^{-\Omega(n\log\ell)}}
\;=\;
2^{\Omega(n\log\ell)}
\]
different augmented transcripts. 
Each augmented transcript contains at most~${2q}$ comparison outcomes, each with three possible values.
Thus there are fewer than~${3^{2q+1}}$ possible augmented transcripts, and therefore~${3^{2q+1}\geq 2^{\Omega(n\log\ell)}}$.
Taking logarithms gives~${q=\Omega(n\log\ell)}$. 
\end{proof}

\subsection{Proof of \texorpdfstring{\Cref{thm:lower}}{Theorem 1.4}}
We can now use the deterministic bound in~\Cref{lemma:distributional-lower-bound} to prove~\Cref{thm:lower}.

\randomizedlowerbound*

\begin{proof}
By~\Cref{lemma:base_graph_properties,lemma:one-edge-perturbation}, every graph drawn from~$\lbDistribution$ is complete and has positive integer edge weights at most~${\lbWeightRange}$. 
Let~$\alg$ be a randomized comparison algorithm that outputs a minimum cut with probability at least~${2/3}$ on every graph in the support of~$\lbDistribution$, and let~$q$ be the worst-case expected number of comparisons made by~$\alg$ on these graphs. 

Fix a graph~$G$ in the support of~$\lbDistribution$.
Run~$\alg$ independently three times on~$G$ and return a cut with the smallest value among the three returned cuts, using two additional comparisons.
Since each run returns a minimum cut with probability at least~${2/3}$, all three runs fail with probability at most~${(1/3)^3=1/27}$.
Thus the resulting algorithm outputs a minimum cut of~$G$ with probability at least~${26/27}$ and makes at most~${3q+2}$ comparisons in expectation.

Now stop this algorithm after~${20(3q+2)}$ comparisons and return an arbitrary cut if it has not terminated.
By Markov's inequality, this happens with probability at most~${1/20}$.
Hence, for every graph~$G$ in the support of~$\lbDistribution$, the truncated algorithm makes at most~${20(3q+2)}$ comparisons and outputs a minimum cut with probability at least~${1-1/27-1/20 > 9/10}$. 

Therefore, when~$G$ is drawn from~$\lbDistribution$, the truncated randomized algorithm succeeds with probability greater than~${9/10}$. 
By averaging over the random coins, we can fix them so that the resulting deterministic algorithm succeeds with probability at least~${9/10}$ over~$\lbDistribution$. 
By the deterministic lower bound in~\Cref{lemma:distributional-lower-bound},~${20(3q+2)=\Omega(n\log\ell)}$, and thus~${q=\Omega(n\log\ell)}$. 
\end{proof}

%%%%%%%%%%%%%%%%%%%
\section{Discussion}\label{sec:discussion}
%%%%%%%%%%%%%%%%%%%
Our results leave several open questions. 
\begin{enumerate}
\item \textbf{Symmetric~SFM.}
By~\Cref{thm:intro-sfm,thm:lower}, the comparison complexity of symmetric~SFM lies between~${\Omega(n\log n)}$ and~${O(n^3)}$.
Can this gap be reduced?
A deterministic algorithm using~${o(n^3)}$ comparison queries would also improve the best-known deterministic value-oracle bound, which has been~${O(n^3)}$ for almost thirty years~\cite{Queyranne,Nagamochi10,IwataKonno}. 
We therefore suspect that improving the lower bound might be more approachable. 
A comparison-query lower bound of~${\Omega(n^c)}$ for any constant~${c>1}$ would be very interesting; no such lower bound is known in the value-oracle model, but since comparison queries reveal less information, this may make lower bounds easier to prove.

\item \textbf{Weighted minimum cut with integer weights.}
By~\Cref{thm:bounded-weight-min-cut-upper-bound}, we can find a minimum cut of a graph with edge weights in~${\weightedEdgeWeights}$ using
\[
\weightedMinCutUpperBound
\]
cut comparisons.
Can the dependence on~$\B$ be reduced to~${\poly(\log \B)}$? Or removed entirely? 
In the value-oracle model, there is a randomized algorithm that finds a weighted minimum cut using~${\softO(n)}$ queries~\cite{MukhopadhyayNanongkai} for any weight range, but we do not know how to implement this using comparisons.

\item \textbf{The weight range in the randomized lower bound.} 
By~\Cref{cor:epsilon}, every randomized algorithm that outputs a minimum cut with probability at least~${2/3}$ makes~${\Omega(n\log n)}$ expected comparisons in the worst case, even when the edge weights are subexponential. 
However, for polynomially bounded edge weights, the corresponding lower bound in~\Cref{cor:poly} is only~${\Omega(n\log\log n)}$.  
Can this be strengthened to~${\Omega(n\log n)}$?
More generally, can the weight--query tradeoff in~\Cref{thm:lower} be improved?

\end{enumerate}

\paragraph{AI disclosure.}
We used GPT-5.6~Sol, GPT-6 Astra, Antigravity~2.0, and Opus~5 for proof exploration, proof checking, and preparing the exposition.  
Collectively, these tools materially affected the technical presentation throughout and portions of the development in~\Cref{sec:symmetric,sec:bounded,sec:lower,app:weighted-graph-contraction,app:lower_bound_proofs}. The authors verified the correctness and originality of all content, including every proof and every reference, and we take full responsibility for the paper.

\begingroup
\sloppy
\bibliographystyle{alphaurl}
\bibliography{refs}
\endgroup

\appendix
\crefname{appendix}{appendix}{appendices}
\Crefname{appendix}{Appendix}{Appendices}
\crefalias{section}{appendix}
\crefalias{subsection}{appendix}

\section{Finding a minimum-cut-preserving contraction}
\label{app:weighted-graph-contraction}

This appendix contains the proof of~\Cref{lemma:contraction}, which we restate for convenience.

\GraphContractionUsingComparisons*

In the remainder of this section, fix an undirected weighted graph~${G=(V,w)}$ on~$n$ vertices with edge weights in~$\{0,1,\ldots,\B\}$ for a known positive integer~$\B$.   
The \emph{support graph} of~$G$ is the unweighted graph on~$V$ whose edges are the positive-weight edges of~$G$.

The proof of~\Cref{lemma:contraction} has three steps.
First, we recover a subgraph~$\supp$ of the support graph of~$G$ while leaving only~${O(n\B)}$ total unrecovered weight~(see~\Cref{lemma:support}).
Next, we use an expander decomposition to partition~$\supp$ into blocks~${V=V_1\sqcup\ldots\sqcup V_{\partitionIndex}}$ with few recovered edges between different blocks~(see~\Cref{thm:expander}).
Finally, we use comparisons to choose a set~${\contractionSet{i}\subseteq V_i}$ inside each nonsingleton block~$V_i$, and contract each nonempty~${\contractionSet{i}}$ to one supervertex~(see~\Cref{lemma:external}). 
We show that the resulting contracted graph satisfies the properties in~\Cref{lemma:contraction}.

This strategy of explicitly recovering part of the graph, then finding a suitable contraction or expander decomposition, has been used throughout the cut-query literature~(see, e.g., \cite{RSW,StarContraction,AnandSaranurakWang,KennethMordochKrauthgamer2025,KennethMordochKrauthgamer2026a,KennethMordochKrauthgamer2026b}).
However, these algorithms either assume that the graph is unweighted or that we know the exact weighted cut values. 
In the comparison-oracle model, we do not know the cut values, and generally cannot even recover the graph up to scaling~\cite[Appendix~D.4.2]{comparison}. 
In the remainder of this section, we show how to use cut comparisons to guarantee that the properties in~\Cref{lemma:contraction} hold, without ever using any explicit cut values.

\subsection{Recovering most of the support graph}
\label{subsec:recovering-support-graph}
Since each edge weight in~$G$ is at most~$\B$, the total edge weight of~$G$ could be as large as~$O(n^2 B)$. 
In this subsection, we show how to recover enough of the support graph of~$G$ to reduce the unrecovered total weight to~${O(n\B)}$. 
The algorithm never uses the edge weights explicitly; instead, it constructs an explicit set of edges incident to each vertex such that the unrecovered edges incident to the vertex have total weight less than~$\B$. We use the following weighted version of~\cite[Observation~D.1]{comparison}.

\begin{lemma}[Testing whether a set contains half the incident weight]
\label{lemma:half-incident-weight}
Let~${G=(V,w)}$ be an undirected weighted graph.
Fix~${u\in V}$ with~${\degreeGraph{u}{G}>0}$, and let~${S\subseteq V\setminus\{u\}}$.
Then we can determine whether
\[
\weightVertexSet{u}{S}
\;<\; \frac{\degreeGraph{u}{G}}{2}
\]
using at most one cut comparison.
\end{lemma}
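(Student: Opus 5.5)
The plan is to apply the cut marginal identity (\Cref{lemma:cut-marginal}) with one comparison between $S$ and $S\cup\{u\}$. Since $u\notin S$, that lemma gives
\[
\cutGraphSet{G}{S\cup\{u\}}-\cutGraphSet{G}{S}
\;=\;
\degreeGraph{u}{G}-2\weightVertexSet{u}{S}.
\]
So $\weightVertexSet{u}{S}<\degreeGraph{u}{G}/2$ holds exactly when $\cutGraphSet{G}{S}<\cutGraphSet{G}{S\cup\{u\}}$, and this is a single comparison query. The same query also tells us whether equality holds or the inequality is reversed.

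The only real issue is the convention that forbids queries involving $\emptyset$ or $V$. I would handle the degenerate cases separately, in which case no comparison is needed.

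\emph{Case $S=\emptyset$.} Then $\weightVertexSet{u}{S}=0<\degreeGraph{u}{G}/2$, using the hypothesis $\degreeGraph{u}{G}>0$. We answer ``yes'' with zero queries.

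\emph{Case $S\cup\{u\}=V$, i.e.\ $S=V\setminus\{u\}$.} Then $\weightVertexSet{u}{S}=\degreeGraph{u}{G}\ge \degreeGraph{u}{G}/2$, again because $\degreeGraph{u}{G}>0$ makes the inequality strict in the right direction. We answer ``no'' with zero queries.

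\emph{Remaining case.} Here $S$ and $S\cup\{u\}$ are both nonempty proper subsets of $V$, so the comparison above is a valid query. Its outcome decides the inequality by the displayed identity.

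I do not expect a genuine obstacle. The step that needs care is checking the boundary cases, and noting that the positivity assumption on $\degreeGraph{u}{G}$ is exactly what makes both trivial answers correct. Together the cases give at most one comparison, as claimed.
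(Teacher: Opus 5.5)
Your proposal is correct and is essentially the paper's proof: it answers the cases $S=\emptyset$ and $S=V\setminus\{u\}$ without queries, and otherwise decides the inequality with a single comparison of $S$ against $S\cup\{u\}$ via the cut marginal identity. One small nit is that positivity of the degree is needed only for the case $S=\emptyset$, since $d\ge d/2$ holds for every $d\ge 0$.
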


\begin{proof}
Let~${d\coloneqq\degreeGraph{u}{G}>0}$. 
If~${S=\emptyset}$, then we know that~${\weightVertexSet{u}{S}=0<d/2}$ without making any comparisons.
Similarly, if~${S=V\setminus\{u\}}$, then we know that~${\weightVertexSet{u}{S}=d\geq d/2}$ without any comparisons. 
Otherwise,~$S$ and~${S\cup\{u\}}$ are nonempty proper subsets of~$V$. 

By~\Cref{lemma:cut-marginal},~${\cutGraphSet{G}{S\cup\{u\}}-\cutGraphSet{G}{S}=d-2\weightVertexSet{u}{S}}$, and thus~${\cutGraphSet{G}{S\cup\{u\}}>\cutGraphSet{G}{S}}$ if and only if~${\weightVertexSet{u}{S}<d/2}$. 
We can therefore determine whether~${\weightVertexSet{u}{S}<d/2}$ by comparing~${\cutGraphSet{G}{S\cup\{u\}}}$ with~${\cutGraphSet{G}{S}}$. 
\end{proof}

We now use this test to recover a set of edges incident to a given vertex~$u$ such that the unrecovered edges have total weight less than~$\B$.

\begin{lemma}[Recovering edges incident to one vertex]
\label{lemma:recovering-at-one-vertex}
Let~${G=(V,w)}$ be an undirected graph on~$n$ vertices with edge weights in~${\{0,1,\ldots,\B\}}$. Fix~${u\in V}$ with~${\degreeGraph{u}{G}>0}$.
Then using~${O(n)}$ comparisons, we can construct a set~$\uEdges$ of edges incident to~$u$ such that every edge in~$\uEdges$ has positive weight in~$G$ and the edges incident to~$u$ that are not in~$\uEdges$ have total weight
\[
\missingWeightSum{\uEdges}
 \;<\;
 \B.
\]
\end{lemma}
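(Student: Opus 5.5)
We build $\uEdges$ greedily using the half-weight test from~\Cref{lemma:half-incident-weight}. Write $d\coloneqq\degreeGraph{u}{G}>0$ and order the vertices of $V\setminus\{u\}$ arbitrarily as $x_1,\ldots,x_{n-1}$. Consider the prefix sets $S_j\coloneqq\{x_1,\ldots,x_j\}$. The function $j\mapsto\weightVertexSet{u}{S_j}$ is nondecreasing, equals $0$ at $j=0$, and equals $d$ at $j=n-1$. Let $j^*$ be the smallest index with $\weightVertexSet{u}{S_{j^*}}\ge d/2$. Then $\weightVertexSet{u}{S_{j^*-1}}<d/2\le\weightVertexSet{u}{S_{j^*}}$, so $\edgeWeight{u}{x_{j^*}}>0$, and we may add the edge $\{u,x_{j^*}\}$ to $\uEdges$. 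Since the threshold test is monotone in $j$, we can find $j^*$ by binary search using $O(\log n)$ comparisons. A linear scan would also work, but the budget matters once we iterate.

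To recover more edges, we repeat this procedure while moving already recovered neighbours out of the way. We do not delete them; instead, we place them at the front of the ordering. Suppose $x_{j^*}$ has been found. Half of the weight lies in $S_{j^*-1}$ and half in $S_{j^*}\cup(\text{rest})$. To reach the remaining neighbours, we iterate the search on a modified ordering in which recovered neighbours $R$ are placed first. The test then asks whether $\weightVertexSet{u}{R\cup S}<d/2$. This test can only locate the vertex where the cumulative weight crosses $d/2$, which is a single vertex per ordering.

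The cleaner approach is to vary the ordering. Put all recovered neighbours $R$ at the end of the ordering, so the crossing vertex lies in $R$ only if $\weightVertexSet{u}{V\setminus(R\cup\{u\})}<d/2$, that is, only if the unrecovered weight is less than half. This is not strong enough on its own, so a better threshold mechanism is needed. We shift the threshold by padding: placing $R$ first raises the prefix weight by $\weightVertexSet{u}{R}$, and placing $R$ last lowers it. Combining the two placements, we place an arbitrary subset $R'\subseteq R$ first and the rest last. This lets us look for a crossing vertex among the unrecovered ones whenever $d/2$ falls in the interval $[\weightVertexSet{u}{R'},\,\weightVertexSet{u}{R'}+M]$, where $M$ is the unrecovered weight.

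The main obstacle is choosing $R'$ without knowing any weights. The subset sums $\weightVertexSet{u}{R'}$ over prefixes $R'$ of $R$, taken in a fixed order, increase in steps of size at most $\B$. We therefore binary search over prefixes of $R$, placed first, for the last prefix $R'$ with $\weightVertexSet{u}{R'}<d/2$. Its weight then lies in $[d/2-\B,\,d/2)$. Whenever $M\ge\B$, the crossing of $d/2$ must occur within the unrecovered vertices that follow $R'$, which yields a new positive edge. A second binary search over those vertices finds it.

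Each new edge costs $O(\log n)$ comparisons, and the loop stops at the first failure, at which point $M<\B$ as required. This gives $O(\deg\log n)$ comparisons in total. To meet the $O(n)$ bound, I would replace the binary searches with a single amortized left-to-right scan, in which each vertex is tested $O(1)$ times. This works because the prefix pointer over $R$ and the pointer over unrecovered vertices only move monotonically. I expect this amortization to be the delicate part of the proof.
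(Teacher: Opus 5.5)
Your correctness argument is sound. With the ordering $R'$, then the unrecovered vertices $U$, then $R\setminus R'$, every crossing vertex you locate has positive weight, and a crossing inside $U$ must exist whenever the unrecovered weight satisfies $M\ge B$. (When $R'=R$, your claim $w(u,R')\ge d/2-B$ can fail, but then $M>d/2$, so the conclusion still holds.)

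The gap is the query bound. As written, the procedure spends $O(\log n)$ comparisons per recovered edge. This can total $\Theta(n\log n)$, for instance when $B=1$ and $u$ has weight-$1$ edges to all other vertices, so that every edge must be recovered.

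The $O(n)$ bound therefore rests entirely on the amortized scan, which you only assert, and the reason you give for it is false under your own rule for $R'$. Take all weights equal to $1$. The first crossing is at position $\lceil d/2\rceil$. Then $R'=\{x_1\}$ and the next crossing is one position \emph{earlier*}, and this repeats about $d/2$ times. So whenever $R'$ grows, the crossing in $U$ moves backward, not monotonically. Restarting the left-to-right scan each round then costs $\Theta(n^2)$ comparisons in this example. An $O(n\log n)$ bound would be harmless for the $\softO$ results downstream, but it does not prove the lemma as stated.

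The missing idea is to not stop at the first crossing. Keep a running set $X$ with $w(u,X)<d/2$ and scan the vertices once:
\begin{itemize}
\item for each $v$, use one comparison to test whether $w(u,X\cup\{v\})<d/2$;
\item if so, add $v$ to $X$;
\item otherwise, record $\{u,v\}$ and continue with $X$ unchanged. This edge has positive weight by the same crossing argument.
\end{itemize}
One pass from $X=\emptyset$ leaves a kept set $X_1$ with $w(u,X_1)<d/2$, so the recorded neighbours carry weight greater than $d/2$.

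A second pass then processes those recorded neighbours first and $X_1$ afterwards. It must reject some recorded neighbour. At that moment the accumulated weight consists only of recorded neighbours and is at least $d/2-B$. This is exactly your padding set $R'$, obtained greedily at no extra cost.

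Hence the vertices of $X_1$ that are kept again have total weight less than $B$, and every other neighbour was recorded in some pass. The whole procedure uses $2(n-1)$ comparisons, and this is the paper's proof.
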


\begin{proof} 
Initialize~${\uEdges\coloneqq\emptyset}$, and let~${d\coloneqq\degreeGraph{u}{G}>0}$.
We construct~${\uEdges}$ by iterating twice over~${V\setminus\{u\}}$.

For the first iteration, fix an arbitrary order~${v_1,\ldots,v_{n-1}}$ of~${V\setminus\{u\}}$.
Starting with~${X=\emptyset}$, we process the~${v_i}$ in this order and maintain the invariant~${\weightVertexSet{u}{X}<d/2}$.
When the iteration reaches~${v_i}$, we determine whether~${\weightVertexSet{u}{X\cup\{v_i\}}<d/2}$ using~\Cref{lemma:half-incident-weight}. 
If~${\weightVertexSet{u}{X\cup\{v_i\}}<d/2}$, then add~${v_i}$ to~$X$. 
Otherwise, discard~${v_i}$ and add~${\{u,v_i\}}$ to~${\uEdges}$.

The final set satisfies~${\weightVertexSet{u}{X}<d/2}$, as this invariant is preserved at every step of the iteration.
Moreover, whenever we add~${\{u,v_i\}}$ to~$\uEdges$, we have ${\weightVertexSet{u}{X}<d/2}$ before considering~$v_i$ and ${\weightVertexSet{u}{X\cup\{v_i\}}\ge d/2}$.
Therefore,
\[
 \edgeWeight{u}{v_i}
 \;=\;
 \weightVertexSet{u}{X\cup\{v_i\}}
 -
 \weightVertexSet{u}{X}
 \;>\;
 0,
\]
so every edge added to~$\uEdges$ has positive weight in~$G$.
If~$v_i$ is added to~$X$, then the comparison query certifies that the invariant~${\weightVertexSet{u}{X}<d/2}$ continues to hold; if~$v_i$ is discarded, then~$X$ does not change.

Let~$\supportVerticesOne$ be the final vertex set after the first iteration. The invariant gives
\begin{align}
 \weightVertexSet{u}{V\setminus(\supportVerticesOne\cup\{u\})}
 \;=\;
 \weightVertexSet{u}{V\setminus\{u\}}
 -
 \weightVertexSet{u}{\supportVerticesOne}
 \;>\;
 d-d/2
 \;=\;
 d/2.
 \label{eq:total_weight_u_to_residual_set}
\end{align}
For the second iteration, start again with~${X=\emptyset}$. 
This time, process the vertices in~${V\setminus(\supportVerticesOne\cup\{u\})}$ first and the vertices in~$\supportVerticesOne$ second, using the same rule as in the first iteration. Let~$\supportVerticesTwo$ be the final vertex set. 
By~\eqref{eq:total_weight_u_to_residual_set}, the total weight from~$u$ to~${V\setminus(\supportVerticesOne\cup\{u\})}$ is greater than~$d/2$, so the invariant forces the second iteration to discard at least one vertex~${v_j\in V\setminus(\supportVerticesOne\cup\{u\})}$.

Immediately before~$v_j$ is considered, the current set satisfies~${X\subseteq \supportVerticesTwo\setminus \supportVerticesOne}$.
Since the edge weights are nonnegative,~${\weightVertexSet{u}{X}\le\weightVertexSet{u}{\supportVerticesTwo\setminus\supportVerticesOne}}$.
Moreover, because~$v_j$ is discarded,~${\weightVertexSet{u}{X\cup\{v_j\}}\ge d/2}$, and therefore 
\[
 d/2-\weightVertexSet{u}{\supportVerticesTwo\setminus\supportVerticesOne}
 \;\le\;
 d/2-\weightVertexSet{u}{X}
 \;\le\;
 \weightVertexSet{u}{X\cup\{v_j\}}-\weightVertexSet{u}{X}
 \;=\;
 \edgeWeight{u}{v_j}
 \;\le\;
 \B.
\]
Every vertex in~${V\setminus\{u\}}$ outside~${\supportVerticesOne\cap\supportVerticesTwo}$ is discarded in at least one iteration. 
Hence, every edge incident to~$u$ that is not in~$\uEdges$ has its other endpoint in~${\supportVerticesOne\cap\supportVerticesTwo}$.
It follows that 
\[
\missingWeightSum{\uEdges}
 \;\le\;
 \weightVertexSet{u}{\supportVerticesOne\cap\supportVerticesTwo}
 \;=\;
  \weightVertexSet{u}{\supportVerticesTwo}
  -  \weightVertexSet{u}{\supportVerticesTwo\setminus\supportVerticesOne} 
    \;<\; 
    \B.
\]
Each iteration uses at most~${n-1}$ comparisons, so the procedure uses at most~${2(n-1)=O(n)}$ comparisons.
\end{proof}

Applying~\Cref{lemma:recovering-at-one-vertex} to each vertex allows us to recover a subgraph of the support graph of~$G$ such that the total weight of the unrecovered edges is at most~${n\B/2}$.

\begin{corollary}[Recovering most of the support graph]
\label{lemma:support}
Let~${G=(V,w)}$ be an undirected graph on~${n\geq 2}$ vertices with edge weights in~${\{0,1,\ldots,\B\}}$, and suppose that~${\minimumGraphDegree{G}>0}$.
Using~${O(n^2)}$ comparisons, we can construct a known unweighted graph~${\supp=(V,\suppEdges)}$ with the following properties.
\begin{enumerate}[label=(\roman*)]
\item\label{item:positive-edge-weight} Every edge in~$\suppEdges$ has positive weight in~$G$.
\item\label{item:bounded-unrecodered-weight} For every~${u\in V}$, the unrecovered edges incident to~$u$ have total weight
\[
\missingWeightSum{\suppEdges}
 \;<\;
 \B.
\]
\end{enumerate} 
Consequently, the total weight of the unrecovered edges is~${O(n\B)}$.
\end{corollary}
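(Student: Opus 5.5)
The plan is to apply \Cref{lemma:recovering-at-one-vertex} once to every vertex and take the union of the recovered edge sets. Since $\minimumGraphDegree{G}>0$, every $u\in V$ satisfies $\degreeGraph{u}{G}>0$, so the lemma applies to each vertex. For each $u$, it gives a set $\uEdges$ of edges incident to $u$, built with $O(n)$ comparisons, such that every edge in $\uEdges$ has positive weight and the edges at $u$ outside $\uEdges$ have total weight less than $\B$. We then set $\suppEdges\coloneqq\bigcup_{u\in V}\uEdges$ and $\supp\coloneqq(V,\suppEdges)$. This uses $n\cdot O(n)=O(n^2)$ comparisons, and all bookkeeping takes polynomial time.

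Property~\ref{item:positive-edge-weight} follows immediately, because each edge of $\suppEdges$ belongs to some $\uEdges$ and therefore has positive weight in $G$. For property~\ref{item:bounded-unrecodered-weight}, fix $u$. Since $\uEdges\subseteq\suppEdges$, any edge $\{u,v\}\notin\suppEdges$ also satisfies $\{u,v\}\notin\uEdges$. Because the weights are nonnegative, the missing weight at $u$ relative to $\suppEdges$ is at most the missing weight relative to $\uEdges$, which is less than $\B$. So taking the union can only help: an edge recovered from either endpoint counts as recovered for both endpoints.

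For the consequence, we sum the missing weight over all $u\in V$. Each unrecovered edge with positive weight is counted exactly twice, once from each endpoint, so the total unrecovered weight is at most $\tfrac12\cdot n\B=O(n\B)$. No step here is a real obstacle. The only point to check is the monotonicity argument above, since it is what lets us pass from the per-vertex guarantee on $\uEdges$ to the guarantee on the union $\suppEdges$.
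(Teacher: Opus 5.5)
Your proposal is correct and matches the paper's proof essentially step for step. You apply \Cref{lemma:recovering-at-one-vertex} at every vertex, take the union of the recovered edge sets, and use monotonicity to pass the per-vertex bound from $E_u$ to the union. Summing over vertices then counts each unrecovered edge twice, so the total unrecovered weight is below $n\B/2$.
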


\begin{proof}
For each~${u\in V}$, use~\Cref{lemma:recovering-at-one-vertex} to obtain a set~$\uEdges$ of edges incident to~$u$, and let~${\suppEdges \coloneqq \bigcup_{u\in V}\uEdges}$. 
The procedure in~\Cref{lemma:recovering-at-one-vertex} uses~${O(n)}$ comparisons for each vertex, so the total number of comparisons is~${O(n^2)}$. 
Every edge in~${\suppEdges}$ belongs to some set~${\uEdges}$ and therefore has positive weight by~\Cref{lemma:recovering-at-one-vertex}.
Moreover, for every~${u\in V}$,
\[
\missingWeightSum{\suppEdges}
 \;\le\;
\missingWeightSum{\uEdges}
 \;<\;
 \B.
\]
Summing this inequality over~${u\in V}$ counts every unrecovered edge twice, so the total unrecovered weight is less than~${n\B/2=O(n\B)}$.
\end{proof}

\subsection{Computing an expander decomposition of the recovered graph}\label{sec:expander}
To compute a suitable contraction of~$G$, we first use~\Cref{lemma:support} to recover an explicit unweighted graph~${\supp=(V,\suppEdges)}$. 
We then compute an expander decomposition of~$\supp$ in polynomial time using no additional comparisons. 
This is the only place where our algorithm explicitly needs to know~$\B$; otherwise, the value of~$\B$ is used only in the analysis. 
We use the expander decomposition of~Li and~Saranurak~\cite{LiSaranurak}. 
We note that there are other deterministic expander decompositions that would suffice for polynomial time~(see, e.g., references in~\cite{LiSaranurak})---we choose~\cite{LiSaranurak} primarily for convenience.  

For disjoint sets~${S,T\subseteq V}$, let~${\edgesBetween{\supp}{S}{T}}$ denote the set of edges in~$\supp$ with one endpoint in~$S$ and the other in~$T$.  
In the remainder of the appendix, fix an absolute constant~${\expanderConstant>144}$; it is enough to take~${\expanderConstant=145}$.

\begin{theorem}[Expander decomposition of the recovered support graph~{\cite{LiSaranurak}}]\label{thm:expander}
Let~${\supp=(V,\suppEdges)}$ be a known unweighted graph on~${n\ge2}$ vertices.
For any~${\B\in\N_{\ge 1}}$, we can find a partition ${V=V_1\sqcup\ldots\sqcup V_{\partitionIndex}}$ in polynomial time such that the following properties hold.
\begin{enumerate}[label=(\roman*)]
\item\label{item:expander-inside} For every~$i$ and every nonempty~${X\subsetneq V_i}$,
\[
\frac{
\bigl|\edgesBetween{\supp}{X}{V_i\setminus X}\bigr|
}{
\min\{|X|,|V_i\setminus X|\}
} 
 \;\ge\;
 \expanderConstant\B .
\]
\item\label{item:expander-boundary} At most~${\softO(n\B)}$ edges of~$\supp$ join different blocks.
\end{enumerate}
\end{theorem}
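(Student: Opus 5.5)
This is the standard deterministic expander decomposition, and the plan is to get it from the Li--Saranurak result with a conductance parameter tuned to $\B$. Their theorem says: for an unweighted graph $\supp$ with $m_\supp\le\binom n2$ edges and any $\phi\in(0,1]$, one can compute in polynomial (indeed almost-linear) time a partition $V_1\sqcup\ldots\sqcup V_{\partitionIndex}$ such that each induced subgraph $\supp[V_i]$ is a $\phi$-expander and at most $\softO(\phi\, m_\supp)$ edges cross between blocks. The catch is that the expansion guarantee there is usually phrased in terms of conductance, that is, relative to volume rather than vertex count. Property~\ref{item:expander-inside}, by contrast, asks for edge expansion relative to $\min\{|X|,|V_i\setminus X|\}$. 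So the work is in the parameter translation.

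I would run the decomposition with respect to sparsity measured by vertex counts, i.e.\ with vertex weights $1$. Li--Saranurak, like most modern frameworks, allow arbitrary vertex demands/weights, and demand $\expanderDemand(v)=1$ gives exactly the ratio in~\ref{item:expander-inside}. Set the target sparsity $\psi:=\expanderConstant\B$. The guarantee then reads: every nonempty $X\subsetneq V_i$ has $|\edgesBetween{\supp}{X}{V_i\setminus X}|\ge \psi\min\{|X|,|V_i\setminus X|\}$. The number of intercluster edges is at most $\softO(\psi\cdot \expanderDemand(V))=\softO(\expanderConstant\B n)=\softO(n\B)$, since the total demand is $n$ and $\expanderConstant$ is an absolute constant. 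This gives~\ref{item:expander-boundary}.

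There is a subtlety when $\psi$ exceeds the maximum possible expansion, for example when $\expanderConstant\B>n$. In that case the decomposition may need to shatter into singletons. Property~\ref{item:expander-inside} is then vacuous, because singletons have no nonempty proper subsets. The edge bound still holds in this case, since every edge crossing is at most $\binom n2\le n\cdot n<n\expanderConstant\B$.

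The main obstacle is making sure the cited theorem really provides the vertex-weighted (sparsity) version with a polylogarithmic loss and polynomial time, rather than only the conductance version. If only the conductance version were available, I would fall back on a direct recursive sparsest-cut argument. Repeatedly find, via a polynomial-time $O(\log n)$-approximate (or deterministic $n^{o(1)}$-approximate) sparsest cut, a cut in the current block of sparsity below $\psi$. If one is found, split along it; otherwise the block certifies sparsity at least $\psi/\mathrm{approx}$. The approximation factor is then absorbed by running with $\psi$ scaled up by that factor, which is polylogarithmic in the $O(\log n)$ case. The intercluster edges are charged to the smaller side at rate $\psi$ per vertex, and each vertex lies on the smaller side at most $\log n$ times. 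That gives $\softO(n\B)$ crossing edges, and the polylog (or $n^{o(1)}$) factor is hidden by the $\softO$ notation as the paper intends.
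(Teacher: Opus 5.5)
Your main route matches the paper's proof: both apply Li--Saranurak's deterministic decomposition with uniform vertex demands, so that demand-relative expansion becomes expansion relative to $\min\{|X|,|V_i\setminus X|\}$, and both first dispose of large $\B$ by taking the all-singletons partition. The only difference is where the polylogarithmic loss goes: the cited corollary loses it in the expansion guarantee rather than in the crossing bound, so the paper takes $\varepsilon=1$ and inflates the demand to $D=\lceil\expanderConstant\B\log^{C_0}(n\B+2)\rceil$, which yields expansion exactly $\expanderConstant\B$ with $nD=\softO(n\B)$ crossing edges (also, in your fallback only an $O(\log n)$ approximation such as Leighton--Rao stays within $\softO$, not an $n^{o(1)}$ one).
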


\begin{proof}
If~${\B\ge n}$ or~${|\suppEdges|\le1}$, then the partition into singleton blocks satisfies both properties. 
Suppose therefore that~${\B<n}$ and~${|\suppEdges|\ge2}$.

Let~$C_0$ be a sufficiently large absolute constant. 
Apply the deterministic expander decomposition of~\cite[Corollary~2.5]{LiSaranurak} with~${r=1}$,~${\varepsilon=1}$, edge weights~$1$, and demand~${\expanderDemand\coloneqq  \lceil \expanderConstant\B\log^{C_0}(n\B+2)\rceil}$ at every vertex.  
Then for every nonempty~${X\subsetneq V_i}$, the cited corollary gives
\[
 \bigl|\edgesBetween{\supp}{X}{V_i\setminus X}\bigr|
 \;\geq\;
 \frac{\expanderDemand\min\{|X|,|V_i\setminus X|\}}
      {\log^{O(1)}|\suppEdges|\cdot\log \expanderDemand}
      \;\geq\;
       \frac{\expanderConstant\B\log^{C_0}(n\B+2) \cdot \min\{|X|,|V_i\setminus X|\}}
      {\log^{O(1)}|\suppEdges|\cdot\log \expanderDemand}
      .
\]
Since~${\log|\suppEdges|=O(\log n)}$ and~${\log \expanderDemand = O(\log n)}$, we can choose~$C_0$ sufficiently large to absorb both the fixed polylogarithmic exponent and the hidden absolute constant, giving
\[
\bigl|\edgesBetween{\supp}{X}{V_i\setminus X}\bigr| 
\;\geq\;
\expanderConstant\B \min\{|X|,|V_i\setminus X|\},
\]
which proves~(i). Moreover,~\cite[Corollary~2.5]{LiSaranurak} shows that the number of edges between blocks is at most~${\varepsilon\sum_{v\in V}D=nD=\softO(n\B)}$, which proves~(ii). Finally, for~${r=\varepsilon=1}$ and~${U=\poly(n)}$, the running time in~\cite[Corollary~2.5]{LiSaranurak} is polynomial in~$n$.  
\end{proof}

\subsection{Choosing which vertices to contract}

To preserve minimum cuts, we only contract vertices whose incident weight lies mostly inside their expander block. 
Although we cannot determine~${\degreeGraph{v}{G}}$ or~${\weightVertexSet{v}{V\setminus V_i}}$ explicitly using comparisons, we can use~\Cref{lemma:half-incident-weight} and a small hash family to classify these vertices with a constant-factor gap. 
In the proof of~\Cref{lemma:contraction}, we then contract each set~${\contractionSet{i}}$ found by the next result into a supervertex.

\begin{lemma}[Classifying vertices by external incident weight]\label{lemma:external}
Let~${G=(V,w)}$ be an undirected weighted graph on~${n \geq 2}$ vertices with edge weights in~${\{0,1,\ldots,\B\}}$, and let~${V=V_1\sqcup\ldots\sqcup V_{\partitionIndex}}$ be a known partition.
Suppose that~${\degreeGraph{v}{G}\ge\expanderConstant\B}$ for every vertex~$v$ in every nonsingleton block.  
Then using~${O(n^2)}$ comparisons, we can compute a set~${\contractionSet{i}\subseteq V_i}$ for every nonsingleton block~$V_i$ such that the following properties hold for all~${v\in V_i}$.
\begin{enumerate}[label=(\roman*)]
\item\label{item:external-in} 
If~${v\in \contractionSet{i}}$, then ${\weightVertexSet{v}{V\setminus V_i}<\degreeGraph{v}{G}/3}$.
\item\label{item:external-out} 
If~${v\notin \contractionSet{i}}$, then ${\weightVertexSet{v}{V\setminus V_i}>\degreeGraph{v}{G}/6}$.
\end{enumerate}
\end{lemma}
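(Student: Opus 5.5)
The plan is to reduce each classification to a few applications of \Cref{lemma:half-incident-weight}, applied to sets of the form $(V\setminus V_i)\cup A$ with $A\subseteq V_i\setminus\{v\}$. Fix a nonsingleton block $V_i$ and $v\in V_i$, and write $d\coloneqq\degreeGraph{v}{G}$, $d_{\mathrm{out}}\coloneqq\weightVertexSet{v}{V\setminus V_i}$ and $d_{\mathrm{in}}\coloneqq d-d_{\mathrm{out}}=\weightVertexSet{v}{V_i\setminus\{v\}}$. The test $\weightVertexSet{v}{S}<d/2$ alone only separates $d_{\mathrm{out}}$ at threshold $d/2$. However, if $A$ captures a fraction $f$ of $v$'s internal weight, so that $\weightVertexSet{v}{A}\approx f\,d_{\mathrm{in}}$, then the test on $S=(V\setminus V_i)\cup A$ asks whether
\[
d_{\mathrm{out}}+f\,d_{\mathrm{in}}<\tfrac12(d_{\mathrm{out}}+d_{\mathrm{in}}),
\]
that is, whether $d_{\mathrm{out}}/d<(1-2f)/(2-2f)$. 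Choosing $f\approx 1/3$ places this threshold at about $1/4$, strictly between $1/6$ and $1/3$. The remaining slack absorbs errors of order a small constant times $d$ in $\weightVertexSet{v}{A}$. (Note that $S$ and $S\cup\{v\}$ are nonempty proper subsets: $V\setminus V_i$ may be empty only if $t=1$, which I would handle separately, e.g. by taking $K_i=V_i$.)

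To obtain such sets $A$ deterministically, I would use a small pairwise-independent hash family and a majority vote. I would identify vertices with distinct nonzero elements of $\F_{2^m}$, where $2^m=\Theta(n)$. For each $a\in\F_{2^m}$, let $h_a(x)$ be the top four bits of $a\cdot x$; this yields a family of size $O(n)$ that is pairwise independent and uniform over $16$ classes. Let $A_a\subseteq V_i\setminus\{v\}$ consist of the vertices whose class lies in a fixed set of $5$ of the $16$ classes, so that $f=5/16$. The corresponding threshold is then $d_{\mathrm{out}}/d<3/11$, which lies in $(1/6,1/3)$ with constant margins.

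Over a uniformly random $a$, the quantity $\weightVertexSet{v}{A_a}$ has mean $f\,d_{\mathrm{in}}$. Its variance is at most $\sum_x w_{vx}^2\le \B\,d_{\mathrm{in}}\le \B d$, because the edge weights lie in $\{0,\ldots,\B\}$. By Chebyshev's inequality, the probability that the deviation exceeds $\delta d$ is at most $\B/(\delta^2 d)\le 1/(\delta^2\expanderConstant)$, using the hypothesis $\degreeGraph{v}{G}\ge\expanderConstant\B$. With $\delta$ around $1/12$ and $\expanderConstant>144$, this failure probability is below $1$. If necessary, I would tighten the constants (for example, choose a larger margin via a different $f$) so that it is below $1/2$.

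I would put $v$ into $\contractionSet{i}$ if and only if a strict majority of the hash functions answer ``$\weightVertexSet{v}{S_a}<d/2$''. For correctness in case (i), suppose $v\in\contractionSet{i}$. Some hash with small deviation must then have answered yes, since the good hashes form a majority. That answer forces $d_{\mathrm{out}}/d\le 3/11+O(\delta)<1/3$. Symmetrically, if $v\notin\contractionSet{i}$, some good hash answered no, which forces $d_{\mathrm{out}}/d\ge 3/11-O(\delta)>1/6$. This gives (ii).

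Each vertex uses one comparison per hash function, so the total is $O(n)$ comparisons per vertex and $O(n^2)$ overall, all in polynomial time. I expect the main obstacle to be the constant bookkeeping: choosing $f$, $\delta$ and the number of classes so that the Chebyshev failure probability is below $1/2$ under exactly $\expanderConstant>144$, with the margins on both sides strict. If the constants do not close, I would replace the majority vote by a median-type argument or enlarge the class count.
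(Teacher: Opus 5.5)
Your overall scheme is the paper's: query sets $(V\setminus V_i)\cup A$ with $A$ a hashed part of $V_i\setminus\{v\}$, one comparison per hash function via \Cref{lemma:half-incident-weight}, a strict-majority vote, and Chebyshev. However, two concrete steps fail as written.

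First, your family is not pairwise independent. Its members are $h_a(x)=$ the top four bits of $a\cdot x$, indexed by $a\in\F_{2^m}$ alone, and such a family is only universal. For $x\neq y$ we have $(ax,ay)=(z,rz)$ with $z$ uniform and $r=y/x$. So all eight output bits are $\F_2$-linear functions of one uniform element. Write the top-bit functionals as $z\mapsto\operatorname{Tr}(c_jz)$. Whenever $r\neq 1$ is a ratio of two nonzero elements of $\mathrm{span}\{c_1,\dots,c_4\}$, these eight functions are linearly dependent. Such ratios exist and in general occur among vertex pairs. Then $(h_a(x),h_a(y))$ is not uniform on $\{0,1\}^8$, and the covariance terms you drop need not vanish. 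Adding an offset $b\in\F_{2^m}$ restores independence, but it gives $\Theta(n^2)$ functions and hence $\Theta(n^3)$ comparisons. The paper instead uses $h(x)=\langle a,x\rangle+b$ over $\F_3^\ell\times\F_3$, which is pairwise independent with $3^{\ell+1}\le 9n$ members.

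Second, the constants do not close under the hypothesis $c>144$ (e.g.\ $c=145$), and this is not mere bookkeeping. With $f=5/16$ and $Y\coloneqq w(v,S_a)$, you have $\mathbb{E}[Y]=\frac{5}{16}d+\frac{11}{16}d_{\mathrm{out}}$. So when $d_{\mathrm{out}}\ge d/3$, the margin above $d/2$ is only $d/24$. Your $\delta\approx 1/12$ swamps this margin: a good yes-answer only yields $d_{\mathrm{out}}<\frac{3+16\delta}{11}d=\frac{13}{33}d$, not $d_{\mathrm{out}}<d/3$. Property (ii) likewise needs $\delta\le 7/96$. Taking $\delta=1/24$ gives the Chebyshev bound $576B/d$ with your estimate $\mathrm{Var}\le Bd$. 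With the exact variance $\frac{55}{256}\sum_x w_{vx}^2$ it still gives $\frac{55}{256}\cdot 576\,B/d=123.75\,B/d$. Neither is below $1/2$ at $c=145$.

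The hypothesis $c>144$ is calibrated to $f=1/3$ exactly. Then $\mathbb{E}[Y]=\frac d3+\frac23 d_{\mathrm{out}}$, so the threshold sits at $d_{\mathrm{out}}=d/4$ with margin $d/18$ on \emph{both} sides. The variance is $\frac29\sum_x w_{vx}^2\le\frac{2Bd}{9}$, and Chebyshev gives $72B/d\le 72/c<1/2$. Since $f=1/3$ is exactly where the two margins balance, every other $f$ makes one side strictly worse. Hence no $16$-class choice works at $c=145$ with majority voting. You would need $f$ within a $(c-144)$-dependent distance of $1/3$, together with the exact factor $f(1-f)$ in the variance. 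Using the $\F_3$ family and keeping the $x$ with $h(x)=0$, as the paper does, gives $f=1/3$ exactly and resolves both issues at once.
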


\begin{proof}
Let~$\ell$ be the smallest integer with~${3^\ell\ge n}$, and identify each graph vertex with a distinct vector in~${\F_3^\ell}$.
We use the following standard pairwise independent hash family.
Let~${\hashFamily\coloneqq\F_3^\ell\times\F_3}$, and for each~${h=(a,b)\in\hashFamily}$, define~${h(x)\coloneqq\innerProd{a}{x}+b}$.
By the choice of~$\ell$, we have~${|\hashFamily|=3^{\ell+1}\le9n=O(n)}$.

Now fix a block~${V_i}$ with~${|V_i|\geq 2}$.
Fix a vertex~${v\in V_i}$, and write~${d\coloneqq\degreeGraph{v}{G}}$. As~$V_i$ is not a singleton block, we have~${d \geq \expanderConstant\B > 0}$. 
For each~${h\in\hashFamily}$, define
\begin{align}
     \hashQuerySet{h}
 \;\coloneqq\;
 (V\setminus V_i)
 \cup
 \{x\in V_i\setminus\{v\}\setSeparator h(x)=0\}.
 \label{eq:hash_query_set_definition}
\end{align} 
Then~${\hashQuerySet{h}\subseteq V\setminus\{v\}}$, so we can use~\Cref{lemma:half-incident-weight} to determine whether~${\weightVertexSet{v}{\hashQuerySet{h}}<d/2}$ using at most one comparison.  
Put~$v$ in~$\contractionSet{i}$ if and only if~${\weightVertexSet{v}{\hashQuerySet{h}}<d/2}$ for more than half of the functions~${h\in\hashFamily}$. 

Repeat this procedure for every vertex in every nonsingleton block.
Each vertex requires at most~${|\hashFamily|=O(n)}$ comparisons, so we can construct all of the sets~${\contractionSet{i}}$  using~${O(n^2)}$ comparisons and polynomial time. 

We now show that properties~(i) and~(ii) hold for the fixed vertex~$v$.  
Sample~$h$ uniformly at random from~$\hashFamily$.
This randomness is only for the analysis; the algorithm is deterministic and enumerates every pair in~$\hashFamily$.
By pairwise independence, for distinct graph vertices~$x$ and~$y$,
\[
 \pr{h(x)=0}
 \;=\;
 \frac{1}{3}
 \qquad\text{and}\qquad
 \pr{h(x)=h(y)=0}
 \;=\;
 \frac{1}{9},
\]
where the probabilities are over the choice of~$h$.
Let~$\sampledIncidentWeight \coloneqq \weightVertexSet{v}{\hashQuerySet{h}}$.
By~\eqref{eq:hash_query_set_definition}, 
\[
 \sampledIncidentWeight
 \;=\;
 \externalIncidentWeight
 +
 \sum_{x\in V_i\setminus\{v\}}
 \edgeWeight{v}{x}\, \indicator{h(x)=0}.
\]
Each vertex in~${V_i\setminus\{v\}}$ belongs to~${\hashQuerySet{h}}$ with probability~${1/3}$.
Therefore,
\begin{equation}
\ex{\sampledIncidentWeight}
\;=\;
\externalIncidentWeight+\frac13\left(d-\externalIncidentWeight\right)
\;=\;
\frac d3+\frac{2\externalIncidentWeight}{3}.
\label{eq:expected_sampled_incident_weight}
\end{equation} 
Each indicator has variance~${2/9}$, and the covariance terms vanish by pairwise independence, so  
\[
\var{\sampledIncidentWeight}
\;=\;
\frac29\sum_{x\in V_i\setminus\{v\}}\edgeWeight{v}{x}^2
\;\le\;
\frac{2\B}{9}\sum_{x\in V_i\setminus\{v\}}\edgeWeight{v}{x}
\;\le\;
\frac{2\B d}{9}.
\]
We first prove~(i). 
Suppose that~${\externalIncidentWeight\ge d/3}$.
Then by~\eqref{eq:expected_sampled_incident_weight},
\[
 \ex{\sampledIncidentWeight}-d/2
 \;=\;
 \frac{2\externalIncidentWeight}{3}-\frac{d}{6}
 \;\ge\;
 \frac{2}{3}\cdot \frac{d}{3} - \frac{d}{6}
 \;=\;
 \frac {d}{18}.
\]
Chebyshev's inequality and~${d\ge\expanderConstant\B}$ give

\[
 \pr{\sampledIncidentWeight<d/2}
 \;\le\;
 \pr{\left|\sampledIncidentWeight-\ex{\sampledIncidentWeight}\right|\ge \frac{d}{18}}
 \;\le\;
 \frac{\var{\sampledIncidentWeight}}{(d/18)^2}
 \;\le\;
 \frac{2\B d}{9} \cdot \frac{18^2}{d^2}
 \;=\;
 \frac{72\B}{d}
 \;\le\;
 \frac{72}{\expanderConstant}
 \;<\;
 \frac12.
\]
Since~${\sampledIncidentWeight=\weightVertexSet{v}{\hashQuerySet{h}}}$ and we chose~$h$ uniformly at random, it follows that fewer than half of the functions~${h\in\hashFamily}$ satisfy~${\weightVertexSet{v}{\hashQuerySet{h}}<d/2}$. Thus, if~${\externalIncidentWeight\ge d/3}$, then~${v\notin\contractionSet{i}}$. Taking the contrapositive proves~(i).

We next prove~(ii). Suppose that~${\externalIncidentWeight\le d/6}$. 
Then
\[
d/2-\ex{\sampledIncidentWeight}
\;=\;
\frac d6-\frac{2\externalIncidentWeight}{3}
\;\ge\;
\frac d{18}.
\]
A similar application of Chebyshev's inequality gives
\[
\pr{\sampledIncidentWeight\ge d/2}
\;\le\;
\pr{\left|\sampledIncidentWeight-\ex{\sampledIncidentWeight}\right|\ge \frac{d}{18}}
\;\le\;
\frac{\var{\sampledIncidentWeight}}{(d/18)^2} 
\;<\;
\frac12.
\]
Thus, if~$\externalIncidentWeight\le d/6$, then more than half of the functions~${h\in\hashFamily}$ satisfy~${\sampledIncidentWeight<d/2}$, so~${v\in\contractionSet{i}}$. Taking the contrapositive proves~(ii).
\end{proof}

\subsection{Constructing the contraction}
\label{sec:contraction}
We now combine the results above to prove~\Cref{lemma:contraction}.

\GraphContractionUsingComparisons*

\begin{proof} 
First, use~\Cref{lemma:support} to recover an unweighted graph~${\supp=(V,\suppEdges)}$ such that every edge of~$\supp$ has positive weight in~$G$ and the total weight of unrecovered edges is~${O(n\B)}$.
Next, apply the expander decomposition in~\Cref{thm:expander} to~$\supp$ to partition~$V$ into blocks~${V_1,\ldots,V_{\partitionIndex}}$.

Fix a nonsingleton block~${V_i}$ and a vertex~${v\in V_i}$.
Applying~\Cref{thm:expander}\ref{item:expander-inside} to~${\{v\}}$ gives
\[
\frac{
\bigl|\edgesBetween{\supp}{\{v\}}{V_i\setminus\{v\}}\bigr|
}{
\min\{|\{v\}|,|V_i\setminus\{v\}|\}
}
\;=\;
\bigl|\edgesBetween{\supp}{\{v\}}{V_i\setminus\{v\}}\bigr|
\;\ge\;
\expanderConstant\B,
\]
so~$v$ has at least~${\expanderConstant\B}$ neighbors in~${\supp[V_i]}$.
Every recovered edge has positive integer weight in~$G$, and thus~${\degreeGraph{v}{G}\ge\expanderConstant\B}$.
We may therefore apply~\Cref{lemma:external} to obtain a set~${\contractionSet{i}\subseteq V_i}$ for every nonsingleton block~$V_i$.
For each remaining singleton block~$V_i$, set~${\contractionSet{i}\coloneqq\emptyset}$, and define
\[
\uncontractedVertices
\;\coloneqq\;
V\setminus\bigcup_{i=1}^{\partitionIndex}\contractionSet{i}.
\] 
Let~$\partition$ consist of every nonempty~${\contractionSet{i}}$ together with the singleton sets~${\{v\}}$ for each~${v\in\uncontractedVertices}$.
These sets are pairwise disjoint and cover~$V$, so~$\partition$ is a partition. 
We show that~$\partition$ satisfies the required properties.

For the query complexity, the procedures in~\Cref{lemma:support,lemma:external} each use~${O(n^2)}$ comparisons.  
The expander decomposition and all remaining computation take polynomial time and use no further comparisons.
Therefore we can construct~$\partition$ using~${O(n^2)}$ comparisons and polynomial time.

We first prove property~(i).
Let~${\delta\coloneqq\minimumGraphDegree{G}}$.
If some singleton is a minimum cut, then~(i) holds. 
Suppose therefore that no singleton is a minimum cut, and fix an arbitrary minimum cut~$S$. 
Every singleton has cut value at least~$\delta$, so~${\cutGraphSet{G}{S}<\delta}$ and both~$S$ and~${V\setminus S}$ contain at least two vertices.

Suppose for a contradiction that~$S$ is \emph{not} a union of blocks of~$\partition$. 
Every vertex in~${\uncontractedVertices}$ forms a singleton block, so there must be some~$i$ such that~${\contractionSet{i}}$ contains vertices in both~$S$ and~${V\setminus S}$.
Choose~${x\in\contractionSet{i}\cap S}$ and~${y\in\contractionSet{i}\setminus S}$.
By~\Cref{lemma:external}\ref{item:external-in},~${\weightVertexSet{x}{V\setminus V_i}<\degreeGraph{x}{G}/3}$. 
Moreover, as~$S$ is a minimum cut, it follows by~\Cref{lemma:cut-marginal} that~${2\weightVertexSet{x}{S\setminus\{x\}}-\degreeGraph{x}{G}\;=\;\cutGraphSet{G}{S\setminus\{x\}}-\cutGraphSet{G}{S}\;\ge\;0}$. Therefore~${\weightVertexSet{x}{S\setminus\{x\}}\ge\degreeGraph{x}{G}/2}$, and it follows that
\[ 
\weightVertexSet{x}{(S\cap V_i)\setminus\{x\}}
 \;\ge\;
\weightVertexSet{x}{S\setminus\{x\}}
-
\weightVertexSet{x}{V\setminus V_i}  
 \;>\;
\frac{\degreeGraph{x}{G}}{2}
-
\frac{\degreeGraph{x}{G}}{3}
\;=\;
\frac{\degreeGraph{x}{G}}{6}. 
\]
By~\Cref{lemma:support}\ref{item:bounded-unrecodered-weight}, the unrecovered edges incident to~$x$ have total weight less than~$\B$. 
The recovered edges from~$x$ to~${(S\cap V_i)\setminus\{x\}}$ therefore have total weight at least
\[ 
\weightVertexSet{x}{(S\cap V_i)\setminus\{x\}}-\B\;>\;\frac{\degreeGraph{x}{G}}{6}-\B\;>\;\frac{\degreeGraph{x}{G}}{12},
\]
where the last inequality uses~${\degreeGraph{x}{G}\ge\expanderConstant\B}$ and~${\expanderConstant>144}$.
Each recovered edge has weight at most~$\B$, so dividing both sides by~$\B$ gives
\[
|S\cap V_i|
\;\geq\;
\bigl|(S\cap V_i)\setminus\{x\}\bigr|\;>\;\frac{\degreeGraph{x}{G}}{12 \B}
\;\geq\;
\frac{\delta}{12\B}.
\]  
Applying the same argument to~$y$ and~${V\setminus S}$ gives
\[
|V_i\setminus S|
\;>\;
\frac{\degreeGraph{y}{G}}{12\B}
\;\ge\;
\frac{\delta}{12\B}.
\]
Since~${S\cap V_i}$ is a nonempty proper subset of~$V_i$, we can apply~\Cref{thm:expander}\ref{item:expander-inside} to~${S\cap V_i}$, giving
\[
\bigl|\edgesBetween{\supp}{S\cap V_i}{V_i\setminus S}\bigr|\;\ge\;\expanderConstant\B\min\{|S\cap V_i|,|V_i\setminus S|\}\;>\;\frac{\expanderConstant\delta}{12}\;>\;\delta.
\]
Every edge in~$\supp$ has positive weight in~$G$, and every edge counted above crosses the cut~$S$.
Hence~${\cutGraphSet{G}{S}>\delta}$, contradicting our assumption that~${\cutGraphSet{G}{S}<\delta}$.
Therefore~$S$ is a union of blocks of~$\partition$. 
Since~$S$ was an arbitrary minimum cut, every minimum cut is a union of blocks of~$\partition$, which proves~(i).

We next prove property~(ii).
Let~$\crossingWeight$ be the total weight in~$G$ of edges joining different expander blocks.
By~\Cref{thm:expander}\ref{item:expander-boundary}, at most~${\softO(n\B)}$ recovered edges join different blocks, and each such edge has weight at most~$\B$.
Moreover, by~\Cref{lemma:support}, the total weight of the unrecovered edges is~${O(n\B)}$.
Therefore
\[
\crossingWeight\;\le\;\B\cdot\softO(n\B)+O(n\B)\;=\;\softO(n\B^2).
\]
Now fix~${v\in\uncontractedVertices}$, and let~$V_i$ be the expander block containing~$v$.
If~$V_i$ is not a singleton, then~${v\notin\contractionSet{i}}$, so~\Cref{lemma:external}\ref{item:external-out} gives~${\degreeGraph{v}{G}<6\weightVertexSet{v}{V\setminus V_i}}$.
If~$V_i$ is a singleton, then~${\degreeGraph{v}{G}=\weightVertexSet{v}{V\setminus V_i}}$.
Thus, in either case,~${\degreeGraph{v}{G}\le6\weightVertexSet{v}{V\setminus V_i}}$. 
Summing over all~${v\in\uncontractedVertices}$ gives
\[
\sum_{v\in\uncontractedVertices}\degreeGraph{v}{G}
\;\le\;
6\sum_{i=1}^{\partitionIndex}\sum_{v\in\uncontractedVertices\cap V_i}\weightVertexSet{v}{V\setminus V_i}
\;\leq\;
6\sum_{i=1}^{\partitionIndex}\sum_{v\in V_i}\weightVertexSet{v}{V\setminus V_i}
\;=\;
12\crossingWeight
\;=\;
\softO(n\B^2).
\]
It remains to bound the total edge weight of~${\contractedG}$. 
Every edge of~${\contractedG}$ comes from an edge of~$G$ that either joins different expander blocks or lies inside one expander block and has an endpoint in~${\uncontractedVertices}$. 
The total edge weight of~${\contractedG}$ is therefore at most
\[
\crossingWeight
+
\sum_{v\in\uncontractedVertices}\degreeGraph{v}{G}
\;\le\;
13\crossingWeight
\;=\;
\softO(n\B^2).
\] 
This proves property~(ii).
\end{proof}

  %%%%%%%%%%%%%%%%%%
\section{Proofs omitted from \texorpdfstring{\Cref{sec:lower}}{Section 4}}
\label{app:lower_bound_proofs} 
%%%%%%%%%%%%%%%%%%
This appendix contains the details of the lower-bound construction omitted from~\Cref{sec:lower}. 
We briefly recall the definitions from~\Cref{subsec:path-construction}. 
Let~$n$ be even and sufficiently large, write~${n=2k}$, and let~${\lbSpacing\coloneqq\lfloor k/(\ell+1)\rfloor}$. 
The indices of the edges that we perturb are given by
\[
\lbPerturbIndices
\;\coloneqq\;
\{\lbSpacing,\,2\lbSpacing,\,\ldots,\,\ell\lbSpacing\}
\cup
\{n-\ell\lbSpacing,\,\ldots,\,n-2\lbSpacing,\,n-\lbSpacing\}. 
\]
Let~${\lbBoundarySet\subseteq[n-1]}$ and write~${\lbBoundarySet=\{b_1<\ldots<b_m\}}$.
Define its signed indicator~${\lbBoundaryVector{\lbBoundarySet}\in\{-1,0,1\}^{n-1}}$ by
\[
\lbBoundaryVectorEntry{\lbBoundarySet}{i}
\;\coloneqq\;
\begin{cases}
(-1)^{r}, & \text{if }i=b_r\text{ for some }r\in[m],\\
0, & \text{if }i\notin\lbBoundarySet.
\end{cases}
\]
For a symmetric~$(n-1)\times(n-1)$ matrix~${\lbEncodingMatrix}$, let
\[
\lbQuadraticForm{\lbEncodingMatrix}{\lbBoundarySet}
\;\coloneqq\;
(\lbBoundaryVector{\lbBoundarySet})^\top
\lbEncodingMatrix
\lbBoundaryVector{\lbBoundarySet}
\;=\;
\sum_{r=1}^{m}
\sum_{s=1}^{m}
(-1)^{r+s}\lbEncodingMatrix_{b_r b_s}.
\]

\subsection{Constructing the matrix \texorpdfstring{$\lbEncodingMatrix$}{M}}
\label{subsec:constructing_the_encoding_matrix_M}

This subsection contains the proof of~\Cref{lemma:encoding-perturbed-edges}.
We use the following two symmetric matrices. 
Define~${\lbPairMatrix{\lbBoundarySet} \in \{-1,0,1\}^{2\ell \times 2\ell}}$, with rows and columns indexed by~$\lbPerturbIndices$, by 
\[
\lbPairMatrixEntry{\lbBoundarySet}{p}{q}
\;\coloneqq\;
\begin{cases}
\lbBoundaryVectorEntry{\lbBoundarySet}{p}\,
\lbBoundaryVectorEntry{\lbBoundarySet}{q},
& \text{if }p\neq q,\\
0,
& \text{if }p=q.
\end{cases}
\] 
Additionally, define~${\lbSingletonMatrix\in\Z^{(n-1)\times(n-1)}}$ by
\[
\lbSingletonMatrixEntry{i}{j}
\;\coloneqq\;
\begin{cases}
-\bigl(n-|i-j|\bigr)(2n^2+p),
& \text{if }i\neq j\text{ and }\{i,j\}\cap\lbPerturbIndices=\{p\},\\
0,
& \text{otherwise}.
\end{cases}
\]
We first show that the pair~${\lbCertificatePair{\lbBoundarySet}}$ determines~${\lbBoundarySet\cap\lbPerturbIndices}$.

\begin{lemma}[The pair~${\lbCertificatePair{\lbBoundarySet}}$ determines~${\lbBoundarySet\cap\lbPerturbIndices}$] 
\label{lemma:perturbed-edge-certificate}
For every~${\lbBoundarySet,\lbOtherBoundarySet\subseteq[n-1]}$ with~${|\lbBoundarySet|,|\lbOtherBoundarySet|\geq2}$, if~${\lbCertificatePair{\lbBoundarySet}=\lbCertificatePair{\lbOtherBoundarySet}}$, then~${\lbBoundarySet\cap\lbPerturbIndices=\lbOtherBoundarySet\cap\lbPerturbIndices}$.  
\end{lemma}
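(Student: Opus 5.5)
The plan is to read off $\lbBoundarySet\cap\lbPerturbIndices$ from the pair $\lbCertificatePair{\lbBoundarySet}$ in two stages. The matrix $\lbPairMatrix{\lbBoundarySet}$ will handle the case $|\lbBoundarySet\cap\lbPerturbIndices|\geq2$. The scalar $\lbQuadraticForm{\lbSingletonMatrix}{\lbBoundarySet}$ will handle the case $|\lbBoundarySet\cap\lbPerturbIndices|\leq1$.

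\emph{Stage 1 (the matrix).} For $p\in\lbPerturbIndices$ we have $\lbBoundaryVectorEntry{\lbBoundarySet}{p}\neq0$ if and only if $p\in\lbBoundarySet$. So for $p\neq q$, the entry $\lbPairMatrixEntry{\lbBoundarySet}{p}{q}$ is nonzero exactly when $p,q\in\lbBoundarySet\cap\lbPerturbIndices$.
\begin{itemize}
\item If $|\lbBoundarySet\cap\lbPerturbIndices|\geq2$, then $\lbBoundarySet\cap\lbPerturbIndices$ is precisely the set of indices $p$ whose row of $\lbPairMatrix{\lbBoundarySet}$ is nonzero.
\item If $|\lbBoundarySet\cap\lbPerturbIndices|\leq1$, then $\lbPairMatrix{\lbBoundarySet}=0$.
\end{itemize}
Hence equality of the matrices either forces $\lbBoundarySet\cap\lbPerturbIndices=\lbOtherBoundarySet\cap\lbPerturbIndices$ directly, or forces both intersections to have size at most one. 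It remains to separate $\emptyset$ and the singletons $\{p\}$ using $\lbQuadraticForm{\lbSingletonMatrix}{\cdot}$.

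\emph{Stage 2 (the scalar).} Entries of $\lbSingletonMatrix$ are nonzero only when exactly one index lies in $\lbPerturbIndices$.
\begin{itemize}
\item If $\lbBoundarySet\cap\lbPerturbIndices=\emptyset$, then $\lbQuadraticForm{\lbSingletonMatrix}{\lbBoundarySet}=0$.
\item If $\lbBoundarySet\cap\lbPerturbIndices=\{p\}$ with $p=b_r$, then only pairs involving $p$ contribute, giving
\[
\lbQuadraticForm{\lbSingletonMatrix}{\lbBoundarySet}\;=\;-2(2n^2+p)\,T,
\qquad
T\;\coloneqq\;\sum_{s\neq r}(-1)^{r+s}\bigl(n-|b_r-b_s|\bigr).
\]
\end{itemize}

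The key estimate is $1\leq -T\leq 2n-2$. Split $T$ into the terms with $s>r$ and those with $s<r$. On each side, the terms have alternating signs starting with $-$ at the neighbour $s=r\pm1$. Their magnitudes $n-|b_r-b_s|$ are positive integers: they are at least $1$ because $|b_r-b_s|\leq n-2$. They are also strictly decreasing as $|s-r|$ grows, since the $b$'s are distinct. An alternating sum of strictly decreasing positive terms has absolute value at least $1$ (when nonempty), has sign equal to that of its first term, and has absolute value at most its first term, which is at most $n-1$. Since $m\geq2$, at least one side is nonempty. Thus each side contributes a value in $[-(n-1),0]$, with at least one side contributing a value in $[-(n-1),-1]$, which gives the claimed bound on $T$. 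In particular $\lbQuadraticForm{\lbSingletonMatrix}{\lbBoundarySet}\neq0$, so the singleton cases are separated from the empty case.

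\emph{Uniqueness step.} This is the only delicate point. Write $\lbQuadraticForm{\lbSingletonMatrix}{\lbBoundarySet}/2=(2n^2+p)t$ with $t=-T\in[1,2n-2]$ and $p\in[n-1]$. Suppose $(2n^2+p)t=(2n^2+p')t'$.
\begin{itemize}
\item If $t=t'$, then $p=p'$.
\item If $t<t'$, then $(2n^2+p)t\leq 2n^2t+nt<2n^2t+2n^2\leq(2n^2+p')t'$, using $t<2n$ in the strict inequality. This is a contradiction.
\end{itemize}
Hence the scalar determines $p$. Combining the two stages proves the lemma.
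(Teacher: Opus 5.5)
Your proof is correct and follows the paper's argument step for step. The matrix $Z^{\mathcal{B}}$ settles the case $|\mathcal{B}\cap I|\ge 2$; the alternating-sum estimate gives $\psi_H(\mathcal{B})=2(2n^2+p)t$ with $1\le t\le 2n-2$, where your $t=-T$ is exactly the paper's bracketed sum; and the gap between multiples of $2n^2$ forces $t=t'$ and hence $p=p'$, the only cosmetic difference being that you compare $t<t'$ directly instead of rearranging to $2n^2(t-t')=p't'-pt$.
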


\begin{proof}
First, suppose that~${|\lbBoundarySet\cap\lbPerturbIndices|\geq2}$. Then~${\lbPairMatrix{\lbBoundarySet}}$ determines~${\lbBoundarySet\cap\lbPerturbIndices}$, as~${p\in\lbBoundarySet\cap\lbPerturbIndices}$ if and only if~${\lbPairMatrixEntry{\lbBoundarySet}{p}{q}\neq0}$ for some~${q\in\lbPerturbIndices\setminus\{p\}}$. 
Suppose therefore that~${|\lbBoundarySet\cap\lbPerturbIndices|\leq1}$. Then every entry of~${\lbPairMatrix{\lbBoundarySet}}$ is~$0$, so we consider the quadratic form~${\lbQuadraticForm{\lbSingletonMatrix}{\lbBoundarySet}}$.

If~${\lbBoundarySet\cap\lbPerturbIndices=\emptyset}$, then~${\lbQuadraticForm{\lbSingletonMatrix}{\lbBoundarySet}=0}$.
Otherwise, suppose that~${\lbBoundarySet\cap\lbPerturbIndices=\{p\}}$ for some~${p=b_r}$. 
The only nonzero terms in~${\lbQuadraticForm{\lbSingletonMatrix}{\lbBoundarySet}}$ are those involving~$p$.
Since~${\lbBoundaryVectorEntry{\lbBoundarySet}{{b_s}}=(-1)^s}$, expanding these terms gives
\begin{align}
\frac{\lbQuadraticForm{\lbSingletonMatrix}{\lbBoundarySet}}
{2(2n^2+p)}
\;=\;
\sum_{s=1}^{r-1}
(-1)^{r-s-1}
\bigl(n-p+b_s\bigr)
+
\sum_{s=r+1}^{m}
(-1)^{s-r-1}
\bigl(n-b_s+p\bigr).
\label{eq:quadratic_form_pairs}
\end{align}
Since~${b_1<\ldots<b_{r-1}<p}$, the terms in the first sum alternate in sign and decrease in absolute value when read from~${s=r-1}$ down to~${s=1}$. Thus, if~${r>1}$, then the summand with~${s=r-1}$ is positive, and the first sum satisfies
\[
0
\;<\;
\sum_{s=1}^{r-1}
(-1)^{r-s-1}
\bigl(n-p+b_s\bigr)
\;\leq\;
n-p+b_{r-1}
\;\leq\;
n-1.
\] 
Similarly, we have~${p<b_{r+1}<\ldots<b_m}$, so if~${r<m}$, then the second sum satisfies
\[
0
\;<\;
\sum_{s=r+1}^{m}
(-1)^{s-r-1}
\bigl(n-b_s+p\bigr)
\;\leq\;
n-b_{r+1}+p
\;\leq\;
n-1.
\]
Since~${m\geq2}$, at least one of these two sums is nonempty. 
Therefore their sum is an integer between~$1$ and~${2n-2}$, and it follows by~\eqref{eq:quadratic_form_pairs} that~${\lbQuadraticForm{\lbSingletonMatrix}{\lbBoundarySet}=2(2n^2+p)t}$ for some integer~${1\leq t<2n}$.  
Now suppose that~${2(2n^2+p)t=2(2n^2+p')t'}$ for some~${p,p'\in\lbPerturbIndices}$ and~${1\leq t,t'<2n}$.
Rearranging gives~${2n^2(t-t')=p't'-pt}$.
Since~${1\leq p,p'<n}$ and~${1\leq t,t'<2n}$, we have~${|p't'-pt|<2n^2}$.
It follows that~${t=t'}$, and hence~${p=p'}$. Therefore, if~${\lbBoundarySet\cap\lbPerturbIndices=\{p\}}$, then~${\lbQuadraticForm{\lbSingletonMatrix}{\lbBoundarySet}}$ determines~$p$. 

Combining the cases above, the pair~$\lbCertificatePair{\lbBoundarySet}$ determines~${\lbBoundarySet\cap\lbPerturbIndices}$. Thus for every~${\lbBoundarySet,\lbOtherBoundarySet\subseteq[n-1]}$ with~${|\lbBoundarySet|,|\lbOtherBoundarySet|\geq2}$, if~${\lbCertificatePair{\lbBoundarySet}=\lbCertificatePair{\lbOtherBoundarySet}}$, then~${\lbBoundarySet\cap\lbPerturbIndices=\lbOtherBoundarySet\cap\lbPerturbIndices}$. 
\end{proof}

We now use the pair from~\Cref{lemma:perturbed-edge-certificate} to prove~\Cref{lemma:encoding-perturbed-edges}.

\encodingCrossingsWithPerturbedPathIndices*

\begin{proof}
Let~${\lbCertificate{\lbBoundarySet}\coloneqq\lbCertificatePair{\lbBoundarySet}}$ be the pair from~\Cref{lemma:perturbed-edge-certificate}. 
There are at most~${3^{2\ell}}$ possible matrices~${\lbPairMatrix{\lbBoundarySet}}$, as the entries are determined by the~${2\ell}$ values~${\lbBoundaryVectorEntry{\lbBoundarySet}{p}\in\{-1,0,1\}}$ for~${p\in\lbPerturbIndices}$.
Also, every entry of~${\lbSingletonMatrix}$ has magnitude smaller than~${2n^3}$, so~${|\lbQuadraticForm{\lbSingletonMatrix}{\lbBoundarySet}|<2n^5}$.
Hence there are at most
\[
\lbCertificateCount
\;\coloneqq\;
3^{2\ell}\cdot4n^5
\;=\;
\lbWeightRange
\]
possible values of~${\lbCertificate{\lbBoundarySet}}$. 
For each pair~${p,q\in\lbPerturbIndices}$ with~${p<q}$, let~${\lbProjectionCoeff{p}{q}=\lbProjectionCoeff{q}{p}\in\{0,1,\ldots,\lbCertificateCount^2\}}$ be a coefficient to be fixed later. 
Define~${\lbProjectionMatrix\in\Z^{(n-1)\times(n-1)}}$ by
\[
\lbProjectionMatrix_{ij}
\;\coloneqq\;
\begin{cases}
\lbProjectionCoeff{i}{j},
& \text{if }i\neq j\text{ and }i,j\in\lbPerturbIndices,\\
0,
& \text{otherwise},
\end{cases}
\]
and let~${\lbInitialEncodingMatrix\coloneqq\lbSingletonMatrix+\lbProjectionMatrix}$.
Then~$\lbInitialEncodingMatrix$ is symmetric with zero diagonal. 
Moreover,~${|\lbPerturbIndices|=2\ell<n-1}$, so there exist~${p\in\lbPerturbIndices}$ and~${j\in[n-1]\setminus\lbPerturbIndices}$.
For these indices,~${\lbProjectionMatrix_{pj}=0}$ and~${\lbSingletonMatrixEntry{p}{j}\neq0}$, so~${\maxNorm{\lbInitialEncodingMatrix}\geq1}$.
Also,
\[
\maxNorm{\lbInitialEncodingMatrix}
\;\leq\;
2n^3+\lbCertificateCount^2
\;=\;
\lbWeightRange.
\]
Let~${\lbEncodingScale\coloneqq 4n^2\maxNorm{\lbInitialEncodingMatrix}}$, and define the final matrix~$\lbEncodingMatrix$ by
\begin{align}
\lbEncodingMatrix_{ij}
\;&\coloneqq\;
i(n-j) \lbEncodingScale 
+
2\lbInitialEncodingMatrix_{ij}
\qquad\text{for }1\leq i<j\leq n-1,
\label{eq:final_encoding_matrix}
\end{align}
and set~${\lbEncodingMatrix_{ji}\coloneqq\lbEncodingMatrix_{ij}}$ and~${\lbEncodingMatrix_{ii}\coloneqq0}$.
Thus~$\lbEncodingMatrix$ is a symmetric integer matrix with zero diagonal. 
Since~${i(n-j)<n^2}$, we have 
\[
\maxNorm{\lbEncodingMatrix} 
\;\leq\;
(4n^4+2)\maxNorm{\lbInitialEncodingMatrix} 
\;=\;
\lbWeightRange.
\]
Also,~${\lbEncodingMatrix_{1,n-1}\geq\lbEncodingScale-2\maxNorm{\lbInitialEncodingMatrix}>0}$,
so~${\maxNorm{\lbEncodingMatrix}\geq1}$.

It remains to choose the coefficients~${\lbProjectionCoeff{p}{q}}$.
Fix~${\lbBoundarySet,\lbOtherBoundarySet\subseteq[n-1]}$ with~${|\lbBoundarySet|,|\lbOtherBoundarySet|\geq2}$, and suppose that~${\lbCertificate{\lbBoundarySet}\neq\lbCertificate{\lbOtherBoundarySet}}$.  
By the definition of~$\lbInitialEncodingMatrix$,
\[
\lbQuadraticForm{\lbInitialEncodingMatrix}{\lbBoundarySet}
\;=\;
\lbQuadraticForm{\lbSingletonMatrix}{\lbBoundarySet}
+
2
\sum_{\substack{p,q\in\lbPerturbIndices\\p<q}}
\lbProjectionCoeff{p}{q}
\lbPairMatrixEntry{\lbBoundarySet}{p}{q}.
\]
Subtracting this identity for~$\lbOtherBoundarySet$ from the identity for~$\lbBoundarySet$ gives
\begin{align}
\lbQuadraticForm{\lbInitialEncodingMatrix}{\lbBoundarySet}
\;=\;
\lbQuadraticForm{\lbInitialEncodingMatrix}{\lbOtherBoundarySet}
\qquad\iff\qquad
\lbQuadraticForm{\lbSingletonMatrix}{\lbOtherBoundarySet}
-
\lbQuadraticForm{\lbSingletonMatrix}{\lbBoundarySet}
\;=\;
2
\sum_{\substack{p,q\in\lbPerturbIndices\\p<q}}
\lbProjectionCoeff{p}{q}
\left(
\lbPairMatrixEntry{\lbBoundarySet}{p}{q}
-
\lbPairMatrixEntry{\lbOtherBoundarySet}{p}{q}
\right).
\label{eq:iff_condition_for_equal_quadratic_forms}
\end{align}
If~${\lbPairMatrix{\lbBoundarySet}=\lbPairMatrix{\lbOtherBoundarySet}}$, then as~${\lbCertificate{\lbBoundarySet}\neq\lbCertificate{\lbOtherBoundarySet}}$, we must have~${\lbQuadraticForm{\lbSingletonMatrix}{\lbBoundarySet} \neq \lbQuadraticForm{\lbSingletonMatrix}{\lbOtherBoundarySet}}$. It follows that~${\lbQuadraticForm{\lbInitialEncodingMatrix}{\lbBoundarySet} \neq \lbQuadraticForm{\lbInitialEncodingMatrix}{\lbOtherBoundarySet}}$ for every choice of~${\lbProjectionCoeff{p}{q}}$. 
Suppose instead that~${\lbPairMatrix{\lbBoundarySet}\neq\lbPairMatrix{\lbOtherBoundarySet}}$, and choose~${p'<q'}$ in~$\lbPerturbIndices$ such that~${\lbPairMatrixEntry{\lbBoundarySet}{p'}{q'}\neq\lbPairMatrixEntry{\lbOtherBoundarySet}{p'}{q'}}$.
Suppose that we fix the~${\binom{2\ell}{2}-1}$ coefficients~${\lbProjectionCoeff{p}{q}}$ with~${(p,q)\neq(p',q')}$. 
Then by~\eqref{eq:iff_condition_for_equal_quadratic_forms}, if~${\lbQuadraticForm{\lbInitialEncodingMatrix}{\lbBoundarySet}=\lbQuadraticForm{\lbInitialEncodingMatrix}{\lbOtherBoundarySet}}$, then these coefficients determine at most one possible value of~${\lbProjectionCoeff{p'}{q'}}$.  
Thus, for these fixed sets~$\lbBoundarySet$ and~$\lbOtherBoundarySet$, at most 
\[
(\lbCertificateCount^2+1)^{\binom{2\ell}{2}-1}
\]
choices of all of the coefficients satisfy~${\lbQuadraticForm{\lbInitialEncodingMatrix}{\lbBoundarySet}=\lbQuadraticForm{\lbInitialEncodingMatrix}{\lbOtherBoundarySet}}$. 

There are fewer than~${\lbCertificateCount^2/2}$ unordered pairs of distinct values of~$\lbCertificate{\lbBoundarySet}$.
Hence the total number of coefficient choices for which there exist~${\lbBoundarySet,\lbOtherBoundarySet}$ with~${\lbCertificate{\lbBoundarySet}\neq \lbCertificate{\lbOtherBoundarySet}}$ and~${\lbQuadraticForm{\lbInitialEncodingMatrix}{\lbBoundarySet}=\lbQuadraticForm{\lbInitialEncodingMatrix}{\lbOtherBoundarySet}}$ is at most 
\[ 
\frac{\lbCertificateCount^2}{2}
\cdot 
(\lbCertificateCount^2+1)^{\binom{2\ell}{2}-1}
\;<\;
(\lbCertificateCount^2+1)^{\binom{2\ell}{2}}.
\]
The right-hand side is equal to the total number of possible choices of the~${\lbProjectionCoeff{p}{q}}$. 
Thus, by the pigeonhole principle, there exists a choice of the~${\lbProjectionCoeff{p}{q}}$ such that, for all~${\lbBoundarySet,\lbOtherBoundarySet\subseteq[n-1]}$ with~${|\lbBoundarySet|,|\lbOtherBoundarySet|\geq 2}$, 
\begin{align}
    \lbQuadraticForm{\lbInitialEncodingMatrix}{\lbBoundarySet}
\;=\;
\lbQuadraticForm{\lbInitialEncodingMatrix}{\lbOtherBoundarySet}
\qquad\implies\qquad
\lbCertificate{\lbBoundarySet}
\;=\;
\lbCertificate{\lbOtherBoundarySet}.
\label{eq:quadratic_form_equals_implies_equal_certificates}
\end{align} 
Fix such a choice.
Suppose that~${\lbQuadraticForm{\lbEncodingMatrix}{\lbBoundarySet}=\lbQuadraticForm{\lbEncodingMatrix}{\lbOtherBoundarySet}}$.
By~\eqref{eq:final_encoding_matrix}, we have~${2\lbQuadraticForm{\lbInitialEncodingMatrix}{\lbBoundarySet} \;\equiv\; 2\lbQuadraticForm{\lbInitialEncodingMatrix}{\lbOtherBoundarySet} \pmod{\lbEncodingScale}}$. 
Since~$\lbInitialEncodingMatrix$ has zero diagonal,~${|\lbQuadraticForm{\lbInitialEncodingMatrix}{\lbBoundarySet}|<n^2\maxNorm{\lbInitialEncodingMatrix}}$ for every~${\lbBoundarySet\subseteq[n-1]}$. 
Thus the difference of the two sides of the congruence has magnitude smaller than~${\lbEncodingScale }$, so
${\lbQuadraticForm{\lbInitialEncodingMatrix}{\lbBoundarySet}=\lbQuadraticForm{\lbInitialEncodingMatrix}{\lbOtherBoundarySet}}$. 
It follows by~\eqref{eq:quadratic_form_equals_implies_equal_certificates} that~${\lbCertificate{\lbBoundarySet}=\lbCertificate{\lbOtherBoundarySet}}$.
Hence by~\Cref{lemma:perturbed-edge-certificate},~${\lbBoundarySet\cap\lbPerturbIndices=\lbOtherBoundarySet\cap\lbPerturbIndices}$.

As~$\lbBoundarySet$ and~$\lbOtherBoundarySet$ were arbitrary, for every~${\lbBoundarySet,\lbOtherBoundarySet\subseteq[n-1]}$ with~${|\lbBoundarySet|,|\lbOtherBoundarySet|\geq2}$,
\[
\lbQuadraticForm{\lbEncodingMatrix}{\lbBoundarySet}
\;=\;\lbQuadraticForm{\lbEncodingMatrix}{\lbOtherBoundarySet}
\qquad\implies\qquad
\lbBoundarySet\cap\lbPerturbIndices
\;=\;
\lbOtherBoundarySet\cap\lbPerturbIndices.
\]
For the final property, set~${\lbEncodingMatrix_{0j}\coloneqq\lbInitialEncodingMatrix_{0j}\coloneqq0}$ and~${\lbEncodingMatrix_{i n}\coloneqq\lbInitialEncodingMatrix_{i n}\coloneqq0}$ whenever these entries occur.
Then, for~${1\leq r<s-1\leq n-1}$,~\eqref{eq:final_encoding_matrix} gives 
\begin{align*}
\lbEncodingMatrix_{r,s-1}
+
\lbEncodingMatrix_{r-1,s}
-
\lbEncodingMatrix_{r-1,s-1}
-
\lbEncodingMatrix_{r,s}
&\;=\;
\lbEncodingScale 
+
2\left(
\lbInitialEncodingMatrix_{r,s-1}
+
\lbInitialEncodingMatrix_{r-1,s}
-
\lbInitialEncodingMatrix_{r-1,s-1}
-
\lbInitialEncodingMatrix_{r,s}
\right).
\end{align*}
The term in parentheses has magnitude at most~${4\maxNorm{\lbInitialEncodingMatrix}}$.
Hence the right-hand side is at least~${(4n^2-8)\maxNorm{\lbInitialEncodingMatrix}>0}$ and is an even integer.
This proves all the required properties of~$\lbEncodingMatrix$.
\end{proof}

\subsection{Properties of the base graphs}
\label{subsec:properties_of_the_base_graph}
We first restate the definition of~$\lbBaseGraph{\tau}$, then prove~\Cref{lemma:base_graph_properties}.

\baseGraphDefinition*

We prove the following result.

\baseGraphProperties*

We prove each of the three properties in~\Cref{lemma:base_graph_properties} individually. 
In the remainder of this section, fix a path~${\tau=(v_1,\ldots,v_n)\in\lbOrderings}$ and let~${\lbBaseGraph{\tau}=(V,w)}$ be the base graph in~\Cref{def:base_graph}. 

\begin{lemma}[Edge weights of the base graph; \Cref{lemma:base_graph_properties}\ref{item:base_graph_edge_weights}]
\label{lemma:base_graph_edge_weights}
Every edge weight in~$\lbBaseGraph{\tau}$ is a positive even integer at most~${\lbWeightRange}$.
\end{lemma}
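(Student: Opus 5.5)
We split into the two kinds of edges in \Cref{def:base_graph}.

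\emph{Nonpath edges.} For $\{v_r,v_s\}$ with $s\ge r+2$, the weight \eqref{eq:nonpath_edge_weights} is exactly the quantity in the final statement of \Cref{lemma:encoding-perturbed-edges}. That lemma applies for $1\le r<s-1\le n-1$, so this weight is a positive even integer. Each of its four terms has magnitude at most $\maxNorm{\lbEncodingMatrix}$, so the weight is at most $4\maxNorm{\lbEncodingMatrix}\le\lbWeightRange$.

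\emph{Path edges.} For $\{v_i,v_{i+1}\}$, the weight \eqref{eq:path_edge_weight} is $n^3\lbWeightScale$ minus a sum of nonpath weights. Each subtracted term is even by the previous step, and $n^3\lbWeightScale=2n^5\maxNorm{\lbEncodingMatrix}$ is even, so the path weight is an even integer.

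For positivity, note that fewer than $n^2$ nonpath weights are subtracted, each at most $4\maxNorm{\lbEncodingMatrix}$. The subtracted total is therefore less than $4n^2\maxNorm{\lbEncodingMatrix}$, which is far below $n^3\lbWeightScale=2n^5\maxNorm{\lbEncodingMatrix}$ since $\maxNorm{\lbEncodingMatrix}\ge1$.

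For the upper bound, the subtracted terms are nonnegative, so the path weight is at most $n^3\lbWeightScale=2n^5\maxNorm{\lbEncodingMatrix}$. Because $\maxNorm{\lbEncodingMatrix}\le\lbWeightRange$, this is again $\lbWeightRange$, absorbing the extra $n^5$ factor into the $n^{O(1)}$.

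The only place needing care is the indexing convention. We must check that the boundary cases $r=1$ (which uses $\lbEncodingMatrix_{0,\cdot}$) and $s=n$ (which uses $\lbEncodingMatrix_{\cdot,n}$) are covered by the zero-padding stated in \Cref{lemma:encoding-perturbed-edges}. That padding is identical to the one used in \Cref{def:base_graph}, and the condition $s\ge r+2$ matches the lemma's range $r<s-1$, so the lemma's evenness and positivity guarantee applies to every nonpath edge. Once that correspondence is confirmed, the rest is the arithmetic above.
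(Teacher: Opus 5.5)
Your proof is correct and follows the paper's route. Nonpath weights are positive even integers bounded by $4\maxNorm{\lbEncodingMatrix}$ by the final property of \Cref{lemma:encoding-perturbed-edges}, and each path weight is the even number $n^3\lbWeightScale$ minus fewer than $n^2$ such even weights. The only difference is that the paper uses the sharper count $i(n-i)-1<n^2/4$ of subtracted terms to bound their total by $\lbWeightScale/2$, but your cruder count suffices just as well.
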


\begin{proof}
We defined~${\lbWeightScale\coloneqq2n^2\maxNorm{\lbEncodingMatrix}}$.
By~\Cref{lemma:encoding-perturbed-edges},~${1\leq\maxNorm{\lbEncodingMatrix}\leq\lbWeightRange}$, so~${\lbWeightScale}$ is a positive even integer with~${\lbWeightScale=\lbWeightRange}$. 
Thus by~\eqref{eq:nonpath_edge_weights}, each nonpath edge weight~$\lbEdgeWeight{r}{s}$ is a positive even integer and~${\lbEdgeWeight{r}{s}\leq4\maxNorm{\lbEncodingMatrix}=2\lbWeightScale/n^2=\lbWeightRange}$. 

For a path edge~${\{v_i,v_{i+1}\}}$, the two sums in~\eqref{eq:path_edge_weight} contain exactly the nonpath edges crossing the prefix cut~${\lbPrefix{\tau}{i}}$.
There are~${i(n-i)-1<n^2/4}$ such edges, so their total weight is smaller than~${\lbWeightScale/2}$.
Thus~${0<\lbEdgeWeight{i}{i+1}<n^3\lbWeightScale=\lbWeightRange}$.
Each summand in~\eqref{eq:path_edge_weight} is an even integer, so~$\lbEdgeWeight{i}{i+1}$ is also an even integer.
\end{proof}

We now compute the cut values in~$\lbBaseGraph{\tau}$ in terms of the quadratic form~$\lbQuadraticForm{\lbEncodingMatrix}{\cdot}$.

\begin{lemma}[Cut value from the path boundary]
\label{lemma:base_graph_cut_function}
For every nonempty proper set~${X\subsetneq V}$,
\begin{align*}
\cutGraphSet{\lbBaseGraph{\tau}}{X}
\;=\;
|\lbBoundary{\tau}{X}|n^3\lbWeightScale
+
\lbQuadraticForm{\lbEncodingMatrix}{\lbBoundary{\tau}{X}}. 
\end{align*} 
Moreover,~${\left|\lbQuadraticForm{\lbEncodingMatrix}{\lbBoundary{\tau}{X}}\right|<\lbWeightScale/2}$. 
\end{lemma}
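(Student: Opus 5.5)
The plan is a double-counting identity. I would write the cut value as a sum over crossing edges, splitting into path edges and nonpath edges. First I compute the total weight of edges crossing a single prefix cut. By~\eqref{eq:path_edge_weight}, the path edge $\{v_i,v_{i+1}\}$ together with the nonpath edges crossing $\lbPrefix{\tau}{i}$ have total weight exactly $n^3\lbWeightScale$. So $\cutGraphSet{\lbBaseGraph{\tau}}{\lbPrefix{\tau}{i}}=n^3\lbWeightScale$ for every $i$.

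Next I express a general cut through its path boundary. Write $\lbBoundary{\tau}{X}=\{b_1<\cdots<b_m\}$. Then $X$ (or its complement) is the alternating union of path segments. Its indicator in position order is
\[
\indicator{v_j\in X}=\sum_{r}(-1)^{r+1}\indicator{j>b_r}
\]
up to complementation, which does not change the cut value.

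For a nonpath edge $\{v_r,v_s\}$ with $s\ge r+2$, I plan to show that the weights in~\eqref{eq:nonpath_edge_weights} are the second differences of $\lbEncodingMatrix$ along the path. Concretely, for $1\le a<b\le n-1$,
\[
\sum_{r\le a}\ \sum_{s\ge b+1,\ s\ge r+2} \lbEdgeWeight{r}{s}
\]
telescopes to $\lbEncodingMatrix_{ab}$, using the conventions $\lbEncodingMatrix_{0j}=\lbEncodingMatrix_{in}=0$. The boundary terms need a careful check near the diagonal $s=r+1$, where path edges are excluded.

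With that identity, the nonpath weight crossing $X$ becomes a sum over pairs of boundary positions. Each nonpath edge crosses $X$ iff an odd number of boundary indices lie in $[r,s-1]$. Expanding $\indicator{\text{odd}}=\tfrac12(1-\prod(-1))$ directly is messy. Instead I would use inclusion–exclusion on the indicator vector and write
\[
\cutval=\sum_{\text{edges}} w\,(x_r-x_s)^2,
\qquad x_j\in\{0,1\},\qquad x_r-x_s=\sum_{t=r}^{s-1}(x_t-x_{t+1}).
\]
The differences $x_t-x_{t+1}$ are exactly $\pm\lbBoundaryVectorEntry{\lbBoundary{\tau}{X}}{t}$, with alternating signs. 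So the nonpath part equals $\sum_{t,u}\lbBoundaryVectorEntry{}{t}\lbBoundaryVectorEntry{}{u}\sum_{r\le\min(t,u),\ s-1\ge\max(t,u)}\lbEdgeWeight{r}{s}$, restricted to nonpath edges.

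The path edges need separate handling. Path edge $i$ contributes $\lbEdgeWeight{i}{i+1}$ exactly when $i\in\lbBoundary{\tau}{X}$. Substituting~\eqref{eq:path_edge_weight}, the diagonal terms $t=u=i$ give $n^3\lbWeightScale$ minus the nonpath crossing weight of prefix $i$. That subtracted quantity exactly cancels the diagonal of the nonpath double sum. The off-diagonal terms $t\ne u$ telescope to $\lbEncodingMatrix_{tu}$, since $\lbEncodingMatrix$ has zero diagonal. This yields $|\lbBoundary{\tau}{X}|n^3\lbWeightScale+\lbQuadraticForm{\lbEncodingMatrix}{\lbBoundary{\tau}{X}}$.

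For the magnitude bound, $\lbQuadraticForm{\lbEncodingMatrix}{\cdot}$ has at most $(n-1)^2$ nonzero terms, each of absolute value at most $\maxNorm{\lbEncodingMatrix}$. So its absolute value is below $n^2\maxNorm{\lbEncodingMatrix}=\lbWeightScale/2$.

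The main obstacle is verifying the telescoping identity, with correct index ranges and boundary conventions near the diagonal. I will check it first on small cases such as $n=4$.
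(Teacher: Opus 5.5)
Your proof is correct and is essentially the paper's argument. The paper also telescopes the nonpath weights over the rectangle $r\le b_\alpha$, $s\ge b_\beta+1$ to get $M_{b_\alpha b_\beta}$, and uses the path-edge formula~\eqref{eq:path_edge_weight} to cancel the $t_{rs}w_{r,s}$ terms; your expansion of $(x_r-x_s)^2=\bigl(\sum_{t=r}^{s-1}z_t\bigr)^2$ with $z=z(\partial_\tau(X))$ appears there as the per-edge coefficient identity $\mathbf{1}[t_{rs}\text{ odd}]-t_{rs}=-2\lfloor t_{rs}/2\rfloor=\bigl(\sum_{\alpha=1}^{t_{rs}}(-1)^\alpha\bigr)^2-t_{rs}$ (the diagonal check you flag is vacuous, since $r\le a<b\le s-1$ already forces $s\ge r+2$, and the strict final bound uses $\lVert M\rVert_{\max}\ge 1$).
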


\begin{proof}
Write~${\lbBoundary{\tau}{X}=\{b_1<\ldots<b_m\}}$ and, for every nonpath edge~${\{v_r,v_s\}}$, let
\[
\lbBoundaryCount
\;\coloneqq\;
|\lbBoundary{\tau}{X}\cap\{r,\ldots,s-1\}|.
\] 
The edge~${\{v_r,v_s\}}$ crosses~$X$ if and only if~${\lbBoundaryCount}$ is odd.
Thus, by the definition of the cut function,
\[
\cutGraphSet{\lbBaseGraph{\tau}}{X}
\;=\;
\sum_{i\in\lbBoundary{\tau}{X}}
\lbEdgeWeight{i}{i+1}
+
\sum_{r=1}^{n-2}
\sum_{s=r+2}^{n}
\indicator{\lbBoundaryCount\text{ is odd}}
\lbEdgeWeight{r}{s}.
\]
Summing~\eqref{eq:path_edge_weight} for each~${i \in \lbBoundary{\tau}{X}}$ gives
\[
\sum_{i\in\lbBoundary{\tau}{X}}
\lbEdgeWeight{i}{i+1}
\;=\;
mn^3\lbWeightScale
-
\sum_{r=1}^{n-2}
\sum_{s=r+2}^{n}
\lbBoundaryCount\lbEdgeWeight{r}{s}.
\]
Since~${-t+\indicator{t\text{ is odd}}=-2\lfloor t/2\rfloor}$ for every nonnegative integer~$t$, it follows that 
\begin{align}
\cutGraphSet{\lbBaseGraph{\tau}}{X}
\;&=\;
m n^3\lbWeightScale
-
2
\sum_{r=1}^{n-2}
\sum_{s=r+2}^{n}
\left\lfloor\frac{\lbBoundaryCount}{2}\right\rfloor
\lbEdgeWeight{r}{s}.
\label{eq:cut_value_with_pairs}
\end{align}
By~\eqref{eq:nonpath_edge_weights}, for every~${1\leq\alpha<\beta\leq m}$, 
\begin{align}
\sum_{r=1}^{b_\alpha}
\sum_{s=b_\beta+1}^{n}
\lbEdgeWeight{r}{s}
\;=\;
\sum_{r=1}^{b_\alpha}
\sum_{s=b_\beta+1}^{n}
\bigl(
\lbEncodingMatrix_{r,s-1}
+
\lbEncodingMatrix_{r-1,s}
-
\lbEncodingMatrix_{r-1,s-1}
-
\lbEncodingMatrix_{r,s}
\bigr)
\;=\;
\lbEncodingMatrix_{b_\alpha b_\beta}.
\label{eq:sum_weights_telescoped}
\end{align}
Since~$\lbEncodingMatrix$ is symmetric with zero diagonal, 
\begin{align}
\lbQuadraticForm{\lbEncodingMatrix}{\lbBoundary{\tau}{X}}
\;=\;
2
\sum_{\alpha=1}^{m-1}
\sum_{\beta=\alpha+1}^{m}
(-1)^{\alpha+\beta}\lbEncodingMatrix_{b_\alpha b_\beta}
\;=\;
2
\sum_{\alpha=1}^{m-1}
\sum_{\beta=\alpha+1}^{m}
(-1)^{\alpha+\beta}
\sum_{r=1}^{b_\alpha}
\sum_{s=b_\beta+1}^{n}
\lbEdgeWeight{r}{s},
\label{eq:quadratic_form_expanded_edge_weights}
\end{align}
where the second equality follows from~\eqref{eq:sum_weights_telescoped}. 

Fix a nonpath edge~${\{v_r,v_s\}}$. 
If~${\lbBoundaryCount=0}$, then the coefficient of~$\lbEdgeWeight{r}{s}$ in~\eqref{eq:quadratic_form_expanded_edge_weights} is~$0$. 
Otherwise, since the boundary indices in~${\{r,\ldots,s-1\}}$ are consecutive in~${b_1<\ldots<b_m}$, suppose that~${\lbBoundary{\tau}{X}\cap\{r,\ldots,s-1\}=\bigl\{b_{\gamma+1},\,b_{\gamma+2},\,\ldots,\,b_{\gamma+\lbBoundaryCount}\bigr\}}$.  
By~\eqref{eq:quadratic_form_expanded_edge_weights}, the coefficient of~$\lbEdgeWeight{r}{s}$ is
\begin{align}
2
\sum_{\alpha=\gamma+1}^{\gamma+\lbBoundaryCount-1}
\sum_{\beta=\alpha+1}^{\gamma+\lbBoundaryCount}
(-1)^{\alpha+\beta}
\;=\;
2
\sum_{\alpha=1}^{\lbBoundaryCount-1}
\sum_{\beta=\alpha+1}^{\lbBoundaryCount}
(-1)^{\alpha+\beta}
\;=\;
\left(
\sum_{\alpha=1}^{\lbBoundaryCount}
(-1)^\alpha
\right)^2
-
\lbBoundaryCount
\;=\;
-2
\left\lfloor
\frac{\lbBoundaryCount}{2}
\right\rfloor.
\label{eq:nonpath_edge_coefficient}
\end{align}   
Substituting~\eqref{eq:nonpath_edge_coefficient} into~\eqref{eq:quadratic_form_expanded_edge_weights}, we obtain
\begin{align*}
\lbQuadraticForm{\lbEncodingMatrix}{\lbBoundary{\tau}{X}}
\;=\;
-2
\sum_{r=1}^{n-2}
\sum_{s=r+2}^{n}
\left\lfloor
\frac{\lbBoundaryCount}{2}
\right\rfloor
\lbEdgeWeight{r}{s}. 
\end{align*}
It follows by~\eqref{eq:cut_value_with_pairs} that~${\cutGraphSet{\lbBaseGraph{\tau}}{X} = m n^3\lbWeightScale + \lbQuadraticForm{\lbEncodingMatrix}{\lbBoundary{\tau}{X}} = |\lbBoundary{\tau}{X}| n^3\lbWeightScale + \lbQuadraticForm{\lbEncodingMatrix}{\lbBoundary{\tau}{X}}}$. Finally, since~$\lbEncodingMatrix$ has zero diagonal,
\[
\left|
\lbQuadraticForm{\lbEncodingMatrix}{\lbBoundary{\tau}{X}}
\right|
\;=\;
\left|
\sum_{\alpha=1}^{m}
\sum_{\beta=1}^{m}
(-1)^{\alpha+\beta}
\lbEncodingMatrix_{b_\alpha b_\beta}
\right|
\;\leq\;
m(m-1)\maxNorm{\lbEncodingMatrix}
\;<\;
n^2\maxNorm{\lbEncodingMatrix}
\;=\;
\frac{\lbWeightScale}{2}.  \qedhere
\]
\end{proof}

\begin{corollary}[Minimum cuts of the base graph; \Cref{lemma:base_graph_properties}\ref{item:base_graph_min_cuts}]
\label{cor:base_graph_minimum_cuts}
The minimum cuts of~$\lbBaseGraph{\tau}$ are exactly the prefix cuts of~$\tau$.
\end{corollary}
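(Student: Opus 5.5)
The plan is to read off the minimum cuts directly from the cut-value formula in \Cref{lemma:base_graph_cut_function}. Fix a nonempty proper set~${X\subsetneq V}$ and write~${m\coloneqq|\lbBoundary{\tau}{X}|}$. I would first note that~${m\geq1}$. The path~$\tau$ visits every vertex, so since~$X$ is nonempty and proper, some consecutive pair~$v_i,v_{i+1}$ has exactly one endpoint in~$X$.

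By \Cref{lemma:base_graph_cut_function}, the cut value has the form~${\cutGraphSet{\lbBaseGraph{\tau}}{X}=m n^3\lbWeightScale+\lbQuadraticForm{\lbEncodingMatrix}{\lbBoundary{\tau}{X}}}$, and the error term satisfies~${|\lbQuadraticForm{\lbEncodingMatrix}{\lbBoundary{\tau}{X}}|<\lbWeightScale/2}$. Since~${n^3\lbWeightScale>\lbWeightScale}$, any set with~${m\geq2}$ has cut value greater than~${2n^3\lbWeightScale-\lbWeightScale/2}$. Any set with~${m=1}$ has cut value less than~${n^3\lbWeightScale+\lbWeightScale/2}$. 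The first quantity exceeds the second, so every cut with~${m=1}$ is strictly smaller than every cut with~${m\geq2}$.

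It then remains to show that all cuts with~${m=1}$ have the same value. For~${m=1}$, say~${\lbBoundary{\tau}{X}=\{b_1\}}$, the quadratic form reduces to~${(-1)^{2}\lbEncodingMatrix_{b_1b_1}=0}$, because~$\lbEncodingMatrix$ has zero diagonal by \Cref{lemma:encoding-perturbed-edges}. Hence every such cut has value exactly~${n^3\lbWeightScale}$. As recorded in \Cref{subsec:path-construction}, the cuts with~${m=1}$ are exactly the prefix cuts. Each prefix~$\lbPrefix{\tau}{i}$ is crossed only by~$\lbPathEdge{\tau}{i}$; conversely, if only one path edge crosses~$X$, then~$X$ or its complement is a prefix.

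Combining these facts, the minimum cut value equals~${n^3\lbWeightScale}$, and this value is attained precisely by the prefix cuts. There is no real obstacle here; the only points to check are that~${m\geq1}$ and that the diagonal term vanishes, and both are immediate.
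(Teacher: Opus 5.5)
Your proposal is correct and follows essentially the same route as the paper. Both arguments read off cut values from \Cref{lemma:base_graph_cut_function}: the zero diagonal of $\lbEncodingMatrix$ gives every prefix cut the value exactly $n^3\lbWeightScale$, and the bound $|\lbQuadraticForm{\lbEncodingMatrix}{\cdot}|<\lbWeightScale/2$ makes every cut with $|\lbBoundary{\tau}{X}|\geq2$ strictly larger. Your explicit check that $|\lbBoundary{\tau}{X}|\geq1$ is a detail the paper leaves implicit, and your comparison with $n^3\lbWeightScale+\lbWeightScale/2$ becomes redundant once you know the exact prefix value, but it is harmless.
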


\begin{proof}
Fix a nonempty proper set~${X\subsetneq V}$.
If~$X$ represents a prefix cut, then~${|\lbBoundary{\tau}{X}|=1}$ and~${\lbQuadraticForm{\lbEncodingMatrix}{\lbBoundary{\tau}{X}}=0}$, so~\Cref{lemma:base_graph_cut_function} gives~${\cutGraphSet{\lbBaseGraph{\tau}}{X}=n^3\lbWeightScale}$.
If~$X$ represents a non-prefix cut, then~${|\lbBoundary{\tau}{X}|\geq2}$, so~\Cref{lemma:base_graph_cut_function} gives~${\cutGraphSet{\lbBaseGraph{\tau}}{X}>2n^3\lbWeightScale-\lbWeightScale/2>n^3\lbWeightScale}$.
 Hence the minimum cuts of~$\lbBaseGraph{\tau}$ are exactly the prefix cuts of~$\tau$.
\end{proof}

\begin{lemma}[Non-prefix cuts with equal cut values in the base graph cross the same perturbed edges; \Cref{lemma:base_graph_properties}\ref{item:base_graph_boundary_intersections}]
\label{lemma:nonprefix_cuts_with_equal_cut_values_cross_the_same_perturbed_edges}
If~${\emptyset\subsetneq X,Y\subsetneq V}$ represent two non-prefix cuts and satisfy~${\cutGraphSet{\lbBaseGraph{\tau}}{X}=\cutGraphSet{\lbBaseGraph{\tau}}{Y}}$, then 
\[
\lbBoundary{\tau}{X}\cap\lbPerturbIndices
\;=\;
\lbBoundary{\tau}{Y}\cap\lbPerturbIndices.
\]
\end{lemma}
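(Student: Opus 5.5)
The plan is to combine the cut formula of \Cref{lemma:base_graph_cut_function} with the encoding property of \Cref{lemma:encoding-perturbed-edges}. Write $\lbBoundarySet\coloneqq\lbBoundary{\tau}{X}$ and $\lbOtherBoundarySet\coloneqq\lbBoundary{\tau}{Y}$. Since $X$ and $Y$ are nonempty proper subsets that represent non-prefix cuts, we have $|\lbBoundarySet|\neq1$ and $|\lbOtherBoundarySet|\neq1$. We also have $|\lbBoundarySet|\geq1$, because a nonempty proper set $X$ must be crossed by some edge of the Hamiltonian path $\tau$. Hence $|\lbBoundarySet|,|\lbOtherBoundarySet|\geq2$, which is exactly the hypothesis required by \Cref{lemma:encoding-perturbed-edges}.

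Next, apply \Cref{lemma:base_graph_cut_function} to both sets:
\[
\cutGraphSet{\lbBaseGraph{\tau}}{X}=|\lbBoundarySet|\,n^3\lbWeightScale+\lbQuadraticForm{\lbEncodingMatrix}{\lbBoundarySet},
\qquad
\cutGraphSet{\lbBaseGraph{\tau}}{Y}=|\lbOtherBoundarySet|\,n^3\lbWeightScale+\lbQuadraticForm{\lbEncodingMatrix}{\lbOtherBoundarySet}.
\]
Both quadratic-form terms have absolute value less than $\lbWeightScale/2$. Equating the two cut values gives
\[
\bigl(|\lbBoundarySet|-|\lbOtherBoundarySet|\bigr)n^3\lbWeightScale=\lbQuadraticForm{\lbEncodingMatrix}{\lbOtherBoundarySet}-\lbQuadraticForm{\lbEncodingMatrix}{\lbBoundarySet}.
\]
The right-hand side has absolute value less than $\lbWeightScale$, while the left-hand side is either $0$ or at least $n^3\lbWeightScale>\lbWeightScale$ in absolute value, since $\lbWeightScale>0$. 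Therefore $|\lbBoundarySet|=|\lbOtherBoundarySet|$, and consequently $\lbQuadraticForm{\lbEncodingMatrix}{\lbBoundarySet}=\lbQuadraticForm{\lbEncodingMatrix}{\lbOtherBoundarySet}$.

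Finally, \Cref{lemma:encoding-perturbed-edges} applied to $\lbBoundarySet$ and $\lbOtherBoundarySet$ gives $\lbBoundarySet\cap\lbPerturbIndices=\lbOtherBoundarySet\cap\lbPerturbIndices$, which is the claim. There is no real obstacle here, since the work was done in the earlier lemmas. The only points that need care are verifying that $|\lbBoundary{\tau}{X}|\geq2$ (this uses nonemptiness and properness of $X$) and the separation-of-scales step that rules out $|\lbBoundarySet|\neq|\lbOtherBoundarySet|$.
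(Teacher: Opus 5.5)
Your proof is correct and takes essentially the same route as the paper. You combine \Cref{lemma:base_graph_cut_function} with the bound $|\psi_M(\cdot)|<K/2$ to get $\psi_M(\partial_\tau(X))=\psi_M(\partial_\tau(Y))$; the paper phrases this step as a congruence modulo $K$, while you use a separation-of-scales argument that also gives $|\partial_\tau(X)|=|\partial_\tau(Y)|$. Both then apply \Cref{lemma:encoding-perturbed-edges} using $|\partial_\tau(X)|,|\partial_\tau(Y)|\ge 2$, and your explicit check that a nonempty proper set is crossed by at least one path edge spells out a detail the paper leaves implicit.
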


\begin{proof} 
Since~${\cutGraphSet{\lbBaseGraph{\tau}}{X}=\cutGraphSet{\lbBaseGraph{\tau}}{Y}}$, it follows by~\Cref{lemma:base_graph_cut_function} that
\[ 
|\lbBoundary{\tau}{X}|n^3\lbWeightScale+\lbQuadraticForm{\lbEncodingMatrix}{\lbBoundary{\tau}{X}}
\;=\;
|\lbBoundary{\tau}{Y}|n^3\lbWeightScale+\lbQuadraticForm{\lbEncodingMatrix}{\lbBoundary{\tau}{Y}},
\]
and therefore 
\begin{align}
    \lbQuadraticForm{\lbEncodingMatrix}{\lbBoundary{\tau}{X}} 
\;\equiv\;
\lbQuadraticForm{\lbEncodingMatrix}{\lbBoundary{\tau}{Y}} \pmod{\lbWeightScale}.
\label{eq:boundary_quadratic_forms_modulo}
\end{align} 
By the upper bound in~\Cref{lemma:base_graph_cut_function},~${\left|\lbQuadraticForm{\lbEncodingMatrix}{\lbBoundary{\tau}{X}}\right|<\lbWeightScale/2}$ and~${\left|\lbQuadraticForm{\lbEncodingMatrix}{\lbBoundary{\tau}{Y}}\right|<\lbWeightScale/2}$. 
Therefore
\[
|\lbQuadraticForm{\lbEncodingMatrix}{\lbBoundary{\tau}{Y}}-\lbQuadraticForm{\lbEncodingMatrix}{\lbBoundary{\tau}{X}}|
\;<\;
\lbWeightScale, 
\]
and it follows by~\eqref{eq:boundary_quadratic_forms_modulo} that~${\lbQuadraticForm{\lbEncodingMatrix}{\lbBoundary{\tau}{Y}}=\lbQuadraticForm{\lbEncodingMatrix}{\lbBoundary{\tau}{X}}}$. 

Also, since~$X$ and~$Y$ represent non-prefix cuts,~${|\lbBoundary{\tau}{X}|,|\lbBoundary{\tau}{Y}|\geq2}$. 
Therefore~$\lbBoundary{\tau}{X}$ and~$\lbBoundary{\tau}{Y}$ satisfy the hypotheses of~\Cref{lemma:encoding-perturbed-edges}, so~${\lbBoundary{\tau}{X}\cap\lbPerturbIndices=\lbBoundary{\tau}{Y}\cap\lbPerturbIndices}$. 
\end{proof}

The result of~\Cref{lemma:base_graph_properties} is now immediate by combining~\Cref{lemma:base_graph_edge_weights,cor:base_graph_minimum_cuts,lemma:nonprefix_cuts_with_equal_cut_values_cross_the_same_perturbed_edges}.

\end{document}